\DeclareRobustCommand{\mklyxadded}[1]{\textcolor{lyxadded}\bgroup#1\egroup}
\DeclareRobustCommand{\mklyxdeleted}[1]{\textcolor{lyxdeleted}\bgroup\mklyxsout{#1}\egroup}
\DeclareRobustCommand{\mklyxsout}[1]{\ifx\\#1\else\sout{#1}\fi}
\theoremstyle{plain}
\newtheorem{assumption}{\protect\assumptionname}
\theoremstyle{definition}
\newtheorem{defn}{\protect\definitionname}[section]
\theoremstyle{plain}
\newtheorem{thm}{\protect\theoremname}[section]
\theoremstyle{plain}
\newtheorem{lem}{\protect\lemmaname}[section]
\theoremstyle{plain}
\newtheorem{cor}{\protect\corollaryname}[section]
\providecommand{\U}[1]{\protect\rule{.1in}{.1in}}
\providecommand{\assumptionname}{Assumption}
\providecommand{\corollaryname}{Corollary}
\providecommand{\definitionname}{Definition}
\providecommand{\lemmaname}{Lemma}
\providecommand{\theoremname}{Theorem}
\begin{document}
\title{Differential Test Performance and Peer Effects\thanks{Ying Zeng gratefully acknowledges financial support from the Young
Scientists Fund of the Fujian Natural Science Foundation (Grant No.
2023J05010)}}
\author{Guido M. Kuersteiner\thanks{Department of Economics, University of Maryland, College Park, MD
20742, e-mail: gkuerste@umd.edu}, Ingmar R. Prucha\thanks{Department of Economics, University of Maryland, College Park, MD
20742, e-mail: prucha@umd.edu}, Ying Zeng\thanks{Wang Yanan Institute for Studies in Economics, Xiamen University,
zengying17@xmu.edu.cn}}
\maketitle
\begin{abstract}
We use variation of test scores measuring closely related skills to
isolate peer effects. The intuition for our identification strategy
is that the difference in closely related scores eliminates factors
common to the performance in either test while retaining idiosyncratic
test specific variation. Common factors include unobserved teacher
and group effects as well as test invariant ability and factors relevant
for peer group formation. Peer effects work through idiosyncratic
shocks which have the interpretation of individual and test specific
ability or effort. We use education production functions as well as
restrictions on the information content of unobserved test taking
ability to formalize our approach. An important implication of our
identifying assumptions is that we do not need to rely on randomized
group assignment. We show that our model restrictions are sufficient
for the formulation of linear and quadratic moment conditions that
identify the peer effects parameter of interest. We use Project STAR
data to empirically measure peer effects in kindergarten through third
grade classes. We find evidence of highly significant peer effects
with magnitudes that are at the lower end of the range of estimates
found in the literature.
\end{abstract}
\newpage{}

\section{Introduction}

\global\long\def\diag{\operatorname{diag}}%
\global\long\def\tr{\operatorname{tr}}%
\global\long\def\Var{\operatorname{Var}}%
\global\long\def\det{\operatorname{det}}%
\global\long\def\Cov{\operatorname{Cov}}%
\global\long\def\argmin{\operatorname{argmin}}%
\global\long\def\plim{\operatorname{plim}}%

We develop a framework to analyze, identify and estimate the effect
of peer groups on performance measures for individuals who are allocated
to groups. Our analysis centers around the idea of measuring the performance
of individuals in closely related tasks and within a short period
of time. An example are aptitude tests administered at the end of
the school year and in related areas such as reading, writing and
word comprehension. By considering the individual quasi-difference
in scores we are able to eliminate unobserved ability and unobserved
group effects. Our key identifying assumption postulates that different
tests measure similar skills and that variation in differential scores
is, apart from variation induced by observed covariates, due to idiosyncratic
cross-sectional variation in ability or effort of individuals. That
variation is not related to prior performance and other systematic
and possibly unobserved factors correlated with group formation such
as teacher quality or parental support.

A leading example where our method can be applied are test scores
of students who are allocated to classrooms. However, the theory could
equally be applied to outcome measures of members of sports teams,
teams of workers or other groups. Identification is based on the availability
of multiple performance measures of comparable quality or information
content. We allow team formation to be endogenous but also cover the
case of randomly selected groups. For expositional purposes we refer
to groups as classrooms and individuals selected into groups as students
with the understanding that our procedure applies to a broader set
of scenarios.

The first result of our paper shows that our key identifying restriction
implies orthogonality conditions that are at the core of our identification
strategy. These restrictions are obtained without any assumptions
about random group selection or independence of individual characteristics
in the population of individuals. Combined with the assumption of
linear latent outcomes and mean group peer effects this leads to an
empirical model that falls within the class of linear peer effects
models that have been prominently studied in the literature, see for
example \citet{manski_identification_1993,calvo-armengol_peer_2009,blume_linear_2015}
and \citet{angrist_perils_2014} for a critique of these models. We
expand on this literature by explicitly accounting for endogenous
peer group selection and individual heterogeneity in unobserved test
taking ability. Most of the empirical literature uses observable characteristics
as well as proxies for unobserved ability to measure the quality of
peer groups. Our baseline model is formulated for unobserved, at least
to the analyst, peer characteristics. We use multiple performance
measures for similar skills to difference out common test taking ability.
We allow group selection to depend on unobserved as well as observed
characteristics. Recent contributions to the econometrics literature
accounting for endogenous group and network formation include \citet{goldsmith-pinkham_social_2013},
\citet{hsieh_social_2016} and \citet{griffith_random_2022} who use
an explicit network formation model, \citet{qu_estimating_2015},
\citet{johnsson_estimation_2019}, \citet{auerbach_identification_2022}
who use control function approaches and \citet{kuersteiner_dynamic_2020}
who use an instrumental variables approach.\footnote{The literature on peer effects is part of a larger literature on network
effects. It is well recognized that the adjacency matrix frequently
used in modeling peer effect is a special case of the weight matrices
used in a class of models introduced by \citet{cliff_spatial_1973a,cliff_spatial_1981a},
which were originally intended for modeling spatial network effects;
see \citet{anselin_thirty_2010} and Lee and Yu (2015)\nocite{Lee2015}
for a review of important contribution of this literature on identification
and estimation of network effects from cross sectional and panel data,
and \citet{kuersteiner_dynamic_2020} for a recent contribution that
connects those strands of network literature.}

We apply our approach to the Project STAR data set of the Tennessee
class size experiment. We exploit variation on closely related test
score outcomes for kindergarten to third grade students. Our empirical
analysis focuses on the identification, estimation and statistical
inference for the parameter determining the marginal effect of unobserved
peer quality. More specifically, this parameter measures the impact
of a unit increase in peer quality, measured in terms of latent peer
outcomes, on individual outcomes. Obtaining data with convincing exogenous
peer group variation is generally difficult and may require costly
experimental designs. Yet, most of the empirical literature estimating
peer group composition effects relies on conditionally randomly assigned
peers. An advantage of the approach proposed in this paper is that
we do not rely on random group assignment. Our framework enables us
to identify the marginal parameter without variation in peer group
composition. The random assignment assumption has been criticized
for Project STAR data, for example because of attrition in higher
grades, see \citet{hanushek_failure_2003}. Our approach is designed
to work despite these data limitations. In addition, and unlike in
related studies identifying similar parameters such as \citet{kelejian_2sls_2002},
\citet{lee_identification_2007}, \citet{graham_identifying_2008}
or \citet{kuersteiner_efficient_2023} we do not rely on group size
variation or variation in group type variances.

Our estimators are related to methods that use panel data to identify
peer effects, see \citet{mas_peers_2009}, \citet{arcidiacono_estimating_2012},
\citet{cornelissen_peer_2017}, \citet{miraldo_identification_2021}
and \citet{braun_estimation_2023}. In contrast with panel methods,
we emphasize applications with cross-sectional data where measures
of closely related skills are observed essentially simultaneously
in time. For one, this framework alleviates problems with sample attrition
which are well documented for Project STAR. More importantly, within
this framework the source of identifying variation is due to remaining
randomness, after conditioning on ability, in the responses of an
individual to test questions rather than due to random selection of
students into classrooms. An advantage of our framework is that its
credibility can be assessed and influenced by a testing protocol that
complies with its basic premise.

There is an extensive empirical literature studying educational outcomes
using Project STAR data. Peer effects specifically were considered
by numerous authors. \citet{boozer_black_2001} use within and between
class variation as well as controls for whether students currently
are or previously were in a small classroom to estimate the parameter
for endogenous peer effects. Variation in the exposure of classmates
to being previously in small classrooms provides variation in peer
quality that is used to identify the endogenous peer effect. They
find large peer effects in second and third grade and negative but
insignificant effects in first grade. \citet{whitmore_resource_2005}
studies the effects of variation in the ratio of girls on test scores
and finds mixed results depending on grades. A decomposition of the
effect into endogenous and exogenous peer effects results in an estimated
increase in own test score of 0.6 points for every point increase
in the average peer test score. \citet{graham_identifying_2008} uses
a model similar to ours and estimates endogenous peer effects based
on differences of the between and within variances in small and regular
size classes. He finds stronger peer effects of being randomly assigned
to smaller as compared to larger classes. \citet{kuersteiner_efficient_2023}
interpret the variance approach of \citet{graham_identifying_2008}
as being part of a class of more general random group effects models,
see also \citet{rose_identification_2017} for an approach similar
to \citet{graham_identifying_2008} using variance restrictions. \citet{chetty_how_2011}
link Project STAR data with 1996-2008 tax records to investigate classroom,
teacher and peer effects on future earnings. Their empirical approach
relies on random assignment to Kindergarten and elementary school
grades which they test using additional individual level data obtained
from tax records. Using analysis of variance and regression based
methods exploiting within classroom variation they find significant
effects of class quality on earnings. In work that looks at a related
question \citet{bietenbeck_longterm_2020} investigates the link between
low ability repeaters in kindergarten classes on educational outcomes
later in life and finds positive effects of being exposed to repeaters
on the probability to graduate from high school and taking a college
entrance exam. \citet{pereda-fernandez_social_2017} relies on a conditional
double randomization assumption resulting in conditional independence
restrictions between student and teacher effects. These restrictions
imply covariance as well as higher order restrictions that are exploited
to estimate both the endogenous peer effect as well as the distribution
of teacher effects. The paper finds sizable peer effects and investigates
optimal teacher and class size allocation rules. \citet{lewbel_social_2023}
also estimate a linear peer effects model for Project STAR data, but
assuming that actual interaction between students is unobserved. They
use restrictions on reduced form parameters to recover the endogenous
peer effect. Empirically, they find large peer effects for second
and third grade math scores.

In our empirical work we use Project STAR data and focus on students
in kindergarten to the third grade. We use individual SAT scores in
reading and word study skills as our outcome measures. We document
the high correlation between the two scores, unconditionally as well
as conditionally on a full set of controls. The correlation coefficients
are about 0.9 for all four grades The peer effects coefficient of
our model is estimated precisely and is stable across the different
grades except for the first grade where we observe a lower degree
of peer effects. Quantitatively, we estimate that an increase of average
unobserved peer quality measured in SAT equivalent scores of 10 SAT
points leads to an increase of 2 to 4 SAT points for an individual
classmate. These measurements are quite robust to the inclusion or
omission of additional controls. When we measure exogenous peer effects,
in other words the effects of average age, race and gender, as well
as an indicator of economic background, we find mostly insignificant
results on differential scores. This lack of significance is further
evidence that our differencing strategy eliminates much of the systematic
factors explaining test score variation. However, it does not mean
that exogenous controls have no effect on the level of individual
scores.

The paper is organized as follows. In Section \ref{Model} we develop
the identification strategy for a model without covariates. Section
\ref{subsec:Latent-Group} defines performance measures and group
selection, Section \ref{subsec:Peer-Effects} introduces the education
production function and peer effects and Section \ref{subsec:Identification}
discusses the key identifying restriction that defines closely related
skills. Section \ref{sec:Model-with-Covariates} extends the specification
to include covariates and introduces the estimators we propose. Section
\ref{sec:Empirical-Results} contains the empirical analysis of peer
effects using Project STAR data. We report a battery of specification
checks in Section \ref{subsec:Plausibility-of-Assumptions}. Our main
empirical findings are reported in Section \ref{subsec:Main_Results}.
Section \ref{sec:Technical} develops formal identification proofs.
Section \ref{sec:Conclusions} contains conclusions. Proofs and tables
are contained in the appendix. An online appendix contains additional
lemmas used in the proofs. In the online appendix we also give results
that formally establish the consistency and asymptotic normality of
the considered estimators, as well as some additional empirical results.

\section{Model\label{Model}}

We assume that for each student $i=1,...,n\text{ }$we observe results
for two different tests, or more generally that we observe two measures
of academic achievement, say, $y_{it}$ where $t=1,2$ indexes the
test. Our empirical work uses Project STAR data where these measures
are scores of the Stanford Achievement Tests (SAT) in word study skills
and reading tests given to kindergarten through third grade students.
Our approach uses two outcome measures to control for unobserved and
possibly correlated student characteristics as well as group level
effects related to classrooms, teachers and schools.

Let $\tilde{y}_{it}^{*}$ be the unobserved ability or effort of student
$i$ taking test $t$. We use $*$ to denote variables that are not
observed by the analyst, but may be observed by the individuals, and
we use a surmount\ $\tilde{}$\  to denote variables in the original
sample before group assignment. The notation $\tilde{y}_{it}^{*}$
emphasizes that the variable refers to the $i$-th student in the
population of students prior to group allocation, and that the index
$i$ is an individual student identifier which is not directly tied
to classrooms, e.g., $i$ may correspond to an alphabetical ordering
of all students by name. There are two steps that link the vector
of latent outcomes $\left(\tilde{y}_{1t}^{*},...,\tilde{y}_{nt}^{*}\right)$
to a vector of observed performance $\left(y_{1t},...,y_{nt}\right)$.
The first step consists of allocating student $i$ in the original
sample to classroom $c$ where student $i$ now receives the in-class
identifier $r.$ For example, $r$ could be the alphabetical rank
of student $i$ in class $c.$ Let $n_{c}$ denote the size of classroom
$c$ and suppose there are $C$ classrooms so that $n=n_{1}+...+n_{C}$.
We then map the $r$-th student in classroom $c$ into a new index,
say, $i'=n_{1}+...+n_{c-1}+r$. In the following we will, abusing
notation slightly, refer, e.g., to the unobserved ability of the $r$-th
student in classroom $c\text{ }$ taking test $t$ either as $y_{crt}^{*}$
or $y_{i't}^{*}$, depending on the context. We will furthermore use
$i$ rather than $i'$ for the new index for convenience of notation.
The second step is an education production function or a more general
group interaction model $\psi$ that relates latent ability $y_{crt}^{*}$
of classmates to actual test performance $y_{crt}$. Actual performance
depends on class level unobserved effects, own latent ability, observed
individual, class and school level covariates, as well as potential
peer interaction. In what follows, we use the term class level effects
for all factors affecting outcomes that are invariant at the classroom
level. Our goal is to isolate the portion of observed performance
that is due to peer effects.

\subsection{Latent Performance and Group Selection\label{subsec:Latent-Group}}

Selection into classrooms can be correlated with a latent baseline
performance measure $\tilde{\zeta}_{i}^{*}$ that can also be thought
of individual test taking ability or latent performance, or effort
prior to taking the observed tests. The purpose of introducing baseline
test taking ability is to account for unobserved student characteristics
that are invariant for the two tests. Our key identifying assumption
of the two tests measuring essentially identical skills detailed in
Section \ref{subsec:Identification} implies that the ability $\tilde{y}_{it}^{*}$
of taking test $t$ can be decomposed into test invariant factors
that include baseline ability $\tilde{\zeta}_{i}^{*}$ and idiosyncratic
errors that are uncorrelated across students and tests. Our method
also accounts for observable student and classroom specific characteristics.
We omit the observed covariates for now to simplify the discussion
of our identification strategy. We discuss a more complete model that
includes covariates in Section \ref{sec:Model-with-Covariates}. For
notational convenience we collect the latent baseline performance
measures for all $n$ students in the sample in the following $n\times1$
vector: $\tilde{\zeta}^{\ast}=\left(\tilde{\zeta}_{1}^{\ast},...,\tilde{\zeta}_{n}^{\ast}\right)'$.
Selection into classrooms can, in addition to depending on $\tilde{\zeta}^{*}$
and observed individual characteristics, depend on observed class
level characteristics as well as on a vector $\alpha=\left(\alpha_{1},...,\alpha_{C}\right)'$
of unobserved class level characteristics. We do not impose any restrictions
on the cross-sectional dependence in $\tilde{\zeta}^{\ast}$, $\alpha$
or observed individual and class level characteristics. Latent baseline
performance of student $i$ may be correlated with latent performance
of student $j$ for a variety of reasons including similar educational,
socioeconomic, cultural, religious or geographic backgrounds that
may or may not be known to the analyst.

We now describe the selection of students into classrooms. This assignment
process results in a reordering of the students characterized by a
one-to-one mapping of the original index set to the new index set.
Let $\tilde{y}_{t}^{*}=\left(\tilde{y}_{1t}^{*},...,\tilde{y}_{nt}^{*}\right)'$
denote the vector of unobserved test-taking ability of students taking
test $t$. The vector $\tilde{y}_{t}^{*}$ is indexed corresponding
to the the original sample or population. Let $S_{c}$ be the $n_{c}\times n\text{ }$
selection matrix that allocates students to classroom $c$, and let
$S=\left[S_{1}',...,S_{C}'\right]'$ be the selector matrix for all
students. The set of all possible classroom allocations is denoted
by $\mathscr{S}$. A random assignment mechanism can be thought of
as selecting one element of $\mathscr{S}$ at random. Corresponding
to $S$ the vector $y_{t}^{*}=\left(y_{1t}^{*},\ldots,y_{nt}^{*}\right)^{\prime}$
representing the sample or population of unobserved ability for test
$t$ of students ordered by classrooms is given by 
\[
y_{t}^{*}=S\tilde{y}_{t}^{*}.
\]
As remarked above, in slight but obvious abuse of notation, let $i=n_{1}+...+n_{c-1}+r$.
Then we denote the $i$-th student's unobserved ability for test $t$
interchangeably by either $y_{it}^{*}$ or $y_{crt}^{*}$. The latter
indexing convention, which we also employ analogously for other variables,
is convenient when an analysis at the classroom level is required.
Now let $y_{ct}^{*}=\left(y_{c1t}^{*},...,y_{cn_{c}t}^{*}\right)',$
then 
\[
y_{ct}^{*}=S_{c}\tilde{y}_{t}^{*}.
\]
The latent performance vector $\zeta^{*}$ is defined analogously.

We allow for $S$, $\tilde{\zeta}^{\ast}$, $\tilde{y}_{t}^{*}$,
for $t=1,2$, and $\alpha$ to be mutually correlated and also to
be correlated with observed individual and class level characteristics
for the generalized model presented in Section \ref{sec:Model-with-Covariates}.
This assumption includes scenarios where $S$ is selected completely
at random, scenarios where $S$ is selected based on student and class
level characteristics, as well as scenarios where students respond
to the allocation $S$. Our assumptions, formally spelled out in Section
\ref{subsec:Identification}, require that selection happens before
testing which is a mild restriction that should be satisfied in most
scenarios we have in mind. We do not otherwise specify or restrict
the mechanism $S$ that selects students into classrooms. We also
do not require that $S$ is observed. This is relevant in situations
where we do not have information identifying individual students such
as through their names, social security numbers or residential address.
On the other hand we do require that we know who, among all test takers,
is allocated to the same classroom. For project STAR data, this information
is available for the majority of students in grades K through three
through a class identifier\footnote{The identifier is a teacher ID that is unique for each classroom.},
but not for students in higher grades.

\subsection{Peer Effects\label{subsec:Peer-Effects}}

We relate educational outputs to observed and unobserved inputs via
an educational production function denoted by $\psi$. Educational
production functions were considered by \citet{krueger_experimental_1999}
and \citet{todd_specification_2003} among others. Our specification
of the educational production function explicitly allows for peer
effects and models observed performance $y_{crt}$ of student $r$
in classroom $c$ as a function of latent ability or effort $y_{c1t}^{*},\ldots,y_{cn_{c}t}^{*}$
of all students in class $c$. The assumptions we impose on $y_{crt}^{*}$
allow it to result from a class assignment process that depends on
both observed and unobserved student and class level characteristics.

We interpret $y_{crt}^{*}$ as the unobserved test taking ability
in the absence of peer effects, but accounting for possible selection
of student $r$ into classroom $c,$ taking test $t.$ The function
$\psi_{crt}\left(.\right)$ accounts for the fact that test performance
varies by student and thus depends on $r$ as well as additional student,
teacher, classroom and school characteristics not captured by $y_{crt}^{*}$.
Observed performance is determined by $y_{crt}=\psi_{crt}\left(y_{c1t}^{*},\ldots,y_{cn_{c}t}^{*}\right)$.
The formulation of $\psi_{crt}\left(.\right)$ explicitly accounts
for the possibility of peer effects by allowing for individual performance
to depend on the characteristics of all peers. In the absence of peer
effects the function $\psi_{crt}\left(.\right)$ simplifies to $y_{crt}=\psi_{crt}\left(y_{crt}^{*}\right).$
In other words, the performance of the $r$-th student in class $c$
only depends on own characteristics and, for the generalized model
presented below, on class characteristics such as teachers and resources
but not on the characteristics of other students in the class.

At the class level let $y_{ct}=\left(y_{c1t},\ldots,y_{cn_{c}t}\right)^{\prime}$,
$y_{ct}^{*}=\left(y_{c1t}^{*},\ldots,y_{cn_{c}t}^{*}\right)'$, and
let $\psi_{ct}=\left(\psi_{c1t},\ldots,\psi_{cn_{c}t}\right)^{\prime}$.
We postulate the following education production function, which accounts
for possible peer effects.
\begin{assumption}
\label{assu:Production-Function}Assume that 
\begin{equation}
y_{ct}=\psi_{ct}\left(y_{ct}^{*}\right)=\alpha_{c}\mathbf{1}_{c}+\varphi_{c}+\left(I_{c}+\rho M_{c}\right)y_{ct}^{*},\label{eq:Ed_ProdF}
\end{equation}
where $\mathbf{1}_{c}=\left(1,....,1\right)^{\prime}$ is an $n_{c}\times1$
vector and $M_{c}$ is some matrix of dimension $n_{c}\times n_{c}$
with zero diagonal elements and $\rho$ is a fixed parameter. The
vector $\varphi_{c}=\left(\varphi_{c1},...,\varphi_{cn_{c}}\right)^{\prime}$
contains measurement errors.
\end{assumption}
The part of the expression involving $M_{c}$ in \eqref{eq:Ed_ProdF}
models peer effects. The vector $M_{c}y_{ct}^{*}$ contains weighted
averages of the latent performance of classroom peers, and the parameter
$\rho$ specifies the degree to which peer effects influence actual
outcomes. \footnote{In the usual terminology of Cliff-Ord (1973,1981) models, $M_{c}y_{ct}^{*}$
is called a spatial lag and $\rho$ the corresponding parameter.}

Peer effects specifications such as the one in Assumption \ref{assu:Production-Function}
have been motivated as solutions to Nash games in the literature,
see \citet{calvo-armengol_peer_2009}, \citet{blume_linear_2015}
or \citet{pereda-fernandez_social_2017}. Our model allows for measurement
errors $\varphi_{c}$ affecting test scores. Consistent with our assumption
that tests $t=1$ and $t=2$ measure essentially identical skills,
measurement errors are assumed to be test invariant. We do not impose
any restrictions on whether $\varphi_{c}$ is correlated with $y_{ct}^{*}$
or whether $\varphi_{c}$ is cross-sectionally correlated. The model
in \eqref{eq:Ed_ProdF} is equivalent to $y_{ct}=\alpha_{c}\mathbf{1}_{c}+\left(I_{c}+\rho M_{c}\right)\left(y_{ct}^{*}+\varphi_{c}\right)$
where the measurement errors are in terms of ability $y_{ct}^{*}$
rather than outcomes $y_{ct}.$

In line with the spatial and social network literature building on
the models introduced by Cliff-Ord (1973,1981), the matrix $M_{c}$
in Assumption \ref{assu:Production-Function} could be an arbitrary
weight matrix describing peer interaction subject to certain measurability
assumptions specified below. For exposition and concreteness we focus
on the case where average characteristics of peers determine peer
effects. Define $M_{c}=\left(\mathbf{1}_{c}\mathbf{1}_{c}^{\prime}-I_{c}\right)/\left(n_{c}-1\right)$
where $I_{c}$ is the $n_{c}\times n_{c}$ identity matrix. Then,
$M_{c}$ is the operator that computes the leave out average of peer
characteristics. Combined with the functional form in Assumption \ref{assu:Production-Function}
this leads, using scalar notation, to an education production function
$y_{crt}=\psi_{crt}\left(y_{ct}^{*}\right)$ for observed outcomes
in terms of unobservables as given below ($t=1,2$) 
\begin{equation}
y_{crt}=\alpha_{c}+\varphi_{cr}+y_{crt}^{*}+\rho\left[(n_{c}-1)^{-1}\sum_{l=1,l\neq r}^{n_{c}}y_{clt}^{*}\right].\label{eq:Granham}
\end{equation}

\subsection{Closely Related Skills\label{subsec:Identification}}

In this paper we propose an identification strategy for the parameter
$\rho$ that does not require randomization over $\mathscr{S}$ or
variation in $S_{c}.$ Our argument proceeds in two steps. First,
we propose moment restrictions on unobserved test taking ability $y_{it}^{*}$
and show that these restrictions imply linear and quadratic moment
restrictions for quantities we are able to observe in the data. We
then show that $\rho$ can be identified by considering a quasi-differencing
transformation of \eqref{eq:Granham} and a GMM estimator that exploits
the restrictions on the conditional mean and variance implied by our
identifying assumption. The asymptotic theory for a general class
of GMM estimators of this type was developed by \citet{kuersteiner_dynamic_2020}.
Here, we rely on their theory to derive sharp identification results
and inference procedures that are specific to this application.

We formalize the information structure of our model in the following
definition. Recall that $\zeta^{*}=\left(\zeta_{1}^{*},\ldots,\zeta_{n}^{*}\right)^{\prime}=S\tilde{\zeta}^{\ast}$,
which stands for unobserved individual characteristics, latent test
performance, effort or ability prior to taking tests $t=1,2.$ We
use the notation $-t$ in the definition below to denote the other
test. We define the following:
\begin{defn}
\label{def:Filtration-no-cov}Let $\varphi=\left(\varphi_{1}',...,\varphi_{C}'\right)'$,
$y_{-it}^{*}=\left(y_{1t}^{*},...,y_{i-1,t}^{*},y_{i+1,t}^{*},...,y_{n,t}^{*}\right)'$
and $y_{t}^{*}=\left(y_{1t}^{*},...,y_{nt}^{*}\right)^{\prime}$.
Define the sigma fields (information sets) $\mathcal{F}_{n,i,t}=\sigma\left(S,\alpha,\varphi,\zeta^{\ast},y_{-t}^{\ast},y_{-i,t}^{*}\right)\text{\,for \ensuremath{t=1,2}}$.
Let $\mathcal{Z}_{n}^{*}=\sigma\left(S,\alpha,\varphi,\zeta^{\ast}\right)$
be the sigma field of all conditioning variables.
\end{defn}
The information set $\mathcal{Z}_{n}^{*}$ consists of classroom allocations
$S,$ as well as of unobserved variables $\zeta^{\ast}$, $\alpha$
and $\varphi.$ The following restriction on the conditional means
of the latent outcomes is at the core of our proposed method of identifying
peer effects. We note that the assumption maintains implicitly that
the two test scores are measured on the same scale.
\begin{assumption}
\label{assu:Proportional}Assume that $E\left[y_{it}^{*}|\mathcal{F}_{n,i,t}\right]=\kappa_{i}$
where $\kappa_{i}$ is a random variable that is invariant to $t$
and is measurable with respect to $\mathcal{Z}_{n}^{*}$ . In addition,
assume that the random variable $y_{it}^{*}$ is either bounded, or
there exists an $\eta>0$ and a random variable $y$ such that $\left|y_{it}^{*}\right|+\left|\kappa_{i}\right|\leq y$
with $E\left[\left|y\right|^{4+\eta}|\mathcal{Z}_{n}^{*}\right]\leq K_{y}<\infty$
for all $i$ and $t.$ In addition, $E\left[\left|\alpha_{c}\right|^{2+\eta}\right]\leq K_{\alpha}<\infty$
and $E\left[\left|\varphi_{cr}\right|^{2+\eta}\right]\leq K_{\varphi}<\infty$
for all $c$ and $r$.
\end{assumption}
The interpretation of the condition is that conditional on a hypothetical
or actual baseline of test results $\zeta^{*}$, as well as information
about group formation $S$, unobserved class level characteristics
$\alpha$, as well as measurement errors $\varphi$ , additional test
results do not change expected performance. A second implication of
Assumption \ref{assu:Proportional} is that expected performance does
not depend on the specific test $t.$ In other words, $\kappa_{i}$
only varies by individual $i$ but not by test $t.$ It is in this
sense that tests $t=1$ and $t=2$ measure essentially identical skills.
Note that $E\left[y_{it}^{*}|\mathcal{Z}_{n}^{*}\right]=\kappa_{i}$
by iterated expectations and $\mathcal{Z}_{n}^{*}\subset\mathcal{F}_{n,i,t}.$
If $y_{it}^{*}$ is correlated with measurement errors $\varphi_{c}$
then this dependence is absorbed into $\kappa_{i}$. The restriction
Assumption \ref{assu:Proportional} imposes on the relationship between
measurement error and $y_{it}^{*}$ is that any type of mean dependence
is common to both tests $t$. A special case covered by Assumption
\ref{assu:Proportional} is that $y_{it}^{*}$ may be independent
of measurement error.

Relative to a baseline represented by $\mathcal{Z}_{n}^{*}$ the latent
test results $y_{it}^{*}$ are in essence as good as randomly assigned.
Conditional on observed covariates as well as the information in $\mathcal{Z}_{n}^{*}$,
actual outcomes $y_{it}$ which are determined by the educational
production function in \eqref{eq:Ed_ProdF}, then are only correlated
in the cross-section because of peer effects, i.e. when $\rho\neq0.$

The variable $\kappa_{i}$ is a function of the entire cross section
of latent performance measures $\zeta^{*}$, of the allocation $S$
and unobserved characteristics that include classroom effects and
measurement errors $\varphi$. It is generally cross-sectionally dependent
in ways we do not restrict or specify and captures such common effects
as the socioeconomic background of students and classmates, their
family background including parental education and support, their
interaction with peers, the exposure to their teachers and resources
available in their classroom. It also depends on a student's own ability.
The restriction we impose is that these factors do not change between
tests $t=1$ and $t=2$.

Baseline performance $\zeta^{*}$ does not necessarily have to be
observed or realized and $\zeta^{*}$ can alternatively be interpreted
as latent ability prior to taking the actual tests $t=1,2$. In this
sense, $\kappa_{i}$ and $E\left[y_{it}^{*}|\mathcal{Z}_{n}^{*}\right]$
are also unobserved. A scenario where Assumption \ref{assu:Proportional}
is realistic arises when the same type of test is taken multiple times
without additional training between iterations of the test, or when
several tests are given that focus on related skills, as is the case
in Project STAR data for grades K-3 and tests for reading, and word
study skills.\footnote{Another example of a test with multiple scores is the Comprehensive
Testing Program (CTP) administered by the Educational Records Bureau
and given to Grades 1-11. The CTP test consists of several main categories
such as ``Auditory Comprehension'', ``Reading Comprehension''
and ``Mathematics'' with each category consisting of additional
subcategories that each receive separate scores.} The assumption is less plausible if comparisons are attempted across
different subjects in a high school or college setting, for tests
given at different times during the school year, or measuring different
skills in a professional or team setting.

The following theorem formally establishes a decomposition of $y_{it}^{*}$
into its conditional mean, $E\left[y_{it}^{*}|\mathcal{F}_{n,i,t}\right]$,
that is dependent cross-sectionally as well as across tests, and into
uncorrelated idiosyncratic noise $u_{it}$.
\begin{thm}
\label{thm:decomp} Suppose Assumption \ref{assu:Proportional} holds.
Define $u_{it}=y_{it}^{*}-E\left[y_{it}^{*}|\mathcal{F}_{n,i,t}\right]=y_{it}^{*}-\kappa_{i},$
then by construction $E\left[u_{it}|\mathcal{F}_{n,i,t}\right]=0$
and $\Cov\left(u_{it},u_{js}|\mathcal{F}_{n,i,t}\right)=0$ for any
$j\neq i$ or $t\neq s$. Furthermore, let $\mathcal{G}_{n,i,t}$
be any sigma field with $\mathcal{G}_{n,i,t}\subseteq\mathcal{F}_{n,i,t}$,
then $E\left[u_{it}|\mathcal{G}_{n,i,t}\right]=0$ and $\Cov\left(u_{it},u_{js}|\mathcal{G}_{n,i,t}\right)=0$
for any $j\neq i$ or $t\neq s$, and $\Cov\left(u_{it},\kappa_{i}\right)=0$
etc. If $E\left[y_{it}^{*}-\kappa_{i}|\mathcal{G}_{n,i,t}\right]\neq0$
for some $\mathcal{G}_{n,i,t}$, then Assumption \ref{assu:Proportional}
cannot hold.
\end{thm}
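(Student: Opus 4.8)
The plan is to exploit the fact that the conditioning field $\mathcal{F}_{n,i,t}=\sigma(S,\alpha,X,z,\zeta^{\ast},y_{-t}^{\ast},y_{-i,t}^{*})$ has been constructed precisely so that every latent outcome other than $y_{it}^{*}$ itself — as well as $\kappa_{j}$, $X$, $z$, $\zeta^{*}$, $\alpha$ — is measurable with respect to it, and then to turn the crank with the definition of conditional expectation and the tower property. First I would dispose of the ``by construction'' claims. Since Assumption \ref{assu:Proportional} gives $E[y_{it}^{*}\mid\mathcal{F}_{n,i,t}]=\kappa_{i}f_{t}$, and $\kappa_{i}$ is $\mathcal{Z}_{n}^{*}$-measurable with $\mathcal{Z}_{n}^{*}\subseteq\mathcal{F}_{n,i,t}$ while $f_{t}$ is constant, the difference $u_{it}=y_{it}^{*}-\kappa_{i}f_{t}$ has $E[u_{it}\mid\mathcal{F}_{n,i,t}]=0$. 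The moment bound in Assumption \ref{assu:Proportional} (with $|y_{it}^{*}|+|\kappa_{i}f_{t}|\le y$ and $E[|y|^{4+\eta}\mid\mathcal{Z}_{n}^{*}]\le K_{y}$) places $u_{it}$ in $L^{4+\eta}$, so that the conditional expectations below exist, products such as $u_{it}u_{js}$ are integrable by Cauchy--Schwarz, and pulling a measurable factor out of a conditional expectation is legitimate.

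The substantive step is the measurability observation. Fix $(j,s)$ with $j\neq i$ or $s\neq t$. If $s\neq t$, then $s$ is the ``other test'' $-t$ and $y_{js}^{*}$ is a coordinate of the vector $y_{-t}^{*}$; if $s=t$ but $j\neq i$, then $y_{jt}^{*}$ is a coordinate of $y_{-i,t}^{*}$. Either way $y_{js}^{*}$, and hence $u_{js}=y_{js}^{*}-\kappa_{j}f_{s}$ (using that $\kappa_{j}$ is $\mathcal{Z}_{n}^{*}$-measurable and $f_{s}$ constant), is $\mathcal{F}_{n,i,t}$-measurable. Therefore $E[u_{it}u_{js}\mid\mathcal{F}_{n,i,t}]=u_{js}E[u_{it}\mid\mathcal{F}_{n,i,t}]=0$, and combined with $E[u_{it}\mid\mathcal{F}_{n,i,t}]=0$ this also yields $\Cov(u_{it},u_{js}\mid\mathcal{F}_{n,i,t})=0$.

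For a sub-field $\mathcal{G}_{n,i,t}\subseteq\mathcal{F}_{n,i,t}$ I would apply the tower property: $E[u_{it}\mid\mathcal{G}_{n,i,t}]=E[E[u_{it}\mid\mathcal{F}_{n,i,t}]\mid\mathcal{G}_{n,i,t}]=0$, and likewise $E[u_{it}u_{js}\mid\mathcal{G}_{n,i,t}]=0$, so $\Cov(u_{it},u_{js}\mid\mathcal{G}_{n,i,t})=E[u_{it}u_{js}\mid\mathcal{G}_{n,i,t}]-E[u_{it}\mid\mathcal{G}_{n,i,t}]E[u_{js}\mid\mathcal{G}_{n,i,t}]=0$. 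Specializing $\mathcal{G}_{n,i,t}$ to the trivial field gives $E[u_{it}]=0$; and since $\mathcal{Z}_{n}\subseteq\mathcal{Z}_{n}^{*}\subseteq\mathcal{F}_{n,i,t}$, the variables $\kappa_{i}f_{t}$, $X$, $z$ (and, more generally, any integrable $\mathcal{Z}_{n}^{*}$-measurable $g$) are $\mathcal{F}_{n,i,t}$-measurable, so the same pull-out argument gives $E[u_{it}g]=E[gE[u_{it}\mid\mathcal{F}_{n,i,t}]]=0$, i.e. $\Cov(u_{it},\kappa_{i}f_{t})=\Cov(u_{it},X)=\Cov(u_{it},z)=0$, etc.

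Finally, the closing sentence is just the contrapositive of what has been established: every displayed identity followed from Assumption \ref{assu:Proportional} alone, so if $E[y_{it}^{*}-\kappa_{i}f_{t}\mid\mathcal{G}_{n,i,t}]\neq0$ for some $\mathcal{G}_{n,i,t}\subseteq\mathcal{F}_{n,i,t}$, then Assumption \ref{assu:Proportional} cannot hold. I do not anticipate a genuine obstacle here; the only point requiring care is the coordinate bookkeeping that identifies $u_{js}$ as $\mathcal{F}_{n,i,t}$-measurable in each of the two cases ($s\neq t$ versus $j\neq i$, same test), together with invoking the $L^{4+\eta}$ bound so that the interchange of conditional expectation with multiplication by the measurable factor $u_{js}$, and the existence of all the covariances, is justified.
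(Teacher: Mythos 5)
Your proposal is correct and follows essentially the same route as the paper's proof: both rest on observing that $u_{js}$ is $\mathcal{F}_{n,i,t}$-measurable whenever $j\neq i$ or $s\neq t$ (because $y_{js}^{*}$ is a coordinate of $y_{-t}^{*}$ or $y_{-i,t}^{*}$ and $\kappa_{j}f_{s}$ is $\mathcal{Z}_{n}^{*}$-measurable), then pulling it out of the conditional expectation and using iterated expectations for sub-fields and the unconditional covariances, with the final claim obtained by contraposition. The only difference is cosmetic: the paper also records, en route, the sigma-field identity $\mathcal{B}_{n,i,t}=\mathcal{F}_{n,i,t}$ for later use, which your argument bypasses without loss.
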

Examples for the information sets $\mathcal{G}_{n,i,t}$ in the above
theorem are $\mathcal{G}_{n,i,t}=\sigma\left(S,\alpha,\varphi,\zeta^{*},y_{-t}^{*},y_{-i,t}^{*s}\right)$
where $y_{-i,t}^{*s}$ is a subvector of $y_{-i,t}^{*}$ or $\mathcal{G}_{n,i,t}=\mathcal{Z}_{n}^{*}.$
The above theorem makes clear that the main restriction we impose
on $y_{it}^{*}$ is the assumption that latent test performance is
not predictable conditional on observable characteristics and unobserved
baseline ability. The properties $E\left[u_{it}|\mathcal{Z}_{n}^{*}\right]=0,$
$\Cov\left(u_{it},u_{js}|\mathcal{Z}_{n}^{*}\right)=0$ for any $j\neq i$
or $t\neq s$ and $\Cov\left(u_{it},\kappa_{i}\right)=0$ follow directly
from Assumption \ref{assu:Proportional} and are not additional assumptions
imposed on the distribution of $y_{it}^{*}.$ In other words, Assumption
\ref{assu:Proportional} is necessary and sufficient for the decomposition
of $y_{it}^{*}=\kappa_{i}+u_{it}$ with uncorrelated errors $u_{it}$.
Note that $M_{c}$ is implicitly a function of class assignment $S$
and thus orthogonal to $u_{it}.$

We do not attach a specific economic or behavioral interpretation
to the decomposition of $y_{it}^{*}$ into $\kappa_{i}$ and $u_{it}$.
It is merely a statistical representation for the purpose of isolating
variation in test scores that is correlated within groups but invariant
across tests and variation that is idiosyncratic to the individual
student and test. Let $\kappa_{c}=\left(\kappa_{c1},\ldots.\kappa_{cn_{c}}\right)'$,
$u_{ct}=(u_{c1t},\ldots,u_{cn_{c}t})^{\prime}$ and combine Assumptions
\ref{assu:Production-Function} and \ref{assu:Proportional} such
that 
\begin{equation}
y_{ct}=\mu_{c}^{*}+\left(I_{c}+\rho M_{c}\right)u_{ct}\label{eq:Graham-v1}
\end{equation}
with the unobserved class and individual effect defined as 
\begin{equation}
\mu_{c}^{\ast}=\alpha_{c}\mathbf{1}_{c}+\varphi_{c}+\left(I_{c}+\rho M_{c}\right)\kappa_{c}.\label{eq:mu_c}
\end{equation}
 By differencing \eqref{eq:Graham-v1}, which eliminates $\mu_{c}^{\ast}$,
we obtain
\begin{eqnarray}
y_{c1}-y_{c2} & = & \left(I_{c}+\rho M_{c}\right)(u_{c1}-u_{c2}).\label{eq:Graham-2v}
\end{eqnarray}
As seen from the above, differential measures of test scores purge
performance measures from common unobserved influences captured by
$\mu_{c}^{*}$ while still maintaining relevant information that can
shed light on the existence of peer effects. Let $y_{t}=\left(y_{1t}^{\prime},\ldots,y_{Ct}^{\prime}\right)'$
and $u_{t}=\left(u_{1t}^{\prime},\ldots,u_{Ct}^{\prime}\right)'$
for $t=1,2$. In similar fashion define $M=\diag_{c=1}^{C}\left(M_{c}\right)$.
Then we can write \eqref{eq:Graham-2v} for the entire sample more
compactly as 
\begin{equation}
y_{1}=y_{2}+(I+\rho M)(u_{1}-u_{2}).\label{eq:lin_IV-1}
\end{equation}

Our framework is in contrast with much of the econometrics literature
on the identification of peer effects where assumptions about the
orthogonality of $u_{it}$ and $\kappa_{i}$ and covariates are usually
imposed directly. These unobservable quantities often are given economic
interpretations. Examples of papers with such assumptions include
\citet{lee_identification_2007} who assumes i.i.d. errors, \citet{graham_identifying_2008}
who assumes independent classroom and idiosyncratic errors conditional
on class type, \citet{arcidiacono_estimating_2012} who assume uncorrelated
idiosyncratic errors and fixed effects that are orthogonal to idiosyncratic
errors, \citet{rose_identification_2017} who assumes uncorrelated
idiosyncratic errors, \citet{kuersteiner_efficient_2023} who assume
a random effects specification and \citet{lewbel_social_2023} who
assume exogeneity of group formation. Often these assumptions are
justified by random group selection. In our setting we neither assume
that individuals are randomly assigned to groups, nor that they are
randomly selected from a population or that they have otherwise observed
or unobserved characteristics that are independent in the cross-section.

Let $\kappa=\left(\kappa_{1},\ldots,\kappa_{n}\right)'$ and $\tilde{\kappa}=\left(\tilde{\kappa}_{1},\ldots,\tilde{\kappa}_{n}\right)'=S^{-1}\kappa$,
then it is of interest to note that our theory allows for the variables
$\tilde{\kappa}_{i}$ to be arbitrarily influential on the process
that forms groups. An example of such correlation potentially arises
in the Project Star data. Despite random allocation of students and
teachers to classrooms in kindergarten it is possible that parents
who are unhappy with their allocation try and succeed to move their
child to a different class or school, or that those parents provide
additional training for their child. In such a scenario individual
and classroom specific effects $\tilde{\kappa}_{i}$ are not independent
of classroom allocations.

\section{Estimation with Covariates\label{sec:Model-with-Covariates}}

\subsection{Model with Covariates}

We extend the model in Section \ref{Model} by adding observable characteristics
and by specifying the test average scores in a more flexible way.
We collect \textcolor{black}{test invariant and test varying student
level characteristics} for students in class $c$ in the matrices
$v_{c}^{p}$ and $w_{ct}^{p},t=1,2$ \textcolor{black}{with row dimension
}$n_{c}$. We also collect \textcolor{black}{observed test invariant
and test varying class level characteristics for class $c$ which
are not related to student characteristics in the matrices $v_{c}^{c}$
and }$w_{ct}^{c},t=1,2$\textcolor{black}{, with row dimension }$n_{c}$\textcolor{black}{.
Class level characteristics may include class size, the gender, education
and experience of the teacher for subject $t$, etc.} In the following
let \textcolor{black}{$v^{p}=\left(v_{1}^{p\prime},\ldots,v_{C}^{p\prime}\right)^{\prime}$,
$w_{t}^{p}=\left(w_{1t}^{p\prime},\ldots,w_{Ct}^{p\prime}\right)^{\prime}$,
$v^{c}=\left(v_{1}^{c\prime},\ldots,v_{C}^{c\prime}\right)^{\prime}$,
$w_{t}^{c}=\left(w_{1t}^{c\prime},\ldots,w_{Ct}^{c\prime}\right)^{\prime}$,
$t=1,2,$ and let }$X=\left(v^{p},w_{1}^{p},w_{2}^{p},v^{c},w_{1}^{c},w_{2}^{c}\right)$\textcolor{black}{{}
denote the matrix of observations on all covariates.} Below is an
extended version of Assumption \ref{assu:Production-Function}.
\begin{assumption}
\label{assu:Potential_Outcomes-cov}Let $X_{ct}^{c}=\left(v_{c}^{c},w_{ct}^{c}\right)$,
$\beta_{t}^{c}=\left(\beta_{vt}^{c\prime},\beta_{wt}^{c\prime}\right)^{\prime}$,
$X_{ct}^{p}=\left(v_{c}^{p},w_{ct}^{p}\right)$, $\beta_{t}^{p}=\left(\beta_{vt}^{p\prime},\beta_{wt}^{p\prime}\right)'$.
Assume that 
\begin{equation}
y_{ct}=\psi_{ct}\left(y_{ct}^{*}\right)=\alpha_{c}\mathbf{1}_{c}+\varphi_{c}+X_{ct}^{c}\beta_{t}^{c}+\left(I_{c}+\rho M_{c}\right)\left(X_{ct}^{p}\beta_{t}^{p}+y_{ct}^{*}\right),\label{eq:Ed_ProdF-cov}
\end{equation}
where $\mathbf{1}_{c}=\left(1,....,1\right)^{\prime}$ is an $n_{c}\times1$
vector and $M_{c}$ is some matrix of dimension $n_{c}\times n_{c}$
with zero diagonal elements and $\rho$ is a fixed parameter.
\end{assumption}
The vector $M_{c}X_{ct}^{p}\beta_{t}^{p}$ represents contextual peer
effects. We update the filtrations given in Definition \ref{def:Filtration-no-cov}
to include observable covariates.
\begin{defn}
\label{def:Filtration}Define the sigma fields $\mathcal{F}_{n,i,t}=\sigma\left(S,\alpha,\varphi,X,z,\zeta^{\ast},y_{-t}^{\ast},y_{-i,t}^{*}\right)\text{\,for \ensuremath{t=1,2,}}$
where $z$ denotes variables excluded from $X$ that may be used as
instruments. Let $\mathcal{Z}_{n}^{*}=\sigma\left(S,\alpha,\varphi,X,z,\zeta^{\ast}\right)$
be the sigma field of all conditioning variables and let $\mathcal{Z}_{n}=\sigma\left(X,z\right)\subset\mathcal{Z}_{n}^{*}$
be the subset of observable information.
\end{defn}
We maintain Assumption \ref{assu:Proportional} without change, but
using the expanded versions of $\mathcal{F}_{n,i,t}$, $\mathcal{Z}_{n}^{*}$
and $\mathcal{Z}_{n}$ in Definition \ref{def:Filtration}. Then,
under Assumptions \ref{assu:Proportional} and \ref{assu:Potential_Outcomes-cov}
the conclusions of Theorem \ref{thm:decomp} continue to hold. Furthermore,
we have $\Cov\left(u_{it},X\right)=0$ and $\Cov(u_{it},z)=0$. We
note that Assumptions \ref{assu:Proportional}, \ref{assu:Potential_Outcomes-cov}
and Theorem \ref{thm:decomp} imply the peer effects model with covariates
for test score outcomes 
\begin{eqnarray}
y_{ct} & =\mu_{c}^{*} & +v_{c}^{c}\beta_{vt}^{c}+w_{ct}^{c}\beta_{wt}^{c}+\left(I_{c}+\rho M_{c}\right)(v_{c}^{p}\beta_{vt}^{p}+w_{ct}^{p}\beta_{wt}^{p}+u_{ct}).\label{eq:Graham-v3}
\end{eqnarray}
We note that \eqref{eq:Graham-v3} imposes the restriction that average
residualized test scores for both tests $t=1$ and $t=2$ are the
same. This is in line with our core identification strategy of the
two tests testing essentially identical skills.

\subsection{\label{subsec:Empirical-Specification}Empirical Specification}

Our empirical model is a slight generalization of $\eqref{eq:Graham-v3}$
in the sense that we are parameterizing possible differences in test
score averages. The model we estimate in the data has the following
form
\begin{equation}
y_{ct}=\mu_{c}^{*}f_{t}+v_{c}^{c}\beta_{vt}^{c}+w_{ct}^{c}\beta_{wt}^{c}+\left(I_{c}+\rho M_{c}\right)(v_{c}^{p}\beta_{vt}^{p}+w_{ct}^{p}\beta_{wt}^{p}+u_{ct}),\label{eq:Graham-v3e}
\end{equation}
and where our assumptions imply that $f_{t}=1$ for $t=1,2.$ Empirically
we treat $f_{1}$ as a free parameter to be estimated, while we maintain
$f_{2}=1$ as a normalization. An estimated value of $f_{1}$ that
differs from $f_{1}=1$ is an indication, although not a rigorous
specification test, that Assumption $\ref{assu:Proportional}$ is
violated in the data.

By quasi-differencing \eqref{eq:Graham-v3e}, which eliminates $\mu_{c}^{\ast}$,
we obtain
\begin{eqnarray}
y_{c1}-y_{c2}f_{1} & = & v_{c}^{c}(\beta_{v1}^{c}-f_{1}\beta_{v2}^{c})+w_{c1}^{c}\beta_{w1}^{c}-f_{1}w_{c2}^{c}\beta_{w2}^{c}\label{eq:Graham-2v-1}\\
 & + & \left(I_{c}+\rho M_{c}\right)\left(v_{c}^{p}(\beta_{v1}^{p}-f_{1}\beta_{v2}^{p})+w_{c1}^{p}\beta_{w1}^{p}-f_{1}w_{c2}^{p}\beta_{w2}^{p}\right)+\left(I_{c}+\rho M_{c}\right)(u_{c1}-f_{1}u_{c2}).\nonumber 
\end{eqnarray}
Collecting terms with $f_{1}$ leads to an interpretation of \eqref{eq:Graham-2v-1}
where $y_{c2}-v_{c}^{c}\beta_{v2}^{c}-w_{c2}^{c}\beta_{w2}^{c}-\left(I_{c}+\rho M_{c}\right)(v_{c}^{p}\beta_{v2}^{p}+w_{c2}^{p}\beta_{w2}^{p})$
is used to control for unobserved $\mu_{c}^{\ast}$ similar to a control
function approach. The difference to a conventional control function
approach is that \eqref{eq:Graham-v3e} for $t=2$ cannot be consistently
estimated by least squares because unobserved components may be correlated
with observed covariates, a problem that our GMM estimators address.

We next write \eqref{eq:Graham-2v-1} more compactly by stacking observations
across classrooms. Let $v^{p},v^{c}$ and $y_{t},u_{t},w_{t}^{p},w_{t}^{c}$,
$t=1,2$, as well as $M$ be as defined above. Furthermore, let $\underline{X}=\left(v^{c},w_{1}^{c},w_{2}^{c},v^{p},w_{1}^{p},w_{2}^{p},Mv^{p},Mw_{1}^{p},Mw_{2}^{p}\right)$
and define
\begin{equation}
\delta(f_{1},\rho,\beta)=\left(\delta_{v}^{c\prime},\beta_{w1}^{c\prime},-f_{1}\beta_{w2}^{c\prime},\delta_{v}^{p\prime},\beta_{w1}^{p\prime},-f_{1}\beta_{w2}^{p\prime},\rho\delta_{v}^{p\prime},\rho\beta_{w1}^{p\prime},-\rho f_{1}\beta_{w2}^{p\prime}\right)^{\prime},\label{eq:delta}
\end{equation}
where $\delta_{v}^{c}=\beta_{v1}^{c}-f_{1}\beta_{v2}^{c}$, $\delta_{v}^{p}=\beta_{v1}^{p}-f_{1}\beta_{v2}^{p}$
and $\beta$ is the vector of all $\beta$s above. Then we can write
\eqref{eq:Graham-2v-1} for the entire sample more compactly as 
\begin{equation}
y_{1}=f_{1}y_{2}+\underline{X}\delta+(I+\rho M)(u_{1}-f_{1}u_{2}).\label{eq:lin_IV}
\end{equation}
where $\delta=\beta_{1}-f_{1}\beta_{2}$. Here $\beta_{t}$ collects
the corresponding coefficients for each variable in $\text{\ensuremath{\underline{X}}}$
with elements of $\beta_{t}$ set to 0 if a variable is absent in
\eqref{eq:Graham-v3e} for given $t$. Let $\mu^{\ast}=(\mu_{1}^{\ast\prime},....,\mu_{C}^{\ast\prime})^{\prime}$.
Then we can write \eqref{eq:Graham-v3e} compactly as

\begin{equation}
y_{t}=\mu^{\ast}f_{t}+\text{\ensuremath{\underline{X}}\ensuremath{\ensuremath{\beta_{t}}+(I+\ensuremath{\rho}M)\ensuremath{u_{t}}}}.\label{eq:Graham-v1-compact}
\end{equation}

With Assumptions \ref{assu:Proportional} and \ref{assu:Potential_Outcomes-cov},
and full column rank conditions on $\underline{X}$, we can identify
$\delta$, which is sufficient for the identification of $(\rho,f_{1})$.
The parameters $\beta_{w1}^{c}$ and $\beta_{w1}^{p}$ can be directly
recovered from $\delta$, and $\beta_{w2}^{c}$ and $\beta_{w2}^{p}$
can be identified when $f_{1}\neq0$. As is common in pure fixed effect
panel settings, the effects of test invariant covariates $v$, in
our case $\beta_{v1}^{c},\beta_{v2}^{c}$ and $\beta_{v1}^{p},\beta_{v2}^{p}$
are not identified when using within or differencing estimators. In
our case, except when $f_{1}=1,$ we are quasi-differencing the equation
and allowing for heterogeneity in parameters between $t=1$ and $t=2.$
This leads to a formulation of the model where $v^{c}$ and $v^{p}$
enter the quasi, or fully differenced equation. However, without further
assumptions the parameters $\beta_{v1}^{c}$, $\beta_{v2}^{c}$, $\beta_{v1}^{p}$
and $\beta_{v2}^{p}$ are not separately identified. An example of
restrictions where the parameters $\beta_{v1}^{c}$ and $\beta_{v2}^{c}$
are identified arises when $f_{1}\neq1$ and $\beta_{v1}^{c}=\beta_{v2}^{c}$.

\subsection{Estimation}

In empirical applications it is sometimes sufficient to estimate $\beta$
or even just $\delta,$ for example when $\underline{X}$ contains
contextual peer effects. Conventional regression methods applied directly
to \eqref{eq:Graham-v3e} are invalid because unobserved fixed effects
collected in $\mu_{c}^{\ast}$ may be correlated with $\underline{X}.$
We show in Section \ref{sec:Technical} that instrumenting for $y_{2}$
in \eqref{eq:lin_IV} identifies the parameters $\delta$ and $f_{1}.$
We show in Corollary \ref{cor:TS_ID} that when test scores are recorded
as non-negative values the classroom fixed effect or an overall constant
is a valid instrument. More generally, exogenous covariates that vary
at the classroom level but are excluded from $\underline{X}$ are
valid instruments by Lemma \ref{lem:id_nocov}. We consider instrument
matrices $H$ and $A$ and require that all elements of $H$ and $A$
are functions of the observable variables that are measurable with
respect to $\mathcal{Z}_{n}$. Note that by construction, these variables
may depend on the unobserved effects $\alpha$ and $\kappa$. In particular,
we set $H=[\underline{X},z]$ as an $n\times q$ instrument matrix.

A simple two-step procedure is sufficient to obtain consistent but
inefficient estimates for $\delta$, $f_{1}$, and $\rho$. Such a
two-step procedure can be readily defined adapting ideas from the
literature on spatial Cliff-Ord models; cp. e.g., \citet{kelejian_generalized_1998,kelejian_generalized_1999}.
The first step relies on linear moment conditions for the identification
of the parameters $\delta$ and $f_{1}$ which exploit that $H$ and
the disturbance vector $(I+\rho M)(u_{1}-f_{1}u_{2})$ of model \eqref{eq:lin_IV}
are orthogonal. Note that by Theorem \ref{thm:decomp} the linear
moment condition $E\left[H^{\prime}\left(y_{1}-y_{2}f_{1}-\underline{X}\delta\right)\right]=0$
holds at the true parameter values for $\delta$ and $f_{1}.$ For
simplicity consider the case where $z$ is of dimension $n\times1$.
Then this equation corresponds to a regular just identified linear
instrumental variables problem with included exogenous covariates
$\underline{X}$ and where $z$ is instrumenting for the endogenous
variable $y_{2}$. Letting $W=\left(y_{2},\text{\ensuremath{\underline{X}}}\right)$
and solving the sample analog of the moment condition yields $\left(\tilde{f}_{1},\tilde{\delta}'\right)'=\left(H^{\prime}W\right)^{-1}H^{\prime}y_{1}$.

The estimator for the coefficient $f_{1}$ has the familiar form of
just identified two stage least squares with included covariates 
\begin{equation}
\tilde{f}_{1}=\frac{z'Q_{X}y_{1}}{z'Q_{X}y_{2}}\label{eq:f_tilde}
\end{equation}
where $Q_{X}=I-\underline{X}(\text{\ensuremath{\underline{X}}}^{\prime}\text{\ensuremath{\underline{X}}})^{-1}\text{\ensuremath{\underline{X}}}^{\prime}$
is the residual operator of a projection onto $\underline{X}$. With
$\tilde{f}_{1}$ determined by \eqref{eq:f_tilde} $\tilde{\delta}$
is given by 
\begin{equation}
\tilde{\delta}=(\underline{X}'\underline{X})^{-1}\underline{X}^{\prime}(y_{1}-\tilde{f}_{1}y_{2}).\label{eq:delta_tilde}
\end{equation}
In the second step we estimate $\rho$ using a form of the quadratic
moment function with, for example, $A=M$. Quadratic moments are motivated
by Theorem \ref{thm:decomp} which indicates that the variance-covariance
matrix of residual vector $\epsilon^{+}=u_{1}-f_{1}u_{2}$ is diagonal.
Hence $E(\epsilon^{+\prime}A\epsilon^{+})=0$ if $A$ has all diagonal
elements 0. This suggests plugging $\tilde{\delta}$ and $\tilde{f}_{1}$
into $\epsilon^{+}\left(\rho,f_{1},\delta\right)=\left(I+\rho M\right)^{-1}\left(y_{1}-f_{1}y_{2}-\underline{X}\delta\right)$
and forming the quadratic moment vector $m_{\epsilon}^{q}\left(\rho\right)=\frac{1}{n}\epsilon^{+}\left(\rho,\tilde{f}_{1},\tilde{\delta}\right)^{\prime}A\epsilon^{+}\left(\rho,\tilde{f}_{1},\tilde{\delta}\right)$.
The first stage estimator for $\rho$ then is obtained as 
\begin{equation}
\tilde{\rho}=\argmin_{\rho}\left(m_{\epsilon}^{q}\left(\rho\right)\right)^{2}.\label{eq:rho_tilde}
\end{equation}
We prove identification in Section \ref{sec:Technical}. Consistency
and the asymptotic distribution of $\tilde{\theta}=\left(\tilde{\rho},\tilde{f}_{1},\tilde{\delta}^{\prime}\right)'$are
discussed in Sections \ref{sec:Consistency} and \ref{sec:AsyNormal}
in the online appendix together with a discussion of the efficient
estimator $\hat{\theta}$ defined below.

The estimator $\tilde{\theta}$ is inefficient as it is constructed
from the residuals $u_{1}-f_{1}u_{2}$ in \eqref{eq:lin_IV}, which
are not homoscedastic. To develop efficient estimators, denote the
variance-covariance matrix of $u_{1}-f_{1}u_{2}$ as $\Omega(\gamma)$,
where $\gamma$ are parameters to be specified later. Theorem \ref{thm:decomp}
indicates that $\Omega(\gamma)$ is a diagonal matrix. The form of
$\Omega(\gamma)$ and $\gamma$ are discussed under Assumption \ref{assu:PAR}
as well as Assumption \ref{assu:CondVar} in the online appendix.

Premultiplying \eqref{eq:lin_IV} by $V\left(\rho,\gamma\right)=\Omega(\gamma)^{-1/2}(I+\rho M)^{-1}$
then yields the following equation 
\begin{equation}
y^{+}(\theta,\gamma)=X^{+}(\theta,\gamma)\delta+u^{+}\label{eq:uplus-1}
\end{equation}
where $y^{+}(\theta,\gamma)=V\left(\rho,\gamma\right)(y_{1}-f_{1}y_{2}),$
$X^{+}(\theta,\gamma)=V\left(\rho,\gamma\right)\underline{X},$ and
$u^{+}=\Omega(\gamma)^{-1/2}(u_{1}-f_{1}u_{2})$ and $\Var\left(u^{+}|\mathcal{Z}_{n}^{*}\right)=I_{n}$.
In light of \eqref{eq:uplus-1} we define 
\begin{equation}
\text{\ensuremath{u^{+}\left(\theta,\gamma\right)}}=V\left(\rho,\gamma\right)(y_{1}-f_{1}y_{2}-\text{\ensuremath{\underline{X}}\ensuremath{\ensuremath{\delta}}})\label{eq:uplus-1-1}
\end{equation}
and observe that $u^{+}(\theta,\gamma)=u^{+}$ at the true parameters.
The transformation leading to $u^{+}$ consists of three components.
Quasi-differencing eliminates classroom and individual effects. The
operator $(I+\rho M)^{-1}$ removes cross-sectional correlation due
to peer effects and can be understood as a form of spatial Cochrane-Orcutt
transformation. Finally, the operator $\Omega(\gamma)^{-1/2}$ scales
the spatially uncorrelated residuals to unit variance in the case
of heteroskedasticity.

We next provide an outline of our GMM estimation methodology. For
clarity we explicitly denote in the following the true parameter vectors
as $\theta_{0}=(\rho_{0},f_{1,0},\delta_{0}')'$ and $\gamma_{0}$.
Our GMM estimator\footnote{Rigorous consistency and asymptotic normality results as well as additional
assumptions needed to establish these properties are given in Sections
\ref{sec:Consistency} and \ref{sec:AsyNormal} in the online appendix.} is based on both linear and quadratic moment conditions.

The moment function consists of a set of linear and quadratic moments
for the transformed residuals\footnote{The specification can be readily extended to include several quadratic
moment conditions along the lines of, e.g., \citet{kuersteiner_dynamic_2020}.} 
\begin{equation}
m_{n}(\theta,\gamma)=\left(\begin{array}{c}
m_{n}^{l}(\theta,\gamma)\\
m_{n}^{q}(\theta,\gamma)
\end{array}\right)=n^{-1/2}\left(\begin{array}{c}
H^{\prime}u^{+}(\theta,\gamma)\\
u^{+\prime}(\theta,\gamma)Au^{+}(\theta,\gamma)
\end{array}\right),\label{eq:Moments_SI_Graham}
\end{equation}
where the functions $m_{n}^{l}(\theta,\gamma)$ and $m_{n}^{q}(\theta,\gamma)$
denote the linear and quadratic moment conditions respectively. Under
the maintained assumptions of this paper the results in \citet{kuersteiner_dynamic_2020},
Theorem 1, imply that the linear moment function is uncorrelated with
the quadratic moments. We have 
\begin{align}
E\left[m_{n}(\theta_{0},\gamma_{0})m_{n}(\theta_{0},\gamma_{0})^{\prime}\right] & =\frac{1}{n}\left(\begin{array}{cc}
E\left[H^{\prime}H\right] & 0\\
0 & 2E\left[\tr\left(A^{2}\right)\right]
\end{array}\right)\equiv\Xi_{n}.\label{eq:Xi_n}
\end{align}
Therefore, our optimal weight matrix is the inverse of $\Xi_{n}.$
Consistent estimates of the elements of $\Xi_{n}$ can be obtained
as follows. For $V_{n}^{h}=\frac{1}{n}$$E\left[H^{\prime}H\right]$
we can use the estimator $\hat{V_{n}^{h}}=\frac{1}{n}H^{\prime}H$
and for $V_{n}^{a}=\frac{1}{n}$$E\left[\tr\left(A^{2}\right)\right]$
we can use the estimator $\hat{V_{n}^{a}}=\frac{1}{n}\tr\left(A^{2}\right)$,
and thus $\Xi_{n}$ can be estimated consistently by $\hat{\Xi}_{n}=diag\{\hat{V_{n}^{h}},2\hat{V_{n}^{a}}\}=n^{-1}diag\{H^{\prime}H,2\tr\left(A^{2}\right)\}$.
The empirical criterion function $Q_{n}\left(\theta,\gamma\right)=\frac{1}{n}m_{n}(\theta,\gamma)^{\prime}\hat{\Xi}_{n}^{-1}m_{n}(\theta,\gamma)$
of the GMM estimator then can be written as 
\begin{align}
Q_{n}\left(\theta,\gamma\right) & =\frac{1}{n}u^{+}(\theta,\gamma)^{\prime}H\left(H^{\prime}H\right)^{-1}H^{\prime}u^{+}(\theta,\gamma)+\frac{1}{n}\frac{\left(u^{+}(\theta,\gamma)'Au^{+}(\theta,\gamma)\right)^{2}}{2\tr\left(A^{2}\right)}\label{eq:Def_Qn}
\end{align}
such that the GMM estimator for $\theta_{0}$ is defined as $\hat{\theta}\left(\gamma\right)=\argmin_{\theta}Q_{n}\left(\theta,\gamma\right)$
for a fixed value of $\gamma.$ In applications $\gamma$ is replaced
with a first step consistent estimator.

\section{Peer Effects in Project STAR\label{sec:Empirical-Results}}

\subsection{Data\label{subsec:Data}}

In this section, we apply our identification strategy to the Tennessee's
Project STAR (student-teacher achievement ratio) data, exploring peer
effects among students in Kindergarten through Grade 3.\footnote{We use STAR data obtained from Harvard Dataverse \nocite{achilles_tennessee_2008}(https://dataverse.harvard.edu/dataset.xhtml?persistentId=hdl:1902.1/10766),
which differs slightly from the version used by Graham(2008) and Rose(2017).
For example, they construct class ID using teacher characteristics
while our data has class ID readily available. Their kindergarten
sample has 6,172 students, and ours has 6,325. We prefer our data
as it matches the description in Word et al. (1990) in terms of sample
size and summary statistics.} Project STAR is a randomized experiment aiming at studying the impact
of class size reduction on children's development. The data set has
been widely used in studies of class size, peer effects, teacher effectiveness
and other education-related topics (e.g., \citealt{krueger_experimental_1999,dee_teachers_2004,graham_identifying_2008,chetty_how_2011,mueller_teacher_2013}).
Here we briefly discuss the details which are relevant for our study.
For a complete description of the project and data, see \citet{word_state_1990,mosteller_tennessee_1995,boyd-zaharias_project_2007}.

Project STAR was carried out in Tennessee from 1985 to 1989, with
additional data on the participants collected after the project ended.
The experiments followed the kindergarten cohort of 1985 in participating
schools for four years, from Kindergarten to the third grade. At the
start of the 1985 academic year, participating schools randomly assigned
kindergarten students and teachers into small classes , regular classes
and regular classes with a full-time teacher\textquoteright s aide.
In the actual implementation, small classes included 11 to 20 students,
while regular classes (with or without aides) included 15 to 30 students.
Randomness of the initial class assignment is ensured by the careful
implementation of the random assignment protocol and has been examined
by a number of studies, e.g., \citet{krueger_experimental_1999,chetty_how_2011}.
However, nonrandom attrition, switching and migration might have happened
in higher grades \citep{hanushek_findings_1999}. Our analysis uses
the sample of each of the four grades, as our estimator remains consistent
even without random assignment.\footnote{A Table with summary statistics for each grade can be found in the
online appendix.} While there are 11601 students in the experiment in total, the sample
size in each grade is between 6325 and 6840 due to students migrating
in or out of the participating schools. The total count of schools
is 79 at the Kindergarten level, but it decreases to 76 at the Grade
1 level and subsequently to 75 at the Grade 2-3 level. This decline
can be attributed to schools withdrawing from the STAR program. There
are between 325 and 340 classes in each grade, with 124 to 140 of
them being small classes.

Project STAR data contains tests scores for mathematics, reading,
listening and word study skills. We use raw Stanford Achievement Tests
(SAT) scores in word study skills (word) and reading (read) as test
$t=1$ and test $t=2$ respectively in our model, as they measure
closely related skills and hence are more likely to satisfy our assumptions.
This choice is supported by diagnostic checks we report in Tables
\ref{tab:corr} \ref{tab:lagy} and Table \ref{tab:corr_all} in the
appendix. Reversing the order of $y_{1}$ and $y_{2}$ does not change
the estimate of $\rho$. Results using scores in mathematics (math)
and listening (list) are discussed in the online appendix as a comparison.
SAT is a nationally standardized test with scores comparable across
grades. The test was administered on state specified examination days,
occurring between late March and early April each year \citep{word_state_1990,krueger_experimental_1999}.
The narrow time frame for the tests lends credence to the assumption
that student's ability and preparation remain stable over the testing
window. The reading and word study skills scores have means that are
close in all grades.

Our preferred model includes a set of student and teacher characteristics.
Between $56.0\%$ and $59.2\%$ of the students receive a free lunch
during the experimental period, an indicator of low family income.
To account for the influence of race, we employ an indicator for black
students, as more than 98\% of the minority students are of black
ethnicity. The share of black students ranges from $32.6\%$ to 34.7\%.
Girls make up $48.0\%$ to $48.6\%$ of the sample. We calculate age
as of April 1st in each year, which is roughly the SAT examination
date, using students' dates of birth. Teacher's characteristics are
weighted by class size. Between 16.5\% and 20.9\% of the students
are taught by black teachers. Meanwhile, 34.6\% to 44.2\% of the students
have teachers with a master's degree or higher. Average years of experience
of the teachers for different grades are between 9.3 and 13.9.

In our preferred specification, we include four types of control variables:
school fixed effects (excluding one dummy for the first school), class
type fixed effects (excluding the dummy for small classes), student
characteristics (including indicators for free lunch, black ethnicity,
female gender, and age) and teacher characteristics (including indicators
for black teachers, having master's degree or higher, and years of
experience) and peer characteristics -- specifically, the leave-out-mean
of student characteristics. We start from the specification with only
school fixed effects and subsequently introduce other types of control
variables incrementally. We find that our results are not very sensitive
to the inclusion of additional controls. This is reassuring in that
our differencing strategy of closely related scores removes most of
the systematic variation in scores that is related to observed and
unobserved individual and classroom level factors affecting outcomes.
We allow for heteroscedasticity of $u$ between small classes and
regular classes (with or without teacher's aide). Standard errors
are clustered at the class level.

\subsection{Plausibility of Assumptions\label{subsec:Plausibility-of-Assumptions}}

Before proceeding to discussing the main estimates of peer effects,
we present several summary statistics in support of our model specifications
and assumptions. Our estimator relies on the premise that word score
$y_{1}$ and reading score $y_{2}$ measure similar skills within
a brief time frame. To document the close connection between test
scores, we use four measures to evaluate the correlation between $y_{1}$
and $y_{2}$ from Grade K to Grade 3 in Table \ref{tab:corr}. The
first two measures on the left panel are Spearman correlations between
$y_{1}$ and $y_{2}$, and between $Q_{X}y_{1}$ and $Q_{X}y_{2}$.
Here $Q_{X}=I-\underline{X}(\underline{X}'\underline{X})^{-1}\text{\ensuremath{\underline{X}}}'$
denotes the residual projection matrix of $\text{\underbar{X}}$,
where $\text{\underbar{X}}$ encompasses the complete set of control
variables: school fixed effects, class type fixed effect, student
characteristics, teacher characteristics, and peer characteristics.
The right panel of Table \ref{tab:corr} reports the pseudo $R^{2}$
from 2SLS, i.e., the Spearman correlation between $y_{1}$ and $\hat{y}_{1}$,
or between $y_{2}$ and $\hat{y}_{2}$. Here $\hat{y}_{1}=\tilde{f}_{1}y_{2}+\underline{X}\tilde{\delta}$,
with $\tilde{f_{1}}$ and $\tilde{\delta}$ defined in \eqref{eq:f_tilde}
and \eqref{eq:delta_tilde} being the 2SLS estimator when $y_{2}$
is instrumented by 1. Similarly, $\hat{y}_{2}$ corresponds to the
counterpart obtained by reversing the order of $y_{1}$ and $y_{2}$.\footnote{We also adjust for missing covariate values using a similar method
as the one for the GMM estimator. That is replacing the peer averages
of the observed individuals $\bar{x}_{c(-r)t}^{obs}=(n_{c}\bar{x}_{ct}-x_{crt}^{obs})/(n_{c}-1)$
with $\tilde{x}_{c(-r)t}^{obs}=(n_{c}\bar{x}_{ct}^{obs}-x_{crt}^{obs})/(n_{c}-1)$,
where $\bar{x}_{ct}^{obs}$ is the average of all observed $x$.These
adjustments make little difference to the results. We drop individuals
with missing outcome variables from the sample. We discuss the procedure
in more detail in Section \ref{sec:Randomly_Missing} of the online
appendix.} We find that the word study skills and reading scores exhibit substantial
correlations of approximately 0.9 across all four measures and in
all grades. This finding is consistent with these scores measuring
skills that are both related to language comprehension. On the other
hand, correlations between other pairs are smaller and range between
0.5 and 0.8, as shown in Table \ref{tab:corr_all} in the Appendix.
While a key identifying assumption, Assumption \ref{assu:Proportional}
is not directly testable because it relates to unobserved ability,
finding strong correlation between test scores is consistent with
a setting where test scores have common unobserved individual and
class effects that may be confounding measures of peer effects at
the classroom level. Our model implies that conditional on covariates
correlation between tests only is due to unobserved fixed effects
$\mu_{c}^{\ast}$ and peer effects working through idiosyncratic errors
$u_{it}.$ While not a test for correct specification, our results
are consistent with $\mu_{c}^{\ast}$ accounting for a large fraction
of the variation in test scores.

We next explore an additional approach\footnote{Two more specification checks are reported in the online appendix.}
to evaluate the plausibility of Assumption \ref{assu:Proportional}.
If unobserved individual and class effects $\mu_{c}^{\ast}$ are stable
over time then scores from previous years should be good predictors
for current year scores. Here, it is important to note that our theory
applies to pairs of closely related tests, and that $\mu_{c}^{\ast}$
is not assumed to be the same for different pairs. Thus our theory
does not assume $\mu_{c}^{\ast}$ to be time invariant between grades.
Conceptually, $\zeta^{*}$ measures unobserved ability just prior
to taking the tests in the current year. As a result, $\mu_{c}^{\ast}$
similarly reflects only current year unobserved ability and other
classroom level effects coming from teachers and peers. Unobserved
effects $\mu_{c}^{\ast}$ are expected to vary year by year for various
reasons, including the fact that instruction during the year improves
skills, that teachers may change and that students may move to different
classes. However, it is expected that at least some components of
$\mu_{c}^{\ast}$ that are more closely related to individual student
ability are time invariant. We examine this hypothesis by measuring
the explanatory power of lagged scores on current scores. The regression
sample includes stayers in each grade, defined as individuals participating
in the STAR project for both the previous and the current grades to
ensure availability of both current and lagged scores. The Grade 1
sample includes 66\% stayers, while Grade 2 and 3 samples have 73.8\%
and 79.6\% respectively. Among the stayers in Grade 1, 88.7\% stay
in the same type of classes (small v.s. regular(w/wo) aide) in the
same schools. This proportion exceeds 93\% for Grade 2 and Grade 3.\footnote{Based on the authors' calculation using the STAR data.}
The share is lower if we consider regular and regular/aid as different
types, due to a random reallocation of students in regular and regular/aid
classes to these two types within school in Grade 1. However, the
literature generally finds no significant impact of teacher's aide
(e.g., \citealp{krueger_experimental_1999}). The stability of school
and class types among the large share of stayers fosters a steady
learning environment. While newcomers and leavers might marginally
influence peer quality, the allocation of newcomers to different classrooms
is random within schools. Consequently, when accounting for school
fixed effects, we anticipate a consistent level of peer quality over
time.

The relatively stable environment that these students are in suggests
that $\mu_{c}^{\ast}$ should contain significant time invariant components
that in turn imply high correlation between observed scores and lagged
scores. As a result we expect to see relatively high $R^{2}$ in predictive
regressions with lagged scores as controls in addition to the observed
characteristics. Consequently, we regress test outcomes on the full
set of control variables (school and class type fixed effects, students,
teacher and peer characteristics and a constant term), and then add
lagged outcomes of the same or different type and their peer averages
to assess changes in $R^{2}$. The results are summarized in Table
\ref{tab:lagy}. The results show the predictive power of lagged scores
for the same test in bold, as well as when using other tests instead,
for grades one through three. The predictive power of lagged scores
is documented by comparing the $R^{2}$ with and without adding lags.
For example, for the word score, adding its own lag improves the $R^{2}$
from 0.244 to 0.435 in grade one, from 0.25 to 0.539 in grade two
and from 0.229 to 0.488 in grade three. Using the lagged reading score
leads to increases in the $R^{2}$ that are even slightly higher.
Overall, the coefficients on lagged scores when using the own lag
of word as well as the lagged reading score are large in magnitude
and highly statistically significant. The results are further evidence
that the data distributions are consistent with key features of our
model.

Table \ref{tab:f1rho} presents estimates and standard errors (in
parentheses) for $f_{1}$ (left panel) and the peer effects parameter
$\rho$ (right panel). \footnote{Results for contextual peer effects are reported in Table \ref{tab:ave}
in the online appendix.} Estimates for other pairs of scores are in Table \ref{tab:f1rho_all}
in the online appendix. These estimates are obtained by our efficient
GMM estimator for all grades and for specifications with just a school
effect in Column (1), school and class type effects in Column (2),
with a full set of controls excluding contextual peer effects in Column
(3) and the full set of controls in Column (4).

The point estimates of $f_{1}$ are remarkably stable across all specifications
and across all grades. The size of the estimated coefficient $f_{1}$
is close to one and estimated precisely. This is further evidence
that word and reading scores measure closely related skills and that
the test score differences control unobserved classroom and individual
effects. Assumption \ref{assu:Proportional} postulates that unobserved
individual specific and classroom level effects captured by $\mu_{c}^{\ast}$
affect both scores in the same way. This is credible when both scores
measure essentially the same skills. Our empirical finding of $f_{1}$
close to one for the word-read pair is consistent with this interpretation
of the data. Other score pairings reported in the appendix have estimated
$f_{1}$ parameters that are less reliably close to one, lending further
support for our choice of focusing on word-read as our preferred specification.

\subsection{Peer Effects\label{subsec:Main_Results}}

To provide some context and interpretation of $\rho$ consider the
education production function in Equation \eqref{eq:Ed_ProdF-cov}.
When $\rho=0$ such that there are no peer effects, it follows that
the observed scores are $y_{ct}=\alpha_{c}f_{t}\mathbf{1}_{c}+X_{ct}^{c}\beta_{t}^{c}+(X_{ct}^{p}\beta_{t}^{p}+y_{ct}^{*})$
where $y_{ct}^{*}$ has the interpretation of an unobserved term that
is measured in the same units as the test score variable. When $\rho$
is different from zero test score performance of student $i$ changes
by $\rho$ times the leave-out-average of $y_{ct}^{*},$ through the
term $\rho M_{c}y_{ct}^{*}.$ This interpretation implies that $\rho$
measures the effect on student $i$'s test score as a fraction of
the increase in peer ability measured in units of test score performance.
Similarly, the effect on test scores as a result of changes in peer
characteristics are measured as $\rho\beta^{p}$ times the changes
in relevant peer characteristics.

The comparison of the four specifications demonstrates the robustness
of our estimators to additional controls. We now focus on Specification
(4), which includes the full set of control variables. Across all
grade levels, our results consistently reveal positive and statistically
significant peer effects, ranging from 0.2 to 0.4. The results we
find are relatively stable across the four specifications in Columns
(1)-(4). The results for Grades $K$, two and three are quite similar,
while results for Grade one are smaller in magnitude but remain statistically
very significant. The implications of these results are that improvements
in peer quality measured in terms of potential SAT scores $y_{t}^{*}$
translate at the rate of 20 to 40 percent to improvements of individual
scores or in other words a 10 point increase in the peer potential
SAT score results in a 2 to 4 point increase in individual SAT scores.
These effects are at the lower end of the spectrum of results reported
in the literature.

The comparison with other results in the literature is complicated
by the fact that often a model with endogenous peer effects is estimated.
To focus ideas, consider the case without covariates and only one
outcome measure $y_{1}$. Then, an endogenous peer effect specification
for class room $c$ is $y_{ct}=\lambda M_{c}y_{ct}+u_{ct}.$ If peer
effects are measured with full rather than leave-one-out means, i.e.,
if $M_{c}=n^{-1}\mathbf{1}_{c}\mathbf{1}_{c}^{\prime}$, this formulation
is identical to our model, which is $y_{ct}=\left(I+\rho M_{c}\right)u_{ct}$
in this simplified stylized setting. The parameters $\rho$ and $\lambda$
satisfy the one-to-one mapping $\lambda=\rho/(1+\rho)$ which is a
consequence of the fact that $\left(I+\rho M_{c}\right)=\left(I-\lambda M_{c}\right)^{-1}$.
The inverse captures multiplier effects of peer performance inherent
in the endogenous peer effect formulation, while $\rho$ is a summary
measure of all these multiplier effects. When peer effects are measured
by leave-one-out rather than full means the relationship between the
two formulations is no longer exact. However, it can be shown that
$\left(I+\rho M_{c}\right)=\left(I-\lambda M_{c}\right)^{-1}$ continues
to hold approximately with an error that is $O\left(n_{c}^{-2}\right)$
where $n_{c}$ is the size of class $c$. Based on this approximation,
we convert our measure of $\rho$ to an endogenous peer effect $\lambda=\rho/(1+\rho),$
or equivalently, $\rho=\lambda/\left(1-\lambda\right)$ for the purpose
of the discussion that follows. Our estimates for $\rho$ in the range
of $0.2$ to $0.4$ then translate to values for $\lambda$ that range
from $0.16$ to $0.28$.

In comparison with existing results for peer effects employing STAR
data, our estimates are relatively small in magnitude but estimated
with high precision and statistical significance. Using normalized
kindergarten SAT scores, \citet{graham_identifying_2008} derives
endogenous peer effects of 0.46 and 0.56 for math and reading respectively,
which translates to $\rho=0.86$ or $\rho=1.30$ in our context.\footnote{\citet{graham_identifying_2008} does not directly estimate endogenous
peer effects. His model is $y_{ic}=\alpha_{c}+(\gamma-1)\bar{\epsilon}_{c}+\epsilon_{ic}$,
with $y_{ic}$ being the outcome of student $i$ in class $c$ and
$\bar{\epsilon}_{c}$ is the full mean of unobserved ability $\epsilon_{ic}$
in class $c$. Hence $\gamma-1$ is comparable with our estimate $\rho$,
both are coefficients of average peer unobserved ability. Due to the
full-mean specification, the equation is equivalent to $y_{ic}=\lambda\bar{y}_{c}+(1-\lambda)\alpha_{c}+\epsilon_{ic}$,
where $\lambda=1-\frac{1}{\gamma}$ is the endogenous peer effects.} Using normalized Kindergarten SAT math score as outcome, \citet{rose_identification_2017}
finds an endogenous peer effect of 0.65 or a correlated effect (measured
by coefficients of the peer average errors) of 0.71. When both are
considered simultaneously, endogenous and correlated effects are 0.90
and -0.03, albeit with standard errors over 1.7, indicating weak identification.
\citet{boozer_black_2001}, using average percentiles of math and
reading scores as outcomes and an instrumental variable methods for
estimation, identify endogenous peer effects of 0.30, 0.86 and 0.92
for Grades 1 to 3 respectively, with standard deviations of 1.0, 0.12
and 0.04. \citet{sojourner_identification_2013} uses first-grade
averages of SAT percentiles of math, reading and listening as outcome
measure, and checks the exogenous peer effects of lagged peer outcomes.
He finds an estimate of about 0.35 with a standard deviation around
0.14.

The observed peer effects in our study exhibit smaller magnitudes
compared to most of the results discussed above. It is important to
note that estimates in \citet{lewbel_social_2023}, \citet{rose_identification_2017}
and \citet{boozer_black_2001} pertain to coefficients of peer's average
contemporaneous scores, resulting in a multiplier effect as explained
above. In contrast, our estimated peer effects are coefficients of
average peer unobserved ability. This notion is more akin to the ``correlated
effects'' discussed in \citet{rose_identification_2017}. Overall,
we find a noticeable difference in peer effects' magnitude between
our study and those of \citet{rose_identification_2017} and \citet{graham_identifying_2008}.

Our proposed estimator leads to highly precise estimates compared
to all the aforementioned estimates, except that of \citet{lewbel_social_2023}
who also report precise estimates. The enhancement in precision can
be attributed in part to the elimination of individual fixed effects
through differencing. Certain individual characteristics that have
been recognized for their significant influence on scores are not
observed in the data for Project STAR. These include parent's education,
family income and most importantly unobserved IQ. Our estimator, by
removing unobserved class effects and individual fixed effects, effectively
eliminates residual variance associated with these characteristics
which may be one explanation for the precise estimates we obtain.
In addition, our GMM estimators use additional quadratic moment restrictions
efficiently.

The aforementioned studies are often limited to a particular grade.
Methods that rely on random peer assignment are typically constrained
in where they can be applied. In project STAR the random assignment
assumption is more defensible in earlier grades \citep{boozer_black_2001,graham_identifying_2008,sojourner_identification_2013}.
In contrast, our analysis does not require random assignment and facilitates
the estimation of peer effects across grade levels. Encouragingly,
our findings consistently fall within a reasonably narrow range, affirming
the robustness and reliability of our method.

\section{Identification \label{sec:Technical}}

We next discuss identification of our model. To establish identification
we introduce additional assumptions imposed on the parameter space.
We call the model identified at the parameter value $\theta_{0}$
if the population moment conditions underlying the formulation of
our estimators are only satisfied at $\theta=\theta_{0}$ for all
$\theta\in\Theta$. Throughout this section we treat $f_{1}$ as an
unknown parameter to be estimated. To accommodate the empirical specification
in Section \ref{subsec:Empirical-Specification} and to justify the
specification checks reported in Section \ref{subsec:Plausibility-of-Assumptions}
we establish that the parameter $f_{1}$ in \eqref{eq:Graham-v3e}
can be consistently estimated irrespective of whether $f_{1}=1$ holds
or not. For this purpose we modify Assumption \ref{assu:Proportional}
to enlarge the class of models we consider to cases where $f_{1}\neq1.$
\begin{assumption}
\label{assu:Prop_Gen}Assume that Assumption \ref{assu:Proportional}
holds except that $E\left[y_{it}^{*}|F_{n,i,t}\right]=\kappa_{i}f_{t}$
for $t=1,2$ where $\kappa_{i}$ is measurable with respect to $\mathcal{Z}_{n}^{*}$
and $f_{t}$ is a constant.
\end{assumption}
Assumption \ref{assu:Prop_Gen} implies that the correctly specified
model is \eqref{eq:Graham-v3e} where we impose the normalization
$f_{2}=1$ without loss of generality. When in addition $f_{1}=1$
then the correctly specified model reduces to \eqref{eq:Graham-v3}.
It is easy to see that the conclusions of Theorem \ref{thm:decomp}
continue to hold under Assumption \ref{assu:Prop_Gen} when $u_{it}=y_{it}^{*}-\kappa_{i}f_{t}.$
The next assumption defines the parameter space.
\begin{assumption}
\label{assu:PAR} We use subscript zero to denote true parameters
and treat $\delta_{0}$, ignoring the restrictions given in \eqref{eq:delta},
as an unrestricted parameter vector. Let $\theta_{0}=\left(\rho_{0},f_{1,0},\delta_{0}^{\prime}\right)^{\prime}$
with $\delta_{0}=\delta(f_{1,0},\rho_{0},\beta_{0})$. Then assume
$\theta_{0}\in\Theta$, where $\Theta$ is a compact subset of $\Theta_{0}=\left(-1,1\right)\times(-K_{f},K_{f})\times(-K_{X},K_{X})^{p_{x}}$,
and where $K_{f}$ and $K_{X}$ are finite positive constants and
$p_{x}$ is the column dimension of $\underline{X}$. In addition,
suppose there are $J$ class types denoted by $\tau_{c}\in\{1,\ldots,J\},$
and $\tau_{c}$ is assumed to be measurable w.r.t. $\mathcal{Z}_{n}$.
Furthermore, let \textup{$\Omega(\gamma)=\diag_{c}^{C}\left\{ \gamma_{\tau_{c}}^{2}I_{c}\right\} $
with $\gamma=(\gamma_{1},\ldots,\gamma_{J})^{\prime},$ then we assume
that }$\Var(u_{1}-f_{1,0}u_{2}|\mathcal{Z}_{n}^{*})=\Omega(\gamma_{0})$
for some $\gamma_{0}=(\gamma_{1,0},\ldots,\gamma_{J,0})^{\prime}\in\Gamma$
with $\Gamma=[K_{\gamma}^{-1},K_{\gamma}]^{J}$ where $K_{\gamma}>1$
is a finite constant.
\end{assumption}
The assumption implies that the parameter space for $\rho$ is a subset
of $[-K_{\rho},K_{\rho}]$ for some positive constant $K_{\rho}<1$.
The conditions postulated for $\Omega(\gamma)$ are satisfied under
Assumption \ref{assu:CondVar} in Appendix \ref{sec:Error_Term} regarding
$u_{t}$. In addition, we impose the the following assumption on $A$.
\begin{assumption}
\label{assu:quaA}(i) The matrix $A=(a_{ij})=\diag_{c=1}^{C}(A_{c})$
is measurable with respect to $\mathcal{Z}_{n}$ and has zero diagonal
elements. Moreover, $A_{c}$ is a function of $M_{c}$ and is symmetric.

(ii) $\mathbf{1}_{c}^{\prime}A_{c}\mathbf{1}_{c}/n_{c}\geq\underline{K}_{a}>0$
for some constant \textup{$\underline{K}_{a}$.}

(iii) $\sup_{i}\sum_{j=1}^{n}\left\vert a_{ij}\right\vert \leq K_{a}<\infty$
for some constant $K_{a}$.
\end{assumption}
Two valid choices for $A$ are $A=M$ and $A=M^{\prime}M-\diag\left(M^{\prime}M\right).$
\begin{assumption}
\label{assu:size}Group size $n_{c}$ satisfies $2\leq n_{c}\leq\bar{n}<\infty$
for some constant \textup{$\bar{n}$}.
\end{assumption}
Before discussing the general case, we start discussing identification
of our model in the case where no additional covariates are present.
Note that in the case of no covariates, the coefficients $\delta$
of the covariates are omitted, $\theta=(\rho,f_{1},\delta^{\prime})^{\prime}$
is reduced to $(\rho,f_{1})^{\prime}$ and \eqref{eq:uplus-1-1} becomes
$u^{+}(\theta,\gamma)=y^{+}(\theta,\gamma)$.

We next discuss identification for a given matrix $H=z\equiv\left(z_{1}^{\prime},...,z_{c}^{\prime},...,z_{C}^{\prime}\right)^{\prime}$
of $\mathcal{Z}_{n}$ measurable instruments, where the $z_{c}$ are
$n_{c}\times q$ matrices of random variables that are invariant at
the classroom level, i.e., $z_{c}=\mathbf{1}_{c}\grave{z}_{c}$, where
$\mathbf{1}_{c}$ is an $n_{c}\times1$ vector with all elements equal
to one and $\grave{z}_{c}$ is a $1\times q$ vector of class level
characteristics. We focus on the just identified case with $q=1$
for ease of exposition. Additional overidentifying restrictions can
be added but do not affect arguments related to identification. Consider
the moment vector $m_{n}\left(\theta,\gamma\right)$ defined in \eqref{eq:Moments_SI_Graham}
where $A$ is a matrix that satisfies the properties listed in Assumption
\ref{assu:quaA}.

Observe that $\Omega(\gamma)=\diag_{c=1}^{C}(\gamma_{\tau_{c}}^{2}I_{c})$
is diagonal for all admissible $\gamma$ and that $\Omega_{0}=\Omega(\gamma_{0})=\Var(u_{1}-f_{1,0}u_{2}|\mathcal{Z}_{n})$
in light of Assumptions \ref{assu:Prop_Gen} and \ref{assu:PAR} and
Theorem \ref{thm:decomp}. Also, $u^{+}(\theta_{0},\gamma)=\Omega(\gamma)^{-1/2}\Omega_{0}^{1/2}u^{+}.$
Thus it is readily seen that for all $A$ satisfying Assumption \ref{assu:quaA}(i)
we have $E\left[m_{n}\left(\theta_{0},\gamma\right)\right]=0$, where
the expectation of the quadratic moments is $\tr\left(A\Omega(\gamma)^{-1}\Omega_{0}\right)$
which is zero because diagonal elements of $A\Omega(\gamma)^{-1}\Omega_{0}$
are zero.

We next show that $E\left[m_{n}\left(\rho,f_{1},\gamma\right)\right]=0$
if and only if $(\rho,f_{1})=(\rho_{0},f_{1,0})$. Identification
of $f_{1,0}$ can be established from an inspection of the linear
moment condition. Identification of the social interaction parameter
$\rho_{0}$ follows from an analysis of the quadratic moment condition.
\begin{lem}
\label{lem:id_nocov} As a special case of \eqref{eq:Graham-v1-compact}
without covariates consider the data generating process $y_{t}=\mu^{\ast}f_{t,0}+\left(I_{n}+\rho_{0}M\right)u_{t}$
where $\mu^{*}$ is defined in \eqref{eq:mu_c}. Suppose Assumptions
\ref{assu:Potential_Outcomes-cov}, \ref{assu:Prop_Gen}, \ref{assu:PAR},
\ref{assu:quaA}, and \ref{assu:size} hold, let $m_{n}\left(\theta,\gamma\right)$
be defined in (\ref{eq:Moments_SI_Graham}) and let $H=z$ where $z$
is invariant at the classroom level. In addition, assume that $q=1$
and 
\begin{equation}
n^{-1}\left|E\left[y_{2}^{\prime}\Omega(\gamma)^{-1/2}z\left|\mathcal{Z}_{n}\right.\right]\right|\geq K_{y}>0\label{eq:Graham_ID_Cond}
\end{equation}
 for all $\gamma\in\Gamma$ and all $n$, where $K_{y}$ is a constant.
Then for all $\gamma\in\Gamma$ we have $E\left[m_{n}\left(\rho_{0},f_{1,0},\gamma\right)\left|\mathcal{Z}_{n}\right.\right]=0$
and $E\left[m_{n}\left(\rho,f_{1},\gamma\right)\left|\mathcal{Z}_{n}\right.\right]\neq0$
a.s. for all $\left(\rho,f_{1}\right)\neq\left(\rho_{0},f_{1,0}\right)$
and $\gamma\in\Gamma$.
\end{lem}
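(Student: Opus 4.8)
The plan is to verify the two assertions separately. That $E[m_n(\rho_0,f_{1,0},\gamma)\mid\mathcal{Z}_n]=0$ for every $\gamma\in\Gamma$ is essentially already contained in \eqref{eq:Moments_SI_Graham_Theta0}--\eqref{eq:Moments_SI_Graham_Theta0-1}; the only additional remarks needed are that $\mu^{*}$ is $\mathcal{Z}_n^{*}$-measurable, that $E[u_t\mid\mathcal{Z}_n^{*}]=0$ by Theorem \ref{thm:decomp}, and that $\Var(u_1-f_{1,0}u_2\mid\mathcal{Z}_n^{*})=\Omega_0$ by Assumptions \ref{assu:Proportional} and \ref{assu:CondVar}, so that the $\mu^{*}$ term drops out after conditioning. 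The work is the converse: for $(\rho,f_1)\neq(\rho_0,f_{1,0})$ at least one component of $E[m_n(\rho,f_1,\gamma)\mid\mathcal{Z}_n]$ is a.s. nonzero. I would split into the case $f_1\neq f_{1,0}$, resolved by the linear moment, and the case $f_1=f_{1,0}$ with $\rho\neq\rho_0$, resolved by the quadratic moment. In both cases I would first substitute the data generating process, using $f_{2,0}=1$, to obtain $y_1-f_1 y_2=(f_{1,0}-f_1)\mu^{*}+(I+\rho_0 M)(u_1-f_1 u_2)$ and hence $u^{+}(\theta,\gamma)=\Omega(\gamma)^{-1/2}(I+\rho M)^{-1}[(f_{1,0}-f_1)\mu^{*}+(I+\rho_0 M)(u_1-f_1 u_2)]$, and I would exploit the block eigenstructure throughout: $M_c$ is symmetric with $M_c\mathbf{1}_c=\mathbf{1}_c$ and eigenvalues $1$ (multiplicity one) and $-\nu_c$, $\nu_c:=1/(n_c-1)$, of multiplicity $n_c-1$, so in particular $I+\rho M_c$ is invertible for $|\rho|<1$.

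For $f_1\neq f_{1,0}$ I would use the linear moment. As $\Omega(\gamma)^{-1/2}$, $(I+\rho M)^{-1}$ and $z$ are $\mathcal{Z}_n$-measurable and $E[u_t\mid\mathcal{Z}_n]=0$, conditioning kills the $u_t$ terms and leaves $E[z'u^{+}(\theta,\gamma)\mid\mathcal{Z}_n]=(f_{1,0}-f_1)\,z'\Omega(\gamma)^{-1/2}(I+\rho M)^{-1}E[\mu^{*}\mid\mathcal{Z}_n]$. Because $z_c=\mathbf{1}_c\grave{z}_c$ and $\Omega(\gamma)$ is block-scalar, $\Omega(\gamma)^{-1/2}z$ is block-proportional to $\mathbf{1}_c$, and since $M$ is symmetric with $(I_c+\rho M_c)^{-1}\mathbf{1}_c=(1+\rho)^{-1}\mathbf{1}_c$, it follows that $z'\Omega(\gamma)^{-1/2}(I+\rho M)^{-1}=(1+\rho)^{-1}z'\Omega(\gamma)^{-1/2}$. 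Combined with $E[y_2\mid\mathcal{Z}_n]=E[\mu^{*}\mid\mathcal{Z}_n]$ this gives $E[z'u^{+}(\theta,\gamma)\mid\mathcal{Z}_n]=(f_{1,0}-f_1)(1+\rho)^{-1}E[y_2'\Omega(\gamma)^{-1/2}z\mid\mathcal{Z}_n]$. Since $|1+\rho|$ is bounded away from zero on $\Theta$ and \eqref{eq:Graham_ID_Cond} forces $|E[y_2'\Omega(\gamma)^{-1/2}z\mid\mathcal{Z}_n]|\geq nK_y>0$ a.s., the linear moment is a.s. nonzero whenever $f_1\neq f_{1,0}$, which settles this case.

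For $f_1=f_{1,0}$ and $\rho\neq\rho_0$ the $\mu^{*}$ term vanishes, so $u^{+}(\theta,\gamma)=\Omega(\gamma)^{-1/2}(I+\rho M)^{-1}(I+\rho_0 M)(u_1-f_{1,0}u_2)$; conditioning on $\mathcal{Z}_n^{*}$ and then on $\mathcal{Z}_n$, using $\Var(u_1-f_{1,0}u_2\mid\mathcal{Z}_n^{*})=\Omega_0$, gives
\[
E[u^{+}(\theta,\gamma)'Au^{+}(\theta,\gamma)\mid\mathcal{Z}_n]=\tr[P\,\Omega(\gamma)^{-1/2}A\,\Omega(\gamma)^{-1/2}P\,\Omega_0],\qquad P:=(I+\rho M)^{-1}(I+\rho_0 M).
\]
Since $A_c$ is symmetric, a function of $M_c$, and has zero diagonal, it must equal $c_{A,c}M_c$ for some scalar $c_{A,c}$, and Assumption \ref{assu:quaA}(ii) identifies $c_{A,c}=\mathbf{1}_c'A_c\mathbf{1}_c/n_c>0$. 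Block-diagonality and commutativity (all blocks being functions of $M_c$) then reduce the trace to $\sum_c\gamma_{\tau_c}^{-2}\omega_c^{2}c_{A,c}\tr[M_c(I_c+\rho_0 M_c)^{2}(I_c+\rho M_c)^{-2}]$, with $\omega_c^{2}>0$ the block-$c$ diagonal entry of $\Omega_0$; diagonalizing $M_c$ and using $(n_c-1)\nu_c=1$ turns the inner trace into $\phi(1)-\phi(-\nu_c)$, where $\phi(x):=((1+\rho_0 x)/(1+\rho x))^{2}$. The step I expect to be the main obstacle is ruling out accidental cancellation of these terms across classrooms of different sizes: for $\rho\neq\rho_0$ with $|\rho|,|\rho_0|<1$, the map $x\mapsto(1+\rho_0 x)/(1+\rho x)$ has a nonvanishing derivative of constant sign and is strictly positive on $[-1,1]$, hence $\phi$ is strictly monotone there; as $-\nu_c<1$ for every $c$, each $\phi(1)-\phi(-\nu_c)$ is nonzero and all of them share the same sign, so the positively-weighted sum is nonzero. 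This shows the quadratic moment is a.s. nonzero and completes the converse. The remaining items --- invertibility of $I+\rho M_c$, the block-scalar reductions, and $A_c=c_{A,c}M_c$ with $c_{A,c}>0$ --- are routine consequences of the two-point spectrum of $M_c$.
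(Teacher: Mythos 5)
Your proposal is correct and follows essentially the same route as the paper, which proves this lemma as a special case of Lemma \ref{lem:general_id} together with Lemma \ref{lem:id_q}: the linear moment pins down $f_{1}$ because $(I_{c}+\rho M_{c})^{-1}$ acts as $(1+\rho)^{-1}$ on the class-constant instrument so that condition \eqref{eq:Graham_ID_Cond} applies, and the quadratic moment pins down $\rho$ via the two-point spectrum of $M_{c}$ and a same-sign-across-classrooms argument that rules out cancellation. The only differences are cosmetic: you reduce $A_{c}$ to $c_{A,c}M_{c}$ and argue the sign via monotonicity of $\phi(x)=\left((1+\rho_{0}x)/(1+\rho x)\right)^{2}$, whereas the paper uses only $\tr(A_{c})=0$ and the explicit factorization $(\rho_{0}-\rho)C(\rho,n_{c})\tr(A_{c}J_{c}^{\ast})$ with $C(\rho,n_{c})>0$ --- these are equivalent computations.
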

A proof of Lemma \ref{lem:id_nocov} establishing the identifiability
of $\left(\rho_{0},f_{1,0}\right)$ is given in the online appendix.
The subsequent corollary considers a special case of the lemma that
is of interest in the context of test scores, and arises when $H=\mathbf{1}_{n}$
and when test scores are non-negative. The latter is typically the
case for non-normalized test scores.
\begin{cor}
\label{cor:TS_ID}Suppose the assumptions of Lemma \ref{lem:id_nocov}
hold, $H=\mathbf{1}_{n}$, $y_{t}$$\geq0$, and suppose condition
(\ref{eq:Graham_ID_Cond}) is replaced by $n^{-1}\sum_{c=1}^{C}E\left[y_{c2}^{\prime}\mathbf{1}_{c}\left|\mathcal{Z}_{n}\right.\right]>K_{y}>0$
for some constant $K_{y}.$ Then the conclusions of Lemma \ref{lem:id_nocov}
hold.
\end{cor}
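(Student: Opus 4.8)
The plan is to recognize Corollary \ref{cor:TS_ID} as a specialization of Lemma \ref{lem:id_nocov} to the instrument $H=z=\mathbf{1}_n$, obtained by taking $\grave{z}_c=1$ for every classroom $c$. With this choice $z$ is trivially invariant at the classroom level and $\mathcal{Z}_n$-measurable, and $q=1$, so all structural requirements on the instrument in Lemma \ref{lem:id_nocov} are met automatically. The only hypothesis not inherited for free is the relevance condition \eqref{eq:Graham_ID_Cond}; consequently the whole proof reduces to deriving \eqref{eq:Graham_ID_Cond} from the stated replacement condition $n^{-1}\sum_{c=1}^{C}E[y_{c2}'\mathbf{1}_c\mid\mathcal{Z}_n]>K_y>0$ together with the non-negativity $y_t\geq0$. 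Once this is done, the conclusions of Lemma \ref{lem:id_nocov} transfer verbatim.

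To carry this out, first I would use the block-diagonal form $\Omega(\gamma)=\diag_{c=1}^{C}(\gamma_{\tau_c}^2 I_c)$ to write $\Omega(\gamma)^{-1/2}\mathbf{1}_n$ as the stacked vector with $c$-th block $\gamma_{\tau_c}^{-1}\mathbf{1}_c$, so that $y_2'\Omega(\gamma)^{-1/2}\mathbf{1}_n=\sum_{c=1}^{C}\gamma_{\tau_c}^{-1}\,y_{c2}'\mathbf{1}_c$. Since $\tau_c$ is $\mathcal{Z}_n$-measurable by Assumption \ref{assu:CondVar}, the weights $\gamma_{\tau_c}^{-1}$ pull out of the conditional expectation, giving
\[
E\left[y_2'\Omega(\gamma)^{-1/2}\mathbf{1}_n\mid\mathcal{Z}_n\right]=\sum_{c=1}^{C}\gamma_{\tau_c}^{-1}\,E\left[y_{c2}'\mathbf{1}_c\mid\mathcal{Z}_n\right].
\]
Now the single place where $y_t\geq0$ enters: it forces $y_{c2}'\mathbf{1}_c\geq0$, hence $E[y_{c2}'\mathbf{1}_c\mid\mathcal{Z}_n]\geq0$, while Assumption \ref{assu:PAR} gives $\gamma_{\tau_c}^{-1}\geq K_\gamma^{-1}>0$. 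Each summand is therefore at least $K_\gamma^{-1}E[y_{c2}'\mathbf{1}_c\mid\mathcal{Z}_n]$, and summing yields
\[
n^{-1}\left|E\left[y_2'\Omega(\gamma)^{-1/2}\mathbf{1}_n\mid\mathcal{Z}_n\right]\right|\;\geq\;K_\gamma^{-1}\,n^{-1}\sum_{c=1}^{C}E\left[y_{c2}'\mathbf{1}_c\mid\mathcal{Z}_n\right]\;>\;K_\gamma^{-1}K_y\;>\;0
\]
for every $\gamma\in\Gamma$ and every $n$, which is precisely \eqref{eq:Graham_ID_Cond} with the positive constant $K_\gamma^{-1}K_y$ in place of $K_y$.

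The step that deserves the most attention, although it is short, is the uniformity over $\gamma\in\Gamma$: the replacement hypothesis controls only the \emph{unweighted} classroom sum $\sum_c E[y_{c2}'\mathbf{1}_c\mid\mathcal{Z}_n]$, whereas \eqref{eq:Graham_ID_Cond} demands a bound on the $\gamma$-reweighted sum, with the reweighting varying by classroom type $\tau_c$. Without a sign restriction a lower bound on the unweighted sum would say nothing about the reweighted one, since contributions from different types could cancel; non-negativity of the scores is exactly what rules out such cancellation and makes a single constant $K_\gamma^{-1}K_y$ work simultaneously for all admissible $\gamma$. Apart from this, one only has to note in passing that replacing the general classroom-level instrument $z$ by the constant vector $\mathbf{1}_n$ leaves every other ingredient used in the proof of Lemma \ref{lem:id_nocov} intact, which is immediate.
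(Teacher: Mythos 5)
Your proposal is correct and matches the paper's own argument: both reduce the corollary to verifying condition \eqref{eq:Graham_ID_Cond} for $z=\mathbf{1}_n$ by writing $y_2'\Omega(\gamma)^{-1/2}\mathbf{1}_n=\sum_c \gamma_{\tau_c}^{-1}y_{c2}'\mathbf{1}_c$, using $y_t\geq0$ to keep every classroom term non-negative, and bounding $\gamma_{\tau_c}^{-1}\geq K_\gamma^{-1}$ via Assumption \ref{assu:PAR} to obtain the uniform lower bound $K_y/K_\gamma>0$. Your added remark on why non-negativity is needed to prevent cancellation across classroom types is accurate but does not change the substance.
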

We emphasize further that no restrictions on $\mu^{*}$ (other than
the fact that $\mu^{*}$ does not depend on $t$ and $\mu^{*}\neq0$
a.s.) are needed to identify $\rho_{0}$. Also, the analysis in the
appendix shows that neither knowledge of, nor restrictions on heteroskedasticity
are necessary for the identification of $\theta_{0}$. Thus, initial
estimates of $\theta_{0}$ can be obtained with $\gamma_{j}^{2}=1$
for all $j=1,...,J$.

Next we turn to a discussion of identification of $\theta=(\rho,f_{1},\delta^{\prime})^{\prime}$
when the model contains covariates. We use instruments $H=\left[\underline{X},z\right]$
for linear moment restrictions and a matrix $A$ for quadratic moment
restrictions, with $A$ defined as before. Both $H$ and $A$ are
$\mathcal{Z}_{n}$-measurable and $\diag(A)=0$. For ease of exposition
we focus again on the case where $z$ is a vector. Recall that with
covariates, $u^{+}(\theta,\gamma)$ is defined in \eqref{eq:uplus-1-1}.
We consider the $(q+1)\times1$ vector of linear and quadratic moment
functions $m_{n}(\theta,\gamma)$ defined in \eqref{eq:Moments_SI_Graham}.

Note that our procedure is not designed to identify $\gamma$, nor
is that required for the identification of $\theta$. Theorem \ref{thm:general_id}
below shows that for all permissible $\gamma$ the moment condition
satisfies $E\left[m_{n}(\theta_{0},\gamma)\left|\mathcal{Z}_{n}\right.\right]=0$
and $E\left[m_{n}(\theta,\gamma)\left|\mathcal{Z}_{n}\right.\right]\neq0$
for $\theta\neq\theta_{0}$. The parameters $\gamma$ can be pinned
down by estimating the variance of $u_{1}-f_{1,0}u_{2}$ for different
types of groups. For identification, we further impose the following
assumption.
\begin{assumption}
\label{assu:rank-new}(i)The absolute values of the elements $h_{ij}$
of $H=\left[\underline{X},z\right]$ are uniformly bounded in $i,j,n$
by some positive constant $K_{H}<\infty$.

(ii) The smallest eigenvalue of $\underline{X}'\underline{X}/n$ is
bounded from below by some $\xi_{X}>0$ uniformly in $n$.

(iii) Define $V(\rho,\gamma)=\Omega(\gamma)^{-1/2}(I+\rho M)^{-1}$
and let 
\[
Q_{V^{1/2}X}(\rho,\gamma)=I-V(\rho,\gamma)^{1/2}\underline{X}\left(\underline{X}^{\prime}V(\rho,\gamma)\underline{X}\right){}^{-1}\underline{X}^{\prime}V(\rho,\gamma)^{1/2}
\]
be the projection matrix onto the orthogonal complement of the column
space of $V^{1/2}\underline{X}$, then
\begin{equation}
\inf_{\gamma\in\Gamma,\rho\in[-K_{\rho},K_{\rho}]}n^{-1}\left|E\left[(V(\rho,\gamma)^{1/2}z)^{\prime}Q_{V^{1/2}X}(\rho,\gamma)V(\rho,\gamma)^{1/2}y_{2}\left|\mathcal{Z}_{n}\right.\right]\right|>K_{y}>0\text{ a.s.}.\label{eq:Rank_Cond-3}
\end{equation}
\end{assumption}
Note that the elements of the matrix $H$ are by construction measurable
w.r.t. $\mathcal{Z}_{n}=\sigma(X,z)$. Since $\underline{X}$ contains
$X$, part (i) of the assumption implies that also the elements of
$X$ are bounded in absolute value by $K_{H}$. Observing that $(\underline{X},z)$
has full column rank by Assumption \ref{assu:rank-new}(iii), as setting
$\rho=0$ and $\gamma=(1,...,1)^{\prime}$ in $V(\rho,\gamma)$, \eqref{eq:Rank_Cond-3}
becomes $n^{-1}\left|E\left[z^{\prime}Q_{X}y_{2}|\mathcal{Z}_{n}\right]\right|>K_{y}>0$,
where $Q_{X}=I-\underline{X}\left(\underline{X}^{\prime}\underline{X}\right){}^{-1}\underline{X}^{\prime}$.
It is readily seen from an inspection of the proof of Theorem \ref{thm:general_id}
below that a necessary condition for Assumption \ref{assu:rank-new}(iii)
is that $\left(\underline{X},E\left[\mu^{*}|\mathcal{Z}_{n}\right]\right)$
has full column rank. The latter condition is equivalent to the condition
that $\left(\underline{X},E\left[y_{2}|\mathcal{Z}_{n}\right]\right)$
has full column rank, observing that in light of \eqref{eq:Graham-v1-compact}
we have $E\left[y_{2}|\mathcal{Z}_{n}\right]=E\left[\mu^{*}+\underline{X}\beta_{2}|\mathcal{Z}_{n}\right]$
for some vector of parameters $\beta_{2}.$

Condition (\ref{eq:Rank_Cond-3}) generalizes condition (\ref{eq:Graham_ID_Cond})
in Lemma \ref{lem:id_nocov}. Under homoskedasticity or when estimators
are not taking heteroskedasticity into account, the latter condition
takes the more familiar form $n^{-1}\left|E\left[y_{2}^{\prime}z|\mathcal{Z}_{n}\right]\right|>K_{y}>0$.
Further simplifications of \eqref{eq:Rank_Cond-3} are obtained if
in addition to heteroskedasticity, spatial correlation is ignored
by setting $\rho=0.$ Considering this case is useful when discussing
first step estimators for $\delta$ and $f_{1}.$ For the case with
covariates \eqref{eq:Rank_Cond-3} simplifies to $n^{-1}\left|E\left[z^{\prime}Q_{X}y_{2}|\mathcal{Z}_{n}\right]\right|>K_{y}>0$,
where $Q_{X}$ is defined as above. Both versions of the simplified
identification conditions are conventional rank conditions for linear
instrumental variables estimators. A second scenario under which the
conditions can be simplified is when $\gamma$ is evaluated at a fixed
constant level. For example, when $\gamma=\bar{\gamma}$ then \eqref{eq:Graham_ID_Cond}
is replaced by $n^{-1}\left|E\left[y_{2}^{\prime}\Omega(\bar{\gamma})^{-1/2}z\left|\mathcal{Z}_{n}\right.\right]\right|>K_{y}>0$
and a similar simplification obtained for \eqref{eq:Rank_Cond-3}.
The following theorem is a generalization of Lemma \ref{lem:id_nocov}.
\begin{thm}
\label{thm:general_id}Let Assumptions \ref{assu:Potential_Outcomes-cov},
\ref{assu:Prop_Gen}, \ref{assu:PAR}, \ref{assu:quaA}, \ref{assu:size}
and \ref{assu:rank-new} be satisfied. Let $m_{n}\left(\theta,\gamma\right)$
be defined as in \eqref{eq:Moments_SI_Graham}. Then for all admissible
$\gamma\in\Gamma,$ $E\left[m_{n}\left(\theta_{0},\gamma\right)\left|\mathcal{Z}_{n}\right.\right]=0$
and $E\left[m_{n}\left(\theta,\gamma\right)\left|\mathcal{Z}_{n}\right.\right]\neq0$
for $\theta\neq\theta_{0}$ a.s.
\end{thm}
We report results that establish the consistency and asymptotic normality
of our estimators in an online appendix where we also discuss consistent
estimation of $\gamma$ and standard errors.

\section{Conclusions\label{sec:Conclusions}}

We identify peer effects using differential scores of closely related
subject tests. Our analysis and identification strategy is based on
a variance decomposition that distinguishes between test invariant
components and unobserved idiosyncratic effects that are not predictable
using concurrent test performance in related tests. The additional
parametric assumption of linear education production functions and
peer effects allows to identify a common endogenous peer effects parameter.
Our method depends on the availability of closely related, yet separate
test results but has the advantage of being robust to non-random group
assignment. We prove identification of model parameters under conditions
that are typical in the linear instrumental variables literature.
These results depend on a careful exploration of the parametric structure
of linear peer effects models. Our empirical results for Project STAR
data and classrooms in kindergarten to the third grade show highly
significant peer effects, but smaller in magnitude than those found
in a number of studies that have looked at the same data.\pagebreak{}

{\setstretch{1.0}\bibliographystyle{jpe}
\bibliography{cite}

}

\pagebreak{}

 \begin{table}[htpb] \begin{center} \caption{Correlation between Word and Reading Scores} \label{tab:corr}  \begin{threeparttable}  \begin{tabular}{cccccc} \hline \hline & \multicolumn{2}{c}{Rank Correlation} & & \multicolumn{2}{c}{Pseudo $ R^2 $ of 2SLS} \\ \cline{2-3}\cline{5-6} & $ y_1,y_2 $ & $ Q_X y_1,Q_X y_2 $ & & $ y_1,\hat{y}_1 $ & $ y_2,\hat{y}_2 $ \\ \hline 
Grade K&0.908&0.881&&0.915&0.909\\
Grade 1&0.932&0.905&&0.936&0.940\\
Grade 2&0.884&0.848&&0.892&0.894\\
Grade 3&0.890&0.869&&0.901&0.895\\
 \hline \hline \end{tabular} \begin{tablenotes} \footnotesize
\item 1. The left panel is Spearman's rank correlation between $ y_1 $, $ y_2 $ and between $ Q_{X}y_{1} $, $ Q_{X}y_{2} $, where $ Q_{X}=I-\underbar{X}(\underbar{X}^{\prime}\underbar{X})^{-1}\underbar{X}^{\prime} $, $ y_{1} $ and $ y_{2} $ are raw SAT scores in word and reading, $ \underbar{X} $ are the complete set of control variables, including school fixed effects, class type fixed effects, student characteristics, teacher characteristics and peer characteristics. \item 2. The right panel is pseudo $ R^{2} $, i.e., Spearman's rank correlation between $ y_{1} $ and $ \hat{y}_{1} $ in Column 3, and  between $ y_{2} $ and $ \hat{y}_{2} $ in Column 4. Here $ \hat{y}_{1}=\tilde{f}_{1}y_{2}+\text{\ensuremath{\underbar{X}}}\tilde{\delta} $, with $ \tilde{f_{1}} $ and $ \tilde{\delta} $ defined in (\ref{eq:f_tilde}) and (\ref{eq:delta_tilde}), $ \underbar{X} $ is the complete set of controls as described in note 1. That is,  $ \hat{y}_{1} $ is the predicted value of $ y_{1} $ from 2SLS for $ y_{1}=f_{1}y_{2}+\underbar{X}\delta+\epsilon $ with $ y_2 $ instrumented by the constant term,  $ \hat{y}_{2} $ is defined in a similar manner by reversing the order of $ y_1, y_2 $.
\end{tablenotes} \end{threeparttable} \end{center} \end{table}

 \begin{table}[htpb] \begin{center} \caption{Explanatory Power of Lagged Scores} \label{tab:lagy}  \begin{threeparttable}  \begin{tabular}{cccccccc} \hline \hline & no lagged score & & \multicolumn{2}{c}{lagged read} & & \multicolumn{2}{c}{lagged word} \\ \cline{4-5} \cline{7-8} & $ R^2 $ & & coefficient & $ R^2 $ & & coefficient & $ R^2 $ \\ \hline 
\multicolumn{8}{c}{\textit{dependent variable:reading score}} \\
Grade 1&0.290&&\textbf{1.026(0.032)}&\textbf{0.544}&&0.809(0.024)&0.522\\
Grade 2&0.294&&\textbf{0.622(0.011)}&\textbf{0.648}&&0.582(0.013)&0.588\\
Grade 3&0.246&&\textbf{0.671(0.010)}&\textbf{0.644}&&0.512(0.010)&0.544\\
\multicolumn{8}{c}{\textit{dependent variable:word score}} \\
Grade 1&0.244&&0.860(0.033)&0.453&&\textbf{0.686(0.025)}&\textbf{0.435}\\
Grade 2&0.250&&0.668(0.012)&0.588&&\textbf{0.631(0.014)}&\textbf{0.539}\\
Grade 3&0.229&&0.737(0.013)&0.594&&\textbf{0.547(0.012)}&\textbf{0.488}\\
 \hline \hline \end{tabular} \begin{tablenotes} \footnotesize
\item 1. Results for the regression $ y_i=\alpha + X_i\gamma+ly_i \beta_1 +avely_i \beta_2 +\epsilon_i$, where $ y_i $ is the reading score in the top panel and word score in the bottom panel. X is a set of control variables including school fixed effects, class type fixed effects, student, teacher, and peer characteristics in the main specification. $ ly_i $ is the lagged reading score (columns 2 and 3) or lagged word score (columns 4 and 5), $ avely_i $ is the peer average of $ ly_i $. Results are bold when $ y_i $ and $ ly_i $ are of the same type.  \item 2. Column 1 is the R squared of the regression when no lagged variables are added. Columns 2 and 4 are coefficients and standard errors of lagged score. Columns 3 and 5 are R squared of the regression. \item 3.  Coefficients of peer average lagged scores, typically small and statistically insignificant, are omitted for conciseness.
\end{tablenotes} \end{threeparttable} \end{center} \end{table}

 \begin{table}[htpb] \begin{center} \caption{Estimates of $ f_1 $  and $ \rho $ for the word-reading pair} \label{tab:f1rho}  \begin{threeparttable}  \begin{tabular}{cccccccccc} \hline \hline & \multicolumn{4}{c}{$ f_1 $} & & \multicolumn{4}{c}{$ \rho $} \\ \cline{2-5}\cline{7-10} & (1) & (2) & (3) & (4) & & (1) & (2) & (3) & (4) \\ \hline 
Grade K&1.005&1.005&1.016&0.960&&0.452&0.449&0.443&0.424\\
&(0.004)&(0.005)&(0.010)&(0.051)&&(0.078)&(0.078)&(0.080)&(0.075)\\
Grade 1&0.992&0.992&0.986&0.975&&0.178&0.178&0.190&0.193\\
&(0.005)&(0.005)&(0.009)&(0.043)&&(0.054)&(0.054)&(0.054)&(0.053)\\
Grade 2&1.008&1.007&1.034&1.076&&0.399&0.398&0.363&0.348\\
&(0.004)&(0.004)&(0.008)&(0.035)&&(0.065)&(0.064)&(0.063)&(0.064)\\
Grade 3&0.987&0.987&1.012&0.985&&0.391&0.387&0.377&0.379\\
&(0.002)&(0.002)&(0.006)&(0.031)&&(0.078)&(0.077)&(0.077)&(0.076)\\
\hline school FE & Y & Y & Y & Y & & Y & Y & Y & Y \\ class type FE & & Y & Y & Y & & & Y & Y & Y \\ stu\&tch char & & & Y & Y & & & & Y & Y \\ peer char & & & & Y & & & & & Y\\ \hline \hline \end{tabular} \begin{tablenotes} \footnotesize
\item 1. Estimates and standard errors (in the parentheses) for $ f_1 $  and $ \rho $. Standard errors are clustered at the classroom level. All models allow for heteroscedasticity across small and regular classes. Estimates are adjusted for missing observations. \item 2. The scores are raw SAT scores in word study skills for $ y_1 $ and reading for $ y_2 $. \item 3. As indicated at the bottom of the table, the models may control for (1) school fixed effects (excluding one); (2) class type fixed effects (excluding small); (3) a student's characteristics (free lunch, black, girl, age) and teacher characteristics (black teacher, master, years of experience); (4) peer averages of student characteristics.
\end{tablenotes} \end{threeparttable} \end{center} \end{table}
\pagebreak{}

\appendix

\section*{Appendix}

\section{Proofs}
\begin{proof}[Proof of Theorem \ref{thm:decomp}]
 By the assumption on the boundedness of the $\left(4+\eta\right)$-th
moments all considered expectations and conditional expectations exist.
Since $u_{it}=y_{it}^{*}-E\left[y_{it}^{*}|\mathcal{F}_{n,i,t}\right]$
we have $E\left[u_{it}|\mathcal{F}_{n,i,t}\right]=0$ by construction.
Observing that $E\left[y_{it}^{*}|\mathcal{F}_{n,i,t}\right]=\kappa_{i}$
and that $\kappa_{i}$ is $\mathcal{Z}_{n}^{*}$-measurable it follows
that conditional on $\mathcal{Z}_{n}^{*}$ knowledge of $y_{it}^{*}$
is equivalent to knowledge of $u_{it}$. To see this let $u_{-i,t}=\left(u_{1t},..,u_{i-1,t},u_{i+1,t},...,u_{n,t}\right)'$
and use the notation $u_{-t}$ for the unobserved components related
to the other test. Define $U=\left(S,\alpha,\varphi,\zeta^{*},u_{-t},u_{-i,t},\kappa_{i}\right)$
assuming for simplicity of notations that vectors are row vectors,
and let $V=\left(S,\alpha,\varphi,\zeta^{*},y_{-t}^{*},y_{-i,t}^{*},\kappa_{i}\right)$.
Then $V=g(U)$ where $g(.)$ is Borel measurable and thus $\sigma(V)\subseteq\sigma(U)$.
Analogously $U=h(V)$ where $h(.)$ is Borel measurable and thus $\sigma(U)\subseteq\sigma(V)$.
Consequently $\sigma(U)=\sigma(V)$. Since $\kappa_{i}$ is measurable
w.r.t. $\sigma(S,\alpha,\varphi,\zeta^{*})$ it follows that 
\[
\sigma(S,\alpha,\varphi,\zeta^{*},u_{-t},u_{-i,t},\kappa_{i})=\sigma(S,\alpha,\varphi,\zeta^{*},u_{-t},u_{-i,t})
\]
 and $\sigma(S,\alpha,\varphi,\zeta^{*},y_{-t}^{*},y_{-i,t}^{*},\kappa_{i})=\sigma(S,\alpha,\varphi,\zeta^{*},y_{-t}^{*},y_{-i,t}^{*}),$
which shows that 
\[
\mathcal{B}_{n,i,t}=\sigma\left(S,\alpha,\varphi,\zeta^{*},u_{-t},u_{-i,t}\right)=\sigma\left(S,\alpha,\varphi,\zeta^{*},y_{-t}^{*},y_{-i,t}^{*}\right)=\mathcal{F}_{n,i,t}.
\]
 Next observe that the $u_{jt}$ are $\mathcal{F}_{n,i,t}$-measurable
for $i\neq j$, and hence by iterated expectations $\Cov\left(u_{it},u_{jt}|\mathcal{F}_{n,i,t}\right)=E\left[u_{jt}E\left(u_{it}|\mathcal{F}_{n,i,t}\right)\right]=0$.
Observe further that for $s\neq t$ the $u_{js}$ are $\mathcal{F}_{n,i,t}$-measurable
for all $i$, and hence by iterated expectations $\Cov\left(u_{it},u_{js}|\mathcal{F}_{n,i,t}\right)=E\left[u_{js}E\left[u_{it}|\mathcal{F}_{n,i,t}\right]\right]=0$.
The corresponding claims regarding the mean and covariances of the
$u_{it}$ conditional on $\mathcal{G}_{n,i,t}\subseteq\mathcal{F}_{n,i,t}$
follow immediately by iterated expectations. This proves the first
part of the theorem.

To prove the second part, let $\mathcal{G}_{n,i,t}\subseteq\mathcal{F}_{n,i,t}$
be some information set. Then given that Assumption \ref{assu:Proportional}
holds, it follows by iterated expectations that
\begin{eqnarray*}
\text{E\ensuremath{\left[y_{it}^{*}-\kappa_{i}|\mathcal{G}_{n,i,t}\right]}} & = & E\left[E\left[y_{it}^{*}|\mathcal{F}_{n,i,t}\right]|\mathcal{G}_{n,i,t}\right]-E\left[\kappa_{i}|\mathcal{G}_{n,i,t}\right]=E\left[\kappa_{i}|\mathcal{G}_{n,i,t}\right]-E\left[\kappa_{i}|\mathcal{G}_{n,i,t}\right]=0.
\end{eqnarray*}
Thus by the almost sure uniqueness of conditional expectations, Assumption
\ref{assu:Proportional} cannot hold if $E\left[y_{it}^{*}-\kappa_{i}|\mathcal{G}_{n,i,t}\right]\neq0$.
\end{proof}
Note that Lemma \ref{lem:id_nocov} and Corollary \ref{cor:TS_ID}
can be viewed as special cases of Theorem \ref{thm:general_id}. We
therefore only give an explicit proof of Theorem \ref{thm:general_id}
here. Proofs of the additional results are given in the online appendix.\footnote{Equations labeled 32 and higher, as well as Appendices B and higher
can be found in the online appendix.}
\begin{proof}[Proof of Theorem \ref{thm:general_id}]
 In light of \eqref{eq:Graham-v1-compact} it follows that,
\begin{equation}
y_{1}-f_{1}y_{2}=\mu^{\ast}(f_{1,0}-f_{1})+\underline{X}\delta(f_{1},\rho_{0},\beta_{0})+(I+\rho_{0}M)(u_{1}-f_{1}u_{2}),\label{eq:ydif}
\end{equation}
where $\delta(f_{1},\rho_{0},\beta_{0})$ is defined in \eqref{eq:delta}
and $\delta(f_{1,0},\rho_{0},\beta_{0})=\delta_{0}$. Consequently,
\begin{equation}
y_{1}-f_{1}y_{2}-\text{\underbar{X}}\delta=\mu^{\ast}(f_{1,0}-f_{1})+\underline{X}\left(\delta(f_{1},\rho_{0},\beta_{0})-\delta\right)+(I+\rho_{0}M)(u_{1}-f_{1}u_{2}).\label{eq:ydif2}
\end{equation}
Let $\mathcal{M}(\rho)=(I+\rho M)^{-1}(I+\rho_{0}M)$. Substitution
of \eqref{eq:ydif2} into \eqref{eq:uplus-1-1} yields
\begin{eqnarray}
u^{+}(\theta,\gamma) & = & \Omega(\gamma)^{-1/2}(I+\rho M)^{-1}\label{eq:uplus0-1}\\
 &  & \times\left\{ \left(f_{1,0}-f_{1}\right)\mu^{\ast}+\underline{X}\left(\delta(f_{1},\rho_{0},\beta_{0})-\delta\right)+(I+\rho_{0}M)(u_{1}-f_{1}u_{2})\right\} \nonumber \\
 & = & V(\rho,\gamma)\left(\begin{array}{cc}
\underline{X}, & \mu^{*}\end{array}\right)\left(\begin{array}{c}
\delta(f_{1},\rho_{0},\beta_{0})-\delta\\
f_{1,0}-f_{1}
\end{array}\right)+\Omega(\gamma)^{-1/2}\mathcal{M}(\rho)(u_{1}-f_{1}u_{2}),
\end{eqnarray}
recalling that $V(\rho,\gamma)=\Omega(\gamma)^{-1/2}(I+\rho M)^{-1}$
.

We first analyze the linear moment conditions. Note that 
\begin{equation}
H^{\prime}u^{+}(\theta,\gamma)=H^{\prime}V(\rho,\gamma)\left(\begin{array}{cc}
\underline{X}, & \mu^{*}\end{array}\right)\left(\begin{array}{c}
\delta(f_{1},\rho_{0},\beta_{0})-\delta\\
f_{1,0}-f_{1}
\end{array}\right)+H^{\prime}\Omega(\gamma)^{-1/2}\mathcal{M}(\rho)(u_{1}-f_{1}u_{2}).\label{eq:ml}
\end{equation}
Let 
\begin{eqnarray}
K_{n}(\rho,\gamma) & = & n^{-1}H^{\prime}V(\rho,\gamma)\begin{array}{cc}
(\underline{X}, & \mu^{*})\end{array}.\label{eq:Kn}
\end{eqnarray}
Observe that 
\begin{eqnarray}
n^{-1}E\left[H^{\prime}u^{+}(\theta,\gamma)\left|\mathcal{Z}_{n}\right.\right] & = & E\left(K_{n}(\rho,\gamma)\left|\mathcal{Z}_{n}\right.\right)\left(\begin{array}{c}
\delta(f_{1},\rho_{0},\beta_{0})-\delta\\
f_{1,0}-f_{1}
\end{array}\right).\label{eq:mom-1}
\end{eqnarray}
where the r.h.s. is seen to hold in light of \eqref{eq:ml} and \eqref{eq:Kn},
and since $H$ is $\mathcal{Z}_{n}$-measurable and thus
\[
E\left[H^{\prime}\Omega(\gamma)^{-1/2}\mathcal{M}(\rho)(u_{1}-f_{1}u_{2})\left|\mathcal{Z}_{n}\right.\right]=0.
\]
The assumptions imply that $K_{n}$ has full rank a.s. To see this,
note that in light of \eqref{eq:I-rhoM_IJ_Rep}, \eqref{eq:Omega_IJ_Rep},
Lemma \ref{lem:mat_I-J}(iv) in online Appendix \ref{sec:Auxiliary-Lemmas}
implies that
\begin{equation}
V(\rho,\gamma)=\diag_{c=1}^{C}\left(\frac{1}{\gamma_{\tau_{c}}}\frac{n_{c}-1-\rho}{n_{c}-1}I_{c}^{\ast}+\frac{1}{\gamma_{\tau_{c}}}\left(1+\rho\right)J_{c}^{\ast}\right),\label{eq:V}
\end{equation}
Under Assumptions \ref{assu:PAR} and \ref{assu:size}, there exists
some constants $c_{e}$ and $C_{e}$ such that $0<c_{e}<\frac{1}{\gamma_{\tau_{c}}}\frac{n_{c}-1-\rho}{n_{c}-1}<C_{e}<\infty$
and $0<c_{e}<\frac{1}{\gamma_{\tau_{c}}}\left(1+\rho\right)<C_{e}<\infty$.
In light of Lemma \ref{lem:mat_I-J}(iii), the eigenvalues of $V(\rho,\gamma)$
are uniformly bounded below by $c_{e}>0$. By Assumption \ref{assu:rank-new}
we have $\lambda_{min}\left(n^{-1}\underline{X}^{\prime}\underline{X}\right)\geq\xi_{X}>0$
and thus $\lambda_{min}\left(n^{-1}\underline{X}^{\prime}V(\rho,\gamma)\underline{X}\right)\geq c_{e}\xi_{X}>0$
uniformly for all $\gamma$, $\rho$, and $n$.

Recalling that $H=\left(\underline{X},z\right)$ and recalling the
expression for $K_{n}(\theta,\gamma)$ given in (\ref{eq:Kn}) the
matrix can be written as 
\begin{align}
K_{n}(\rho,\gamma) & =\frac{1}{n}\left(\begin{array}{cc}
\underline{X}^{\prime}V(\rho,\gamma)\underline{X} & \underline{X}^{\prime}V(\rho,\gamma)\mu^{*}\\
z^{\prime}V(\rho,\gamma)\underline{X} & z^{\prime}V(\rho,\gamma)\mu^{*}
\end{array}\right).\label{eq:Kn1}
\end{align}
Thus by the determinant for partitioned matrices, and using the shorthand
notation $V=V(\rho,\gamma)$,
\begin{align}
 & \left|\det\left(E\left[K_{n}(\rho,\gamma)\left|\mathcal{Z}_{n}\right.\right]\right)\right|\label{eq:Kn2}\\
= & \left|\det\left(\frac{1}{n}E\left[\underline{X}^{\prime}V\underline{X}\left|\mathcal{Z}_{n}\right.\right]\right)\right|\left|\det\left\{ E\left[\left(\frac{1}{n}z^{\prime}V\mu^{*}-\frac{1}{n}z^{\prime}V\underline{X}(\frac{1}{n}\underline{X}^{\prime}V\underline{X})^{-1}\frac{1}{n}\underline{X}^{\prime}V\mu^{*}\right)\left|\mathcal{Z}_{n}\right.\right]\right\} \right|\nonumber \\
\geqslant & \left(\xi_{X}c_{e}\right)^{p_{x}}\left|\frac{1}{n}E\left[(V^{1/2}z)^{\prime}Q_{V^{1/2}X}(V^{1/2}\mu^{*})\left|\mathcal{Z}_{n}\right.\right]\right|.\nonumber 
\end{align}
Observe furthermore that $E\left[(V^{1/2}z)^{\prime}Q_{V^{1/2}X}(V^{1/2}y_{2})\left|\mathcal{Z}_{n}\right.\right]=E\left[(V^{1/2}z)^{\prime}Q_{V^{1/2}X}(V^{1/2}\mu^{*})\left|\mathcal{Z}_{n}\right.\right]$
in light of \eqref{eq:Graham-v1-compact} and since $Q_{V^{1/2}X}$
is orthogonal to $V^{1/2}\underline{X}$ and $E\left[u_{i2}|\mathcal{Z}_{n}\right]=0$
by Theorem \ref{thm:decomp}. Thus 
\begin{align}
 & \inf_{\gamma\in\Gamma,\rho\in[-K_{\rho},K_{\rho}]}\left|\det\left(E\left[K_{n}(\rho,\gamma)\left|\mathcal{Z}_{n}\right.\right]\right)\right|\geq\label{eq:K_pos}\\
 & \left(\xi_{X}c_{e}\right)\left(\inf_{\gamma\in\Gamma,\rho\in[-K_{\rho},K_{\rho}]}n^{-1}\left|E\left[(V^{1/2}z)^{\prime}Q_{V^{1/2}X}(V^{1/2}y_{2})\left|\mathcal{Z}_{n}\right.\right]\right|\right)\nonumber \\
 & \geq\left(\xi_{X}c_{e}\right)^{p_{x}}K_{y}>0\nonumber 
\end{align}
in light of Assumption \ref{assu:rank-new}. This proves that $E\left[K_{n}(\rho,\gamma)\left|\mathcal{Z}_{n}\right.\right]$
in \eqref{eq:mom-1} has full rank for all admissible values of $\rho$
and $\gamma$, and consequently $E\left[m_{n}^{(l)}(\theta,\gamma)\right]=0$
if and only if $f_{1}=f_{1,0}$ and $\delta=\delta_{0}$. Therefore
$f_{1,0}$ and $\delta_{0}$ are identified from the linear moment
condition.

Once $f_{1,0}$ and $\delta_{0}$ are identified, $\rho$ can be identified
from the quadratic moment condition. To see this, note that when evaluated
at $\delta_{0}$ and $f_{1,0},$ $u^{+}(\theta,\gamma)$ in Equation
\eqref{eq:uplus0-1} becomes 
\begin{align*}
u^{+}(\theta,\gamma)|_{f_{1,0},\delta_{0}} & =\Omega(\gamma)^{-1/2}(I+\rho M)^{-1}(I+\rho_{0}M)(u_{1}-f_{1,0}u_{2})\\
 & =\Omega(\gamma)^{-1/2}\mathcal{M}(\rho)(u_{1}-f_{1,0}u_{2}).
\end{align*}
Then, 
\begin{align*}
 & \left.E\left[u^{+}(\theta,\gamma)^{\prime}Au^{+}(\theta,\gamma)|\mathcal{Z}_{n}\right]\right\vert \\
= & E\left[(u_{1}-f_{1,0}u_{2})^{\prime}\mathcal{M}\left(\rho\right)\Omega(\gamma)^{-1/2}A\Omega(\gamma)^{-1/2}\mathcal{M}\left(\rho\right)(u_{1}-f_{1,0}u_{2})|\mathcal{Z}_{n}\right]\\
= & E\left[\tr\left(\mathcal{M}\left(\rho\right)\Omega(\gamma)^{-1/2}A\Omega(\gamma)^{-1/2}\mathcal{M}\left(\rho\right)\Omega_{0}\right)|\mathcal{Z}_{n}\right].
\end{align*}
By Lemma \ref{lem:id_q} in Section \ref{sec:Auxiliary-Lemmas} in
the online appendix, the equation above equals 0 if and only if $\rho=\rho_{0}$
for $-1<\rho<1$. This establishes that for all admissible $\gamma$
we have $E\left[m_{n}\left(\theta_{0},\gamma\right)|\mathcal{Z}_{n}\right]=0\text{ a.s.}$
and $E\left[m_{n}\left(\theta,\gamma\right)|\mathcal{Z}_{n}\right]\neq0\text{ a.s.}$
for $\theta\neq\theta_{0}$.
\end{proof}
\pagebreak{}

\section*{Online Appendix for ``Differential Test Performance and Peer Effects''
by Guido M. Kuersteiner, Ingmar R. Prucha and Ying Zeng}

\setcounter{page}{1}

\section{\label{sec:Error_Term}Assumptions and Results for $u_{it}$}

We impose the following formal assumption about the variance of the
error structure.
\begin{assumption}
\label{assu:CondVar} Let $u_{it}=y_{it}^{*}-\kappa_{i}f_{t}$ and
let $i=n_{1}+\ldots+n_{c-1}+r$ with $1\leq r\leq n_{c}$, i.e., $i$
refers to student $r$ in class c. Let $u_{-i,t}=\left(u_{1,t},...,u_{i-1,t},u_{i+1,t},...,u_{n,t}\right)$.
Define the filtration $\mathcal{B}_{n,i,t}=\sigma\left(S,\alpha,\varphi,X,z,\zeta^{*},u_{-t},u_{-i,t}\right)\text{\,for \ensuremath{t=1,2}}$
where $u_{-t}$ represents information about the other test. Then
the conditional variance of $u_{it}$ is given by $E\left[u_{it}^{2}|\mathcal{B}_{n,i,t}\right]=\sigma_{t}^{2}\rho_{\tau_{c}}^{2},$where
$\tau_{c}$ is $\mathcal{Z}_{n}$-measurable and $\tau_{c}\in\{1,\ldots,J\}$,
and $c_{u}\leq\sigma_{t}^{2},\rho_{j}^{2}\leq C_{u}$ for some finite
positive bounds.
\end{assumption}
Assumption \ref{assu:CondVar} implies that for $t=1,2$, $\Sigma_{t}\equiv E\left[u_{t}u_{t}^{\prime}|\mathcal{Z}_{n}^{\ast}\right]=\sigma_{t}^{2}\diag_{c=1}^{C}\left(\rho_{\tau_{c}}^{2}I_{c}\right)$.
Observing that $f_{t}$ is a fixed constant, conditional on $S,\zeta^{*},\alpha,X$
and $z$, knowledge of $y_{it}^{*}$ is equivalent to knowledge of
$u_{it}$. Thus we have the following equivalence of information sets\footnote{See the proof of Lemma \ref{lem:u_Msbl_MomB} for more detail.}
\[
\mathcal{F}{}_{n,i,t}=\mathcal{B}_{n,i,t}\,\text{for}\,t=1,2.
\]
Assumption \ref{assu:CondVar} imposes a similar factor structure
on conditional variances as does Assumption \ref{assu:Proportional}
for conditional means. By Theorem \ref{thm:decomp} the errors $u_{it}$
are mutually uncorrelated over tests as well as cross-sectionally.
Consequently the variance covariance matrix of $u_{1}-f_{1}u_{2}$
is given by, 
\[
\Omega=\Var\left(u_{1}-f_{1}u_{2}|\mathcal{Z}_{n}^{*}\right)=\Sigma_{1}+f_{1}^{2}\Sigma_{2}=\left(\sigma_{1}^{2}+f_{1}^{2}\sigma_{2}^{2}\right)\diag_{c=1}^{C}\left(\rho_{\tau_{c}}^{2}I_{c}\right).
\]
Define $\gamma_{j}^{2}=\left(\sigma_{1}^{2}+f_{1}^{2}\sigma_{2}^{2}\right)\rho_{j}^{2}$,
for $j=1,...,J$ and $\gamma=\left(\gamma_{1},...,\gamma_{J}\right)'$.
Then $\Omega(\gamma)=\diag_{c=1}^{C}\left(\gamma_{\tau_{c}}^{2}I_{c}\right)$
and the conditions about $\Omega(\gamma)$ and $\gamma$ in Assumption
\ref{assu:PAR} hold.
\begin{lem}
\label{lem:u_Msbl_MomB}Let Assumption \ref{assu:Prop_Gen} hold.
Then, $u_{it}$ defined in Theorem \ref{thm:decomp} is a martingale
difference sequence w.r.t to the filtration $\mathcal{A}_{n,\nu}$
with \textup{$\mathcal{A}_{n,0}=\mathcal{Z}_{n}^{*}=\sigma\left(S,\alpha,\varphi,X,z,\zeta^{*}\right)$
and }
\begin{equation}
\mathcal{A}_{n,\left(t-1\right)n+i-1}=\sigma\left(S,\alpha,\varphi,X,z,\zeta^{*},\left\{ u_{j,t-1}\right\} _{j=1}^{n},\left\{ u_{jt}\right\} _{j=1}^{i-1}\right)\text{\,for \ensuremath{t=1,2;i=1,\ldots,n}}\label{eq:A_n,(t-1)n+i-1}
\end{equation}
and where $u_{j,0}=0$ to handle the case $t=1.$ In addition, $\sup_{i,t}E\left[\left|u_{it}\right|^{4+\eta}\right]<\infty$
for some $\eta>0$.
\end{lem}
\begin{proof}
First note that the filtrations $\mathcal{A}_{n,\nu}$ are increasing
in the sense $\mathcal{A}_{n,0}\subseteq\mathcal{A}_{n,1}....\subseteq\mathcal{A}_{n,\nu}\subseteq\mathcal{A}_{n,\nu+1}....$.
By construction $u_{it}=y_{it}^{*}-\kappa_{i}f_{t}$ where $\kappa_{i}f_{t}$
is $\mathcal{A}_{n,0}$-measurable. For general $i$ and $t$ note
that \textcolor{black}{since $u_{it}=y_{it}^{*}-\kappa_{i}f_{t}$
and $\kappa_{i}f_{t}$ is $\mathcal{Z}_{n}^{*}$-measurable it follows
that $\mathcal{B}_{n,i,t}\subseteq\mathcal{F}_{n,i,t}$. Since $y_{it}^{*}=\kappa_{i}f_{t}+u_{it}$
and $\kappa_{i}f_{t}$ is $\mathcal{Z}_{n}^{*}$-measurable it follows
that $\mathcal{F}_{n,i,t}\subseteq\mathcal{B}_{n,i,t}$. Consequently
$\mathcal{F}_{n,i,t}=\mathcal{B}_{n,i,t}.$ Observing that $\mathcal{A}_{n,\left(t-1\right)n+i-1}\subseteq\mathcal{B}_{n,i,t}$
it follows, using iterated expectations, that 
\begin{eqnarray*}
E\left[u_{it}|\mathcal{A}_{n,\left(t-1\right)n+i-1}\right] & = & E\left[E\left[u_{it}|\mathcal{B}_{n,i,t}\right]|\mathcal{A}_{n,\left(t-1\right)n+i-1}\right]=0
\end{eqnarray*}
 observing that $E\left[u_{it}|\mathcal{B}_{n,i,t}\right]=E\left[y_{it}^{*}-\kappa_{i}f_{t}|\mathcal{F}_{n,i,t}\right]=0$
in light of Assumption \ref{assu:Proportional}. Finally observe that
$u_{it}$ is measurable w.r.t. $\mathcal{A}_{n,\left(t-1\right)n+i}$
.}

Recall that by Assumption \ref{assu:Proportional} there exists a
random variable $y$ such that $\left|y_{it}^{*}\right|+\left|\kappa_{i}f_{t}\right|\leq y$
with $E\left[\left|y\right|^{4+\eta}|\mathcal{A}_{n,0}\right]\leq K_{y}<\infty.$
Hence\textcolor{blue}{{} }
\begin{align}
E\left[\left|u_{it}\right|^{4+\eta}|\mathcal{A}_{n,0}\right] & \leq E\left[\left(\left|y_{it}^{*}\right|+\left|\kappa_{i}f_{t}\right|\right)^{4+\eta}|\mathcal{A}_{n,0}\right]\leq E\left[\left|y\right|^{4+\eta}|\mathcal{A}_{n,0}\right]<K_{y}<\infty.\label{eq:u4bound}
\end{align}
 Since the bound on the RHS is uniform in $i$ and $t$ by assumption
it follows that $\sup_{i,t}E\left[\left|u_{it}\right|^{4+\eta}\right]<\infty.$
\end{proof}

\section{\label{sec:Consistency}Consistency of Parameter Estimators}

Formulating the moment conditions using transformed residuals as in
\eqref{eq:Moments_SI_Graham} requires consistent estimates of the
variance parameters $\gamma$. These can be obtained by first obtaining
consistent but inefficient parameter estimates. To do so, we set all
elements of $\gamma=(\gamma_{1},...,\gamma_{J})'$ to one. Theorem
\ref{thm:general_id} implies that the parameters $\theta=(\rho,f_{1},\delta^{\prime})^{\prime}$
are still identified by the moment conditions in \eqref{eq:Moments_SI_Graham}.

First step estimators $(\tilde{\rho},\tilde{f}_{1},\tilde{\delta}^{\prime})^{\prime}$
are defined in \eqref{eq:f_tilde}, \eqref{eq:delta_tilde} and \eqref{eq:rho_tilde}.\footnote{\citet{lee_gmm_2007} considers first step and efficient GMM estimators
using linear and quadratic moments.} To formalize the identification result for these non-efficient first
step estimators define the moment vector

\begin{equation}
m_{\epsilon}\left(\theta\right)=n^{-1/2}\left(\begin{array}{c}
H^{\prime}(I+\rho M)\epsilon^{+}\left(\theta\right)\\
\epsilon^{+}\left(\theta\right)^{\prime}A\epsilon^{+}\left(\theta\right)
\end{array}\right).\label{eq:Moment_simple}
\end{equation}
The next lemma shows that the moment conditions in \eqref{eq:Moment_simple}
are sufficient to identify the parameters $\theta=(\rho,f_{1},\delta^{\prime})^{\prime}$.
\begin{lem}
\label{lem:linear_id-1}Let Assumptions \ref{assu:Potential_Outcomes-cov}-\ref{assu:size},
and Assumption \ref{assu:rank-new}(i),(ii) be satisfied. Let $m_{\epsilon}(\theta)$
be defined as in \eqref{eq:Moment_simple}. Assume further that $n^{-1}\left|E\left[z^{\prime}Q_{X}y_{2}|\mathcal{Z}_{n}\right]\right|>K_{y}>0$
a.s . Then $E\left[m_{\epsilon}\left(\theta_{0}\right)\left|\mathcal{Z}_{n}\right.\right]=0$
and $E\left[m_{\epsilon}\left(\theta\right)\left|\mathcal{Z}_{n}\right.\right]\neq0$
a.s. if $\theta\neq\theta_{0}$ for all $\theta\in\Theta$.
\end{lem}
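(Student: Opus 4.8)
The plan is to mirror the two-step structure of the proofs of Lemmas \ref{lem:id_nocov} and \ref{lem:general_id}: first exploit the linear block of $m_\epsilon(\theta)$ to pin down $(f_1,\delta)$, then exploit the quadratic block to pin down $\rho$. Throughout I would use that $\epsilon^+(\theta)$, $M$, $H=[\underline{X},z]$ and $A$ are $\mathcal{Z}_n$-measurable, that $I+\rho M$ is invertible for $|\rho|\le K_\rho<1$ (its class blocks have eigenvalues $1+\rho$ and $((n_c-1)-\rho)/(n_c-1)$, both positive since $n_c\ge 2$), and the decomposition $y_1-f_{1,0}y_2-\underline{X}\delta_0=(I+\rho_0 M)(u_1-f_{1,0}u_2)$ from \eqref{eq:lin_IV} together with Theorem \ref{thm:decomp}.

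First I would check $E[m_\epsilon(\theta_0)|\mathcal{Z}_n]=0$. By the decomposition, $\epsilon^+(\theta_0)=u_1-f_{1,0}u_2$ and $(I+\rho_0 M)\epsilon^+(\theta_0)=y_1-f_{1,0}y_2-\underline{X}\delta_0$, so since $E[u_{it}|\mathcal{Z}_n]=0$ (Theorem \ref{thm:decomp}, as $\mathcal{Z}_n\subseteq\mathcal{F}_{n,i,t}$) the linear block has zero conditional mean; and since $E[(u_1-f_{1,0}u_2)(u_1-f_{1,0}u_2)'|\mathcal{Z}_n]=\Omega_0:=\diag_{c=1}^{C}(\gamma_{\tau_c,0}^2 I_c)$ by Theorem \ref{thm:decomp} and Assumption \ref{assu:CondVar}, and $A$ has zero diagonal while $\Omega_0$ is diagonal, $E[\epsilon^+(\theta_0)'A\epsilon^+(\theta_0)|\mathcal{Z}_n]=\tr(A\Omega_0)=0$. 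Next, the linear block $H'(I+\rho M)\epsilon^+(\theta)=H'(y_1-f_1y_2-\underline{X}\delta)$ does not depend on $\rho$; subtracting its value at $\theta_0$ and using $\mathcal{Z}_n$-measurability of $\underline{X}$, the requirement that this block have zero conditional mean reduces to $H'\big((f_1-f_{1,0})E[y_2|\mathcal{Z}_n]+\underline{X}(\delta-\delta_0)\big)=0$. The $\underline{X}$-rows together with Assumption \ref{assu:rank-new}(ii) (invertibility of $\underline{X}'\underline{X}$) give $\delta-\delta_0=-(f_1-f_{1,0})(\underline{X}'\underline{X})^{-1}\underline{X}'E[y_2|\mathcal{Z}_n]$, and substituting into the $z$-rows yields $(f_1-f_{1,0})E[z'Q_X y_2|\mathcal{Z}_n]=0$ with $Q_X=I-\underline{X}(\underline{X}'\underline{X})^{-1}\underline{X}'$. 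The maintained rank condition $n^{-1}|E[z'Q_X y_2|\mathcal{Z}_n]|>K_y>0$ then forces $f_1=f_{1,0}$, hence $\delta=\delta_0$; conversely the block clearly vanishes at $(f_{1,0},\delta_0)$, so the linear block vanishes iff $(f_1,\delta)=(f_{1,0},\delta_0)$.

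Third, it remains to show the quadratic block is nonzero whenever $(f_1,\delta)=(f_{1,0},\delta_0)$ but $\rho\neq\rho_0$. Here, using \eqref{eq:lin_IV} again, $\epsilon^+(\rho,f_{1,0},\delta_0)=(I+\rho M)^{-1}(I+\rho_0 M)w$ with $w=u_1-f_{1,0}u_2$ ($\mu^*$ having cancelled in the quasi-difference), so writing $G(\rho)=(I+\rho M)^{-1}(I+\rho_0 M)=\diag_{c=1}^{C}(G_c(\rho))$ we get $E[\epsilon^{+\prime}A\epsilon^+|\mathcal{Z}_n]=\sum_{c=1}^{C}\gamma_{\tau_c,0}^2\,\tr(G_c(\rho)'A_c G_c(\rho))$. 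Using $M_c=P_c-(I_c-P_c)/(n_c-1)$ with $P_c=\mathbf{1}_c\mathbf{1}_c'/n_c$ gives $G_c(\rho)=a_c P_c+b_c(I_c-P_c)$, where $a_c=(1+\rho_0)/(1+\rho)$ and $b_c=((n_c-1)-\rho_0)/((n_c-1)-\rho)$; a short computation using $\tr(A_c)=0$, $P_c^2=P_c$ and $\tr(A_c P_c)=\mathbf{1}_c'A_c\mathbf{1}_c/n_c$ then yields $\tr(G_c(\rho)'A_c G_c(\rho))=(a_c^2-b_c^2)\,\mathbf{1}_c'A_c\mathbf{1}_c/n_c$. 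The crux — and the step I expect to be the real obstacle, since classes have different sizes and cancellation across the sum must be ruled out — is the identity $a_c-b_c=n_c(\rho_0-\rho)/\big[(1+\rho)((n_c-1)-\rho)\big]$, which, together with $a_c+b_c>0$ (valid because $n_c\ge 2$ and $|\rho|,|\rho_0|<1$), shows $\operatorname{sign}(a_c^2-b_c^2)=\operatorname{sign}(\rho_0-\rho)$ for \emph{every} $c$. Combined with $\gamma_{\tau_c,0}^2\ge K_\gamma^{-2}>0$ (Assumption \ref{assu:PAR}) and $\mathbf{1}_c'A_c\mathbf{1}_c/n_c\ge\underline{K}_a>0$ (Assumption \ref{assu:quaA}), all summands share one sign and are bounded away from zero, so the sum is nonzero a.s. whenever $\rho\neq\rho_0$. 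Putting the three steps together: for $\theta\in\Theta$ with $\theta\neq\theta_0$, if $(f_1,\delta)\neq(f_{1,0},\delta_0)$ the linear block is nonzero, while if $(f_1,\delta)=(f_{1,0},\delta_0)$ (so necessarily $\rho\neq\rho_0$) the quadratic block is nonzero; hence $E[m_\epsilon(\theta)|\mathcal{Z}_n]\neq0$ a.s.
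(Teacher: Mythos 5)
Your proposal is correct and follows essentially the same two-step route as the paper: the linear block of $m_{\epsilon}$ pins down $(f_{1},\delta)$ via the rank condition $n^{-1}\left|E\left[z^{\prime}Q_{X}y_{2}|\mathcal{Z}_{n}\right]\right|>K_{y}$, and the quadratic block then pins down $\rho$ through the class-by-class sign argument. The only differences are presentational: you eliminate $\delta$ by explicit partialling out of $\underline{X}$ where the paper invokes the full rank of $\tilde{K}_{n}=n^{-1}H^{\prime}(\underline{X},\mu^{*})$ established in the proof of Lemma \ref{lem:general_id} (equivalent, since $E[y_{2}|\mathcal{Z}_{n}]$ and $\mu^{*}$ differ by a vector in the column space of $\underline{X}$), and you rederive inline, for $\Omega(\gamma)=I$, exactly the trace computation that the paper cites as Lemma \ref{lem:id_q}.
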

\begin{proof}[Proof of Lemma \ref{lem:linear_id-1}]
Observe that utilizing \eqref{eq:ydif} we have 
\begin{eqnarray*}
n^{-1}E\left[H'(I+\rho M)\epsilon^{+}\left(\rho,f_{1},\delta\right)\left|\mathcal{Z}_{n}\right.\right] & = & n^{-1}E\left[H'\left(y_{1}-f_{1}y_{2}-\underline{X}\delta\right)\left|\mathcal{Z}_{n}\right.\right]\\
 & = & n^{-1}E\left[H'\left(\mu^{\ast}(f_{1,0}-f_{1})+\underline{X}(\delta(f_{1},\rho_{0},\beta_{0})-\delta)\right)\left|\mathcal{Z}_{n}\right.\right].
\end{eqnarray*}
Let $\tilde{K}_{n}=n^{-1}H'\left(\underline{X},\mu^{*}\right)$, then
the linear moment condition can be expressed as
\begin{equation}
n^{-1}E\left[H'(I+\rho M)\epsilon^{+}\left(\rho,f_{1},\delta\right)\left|\mathcal{Z}_{n}\right.\right]=E\left[\tilde{K}_{n}\left|\mathcal{Z}_{n}\right.\right]\left(\begin{array}{c}
\delta(f_{1},\rho_{0},\beta_{0})-\delta\\
f_{1,0}-f_{1}
\end{array}\right)=0.\label{eq:K_CondExp3}
\end{equation}
Observe that $E\left[\tilde{K}_{n}\right]$ has full rank, as is readily
seen from the proof of Theorem \ref{thm:general_id} in light of (\ref{eq:K_pos})
with $V=I$. This implies that the only solution to \eqref{eq:K_CondExp3}
is $\delta=\delta_{0}$ and $f_{1}=f_{1,0}$. With $\mathcal{M}(\rho)=(I+\rho M)^{-1}(I+\rho_{0}M)$,
it now follows that 
\[
\epsilon^{+}\left(\rho,f_{1,0},\delta_{0}\right)=\mathcal{M}\left(\rho\right)(u_{1}-f_{1,0}u_{2})
\]
 such that the quadratic moment condition is 
\begin{equation}
E\left[\epsilon^{+}\left(\rho,f_{1,0},\delta_{0}\right)'A\epsilon^{+}\left(\rho,f_{1,0},\delta_{0}\right)\left|\mathcal{Z}_{n}\right.\right]=\tr\left(\mathcal{M}\left(\rho\right)A\mathcal{M}\left(\rho\right)\Omega_{0}\right)=0.\label{eq:MomQ_Simple}
\end{equation}
By Lemma \ref{lem:id_q} it follows that the only solution to \eqref{eq:MomQ_Simple}
is $\rho=\rho_{0}.$
\end{proof}
We next show that the initial estimators and the efficient estimators
are consistent. To establish consistency we impose assumptions related
to the convergence of sample averages and moments of sample averages
to well defined limits. These assumptions are adaptations of Assumptions
2, 3 and 5 in \citet{kuersteiner_dynamic_2020} to the present setting
which differs somewhat from the framework of sequential exogeneity
in that paper.

Let $i=n_{1}+\ldots n_{c-1}+r$ with $1\leq r\leq n_{c}$. Then, with
some abuse of notation, denote the $i$-th diagonal element of $\Omega(\gamma)=\diag_{c=1}^{C}\left(\gamma_{\tau_{c}}^{2}I_{c}\right)$
as $\gamma_{(i)}^{2}=\gamma_{\tau_{c}}^{2}$, noting that $\tau_{c}$
varies with $i$. Let the linear and quadratic moment vectors $m_{n}^{(l)}(\theta,\gamma)$
and $m_{n}^{(q)}(\theta,\gamma)$ be as defined in \eqref{eq:Moments_SI_Graham}.
Let $h_{i}=\left[h_{i1},...,h_{iq}\right]$ denote the $i$-th row
vector of $H$, and let $a_{ij}$ denote the the $(i,j)$-th element
of matrix $A$. Then note that with $u^{+}(\theta_{0},\gamma)=\Omega(\gamma)^{-1/2}(u_{1}-f_{1,0}u_{2})$,
the variance of $m_{n}^{(l)}(\theta_{0},\gamma)$ is 
\begin{equation}
\frac{1}{n}E\left[H^{\prime}\Omega(\gamma)^{-1}\Omega(\gamma_{0})H\right]=\frac{1}{n}\sum_{i=1}^{n}E\left[\frac{\gamma_{(i),0}^{2}}{\gamma_{(i)}^{2}}h_{i}^{\prime}h_{i}\right]
\end{equation}
and the variance of $m_{n}^{(q)}(\theta_{0},\gamma)$ is
\begin{equation}
\frac{2}{n}E\left[\tr\left(\Omega_{0}\Omega^{-1}A\Omega_{0}\Omega^{-1}A\right)\right]=\frac{2}{n}E\sum_{i=1}^{n}\sum_{j=1}^{n}\left[\frac{\gamma_{(i),0}^{2}}{\gamma_{(i)}^{2}}\frac{\gamma_{(j),0}^{2}}{\gamma_{(j)}^{2}}a_{ij}^{2}\right].
\end{equation}
The assumption below ensures that for any admissible $\gamma$, $\Var\left(m_{n}^{(l)}(\theta_{0},\gamma)\right)$
and $\Var\left(m_{n}^{(q)}(\theta_{0},\gamma)\right)$ and their corresponding
sample analogues converge, respectively, to finite positive definite
matrices.

\begin{assumption}
\label{assu:conv_Var_new} The following holds :
\[
n^{-1}\sum_{i=1}^{n}E\left[\frac{\gamma_{(i),0}^{2}}{\gamma_{(i)}^{2}}h_{i}^{\prime}h_{i}\right]\rightarrow V_{\gamma}^{h},\quad n^{-1}\sum_{i=1}^{n}\sum_{j=1}^{n}E\left[\frac{\gamma_{(i),0}^{2}}{\gamma_{(i)}^{2}}\frac{\gamma_{(j),0}^{2}}{\gamma_{(j)}^{2}}a_{ij}^{2}\right]\rightarrow V_{\gamma}^{a},
\]
where the elements of $V_{\gamma}^{h}$ and $V_{\gamma}^{a}$ are
finite , and 
\[
V_{n,\gamma}^{h}=n^{-1}\sum_{i=1}^{n}\frac{\gamma_{(i),0}^{2}}{\gamma_{(i)}^{2}}h_{i}^{\prime}h_{i}\overset{p}{\rightarrow}V_{\gamma}^{h},\quad V_{n,\gamma}^{a}=n^{-1}\sum_{i=1}^{n}\sum_{j=1}^{n}\frac{\gamma_{(i),0}^{2}}{\gamma_{(i)}^{2}}\frac{\gamma_{(j),0}^{2}}{\gamma_{(j)}^{2}}a_{ij}^{2}\overset{p}{\rightarrow}V_{\gamma}^{a}.
\]
The matrix $V_{\gamma}=\diag\left(V_{\gamma}^{h},2V_{\gamma}^{a}\right)$
is positive definite.
\end{assumption}
\begin{assumption}
\label{assu:conv_Mom} (i) Let $C\left(\theta,\gamma\right)$ be an
$n\times n$ matrix of the form $D$, $DP$, $DAD$, $DADP$, or $PDADP$,
where $D$ is an $n\times n$ positive diagonal matrix with elements
which are uniformly bounded and measurable w.r.t. $\mathcal{Z}_{n}$
and where $P=\diag_{c}\left(p_{c}I_{c}^{\ast}+q_{c}J_{c}^{\ast}\right)$
and where $p_{c}$ and $q_{c}$ are continuously differentiable functions
of the parameters $\theta$ and $\gamma$ as well as of variables
generating $\mathcal{Z}_{n}$, and $I_{c}^{\ast}$ and $J_{c}^{\ast}$
are defined in Appendix \ref{sec:Auxiliary-Lemmas}. Then
\[
\lim_{n\rightarrow\infty}\sup_{\theta,\gamma}\left|n^{-1}\Upsilon_{a}'C\left(\theta,\gamma\right)\Upsilon_{b}-n^{-1}E\left[\Upsilon_{a}'C\left(\theta,\gamma\right)\Upsilon_{b}\right]\right|=0\text{ a.s. }
\]
where $\Upsilon_{a},\Upsilon_{b}$ are selected from the set \textup{$\left\{ H,MH,\mu^{*},u_{1},u_{2}\right\} $.}

(ii) $\lim_{n\rightarrow\infty}\sup_{\theta,\gamma}\left\Vert E\left[n^{-1}\Upsilon_{a}'C\left(\theta,\gamma\right)\Upsilon_{b}\right]-\mathscr{U}_{a,b}\left(\theta,\gamma\right)\right\Vert =0$
where
\[
\mathscr{U}_{a,b}\left(\theta,\gamma\right)=\lim_{n\rightarrow\infty}E\left[n^{-1}\Upsilon_{a}'C\left(\theta,\gamma\right)\Upsilon_{b}\right]
\]
 is bounded and continuously differentiable in $\theta$ and $\gamma$.
\end{assumption}
In \citet{kuersteiner_efficient_2023} we show uniform convergence
under more primitive conditions in a closely related situation while
directly assuming it here in Assumption \ref{assu:conv_Mom} to save
space. Without further assumptions on cross-sectional dependence,
high level assumptions about the convergence of sample averages such
as $n^{-1}\underline{X}^{\prime}\underline{X}$ are required irrespective.
The next theorem establishes consistency of the GMM estimator.
\begin{thm}
\label{thm:Consistency}Let Assumptions \ref{assu:Potential_Outcomes-cov}-
\ref{assu:conv_Mom} be satisfied. Assume that $\bar{\gamma}_{n}\stackrel{p}{\rightarrow}\gamma$
for some sequence $\bar{\gamma}_{n}$ and some $\gamma\in\Gamma.$
Then,

(i) for $\hat{\theta}\left(\gamma\right)=\argmin_{\theta}Q_{n}\left(\theta,\gamma\right)$
with $Q_{n}\left(\theta,\gamma\right)$ defined in (\ref{eq:Def_Qn})
it follows that \textup{$\hat{\theta}\left(\bar{\gamma}_{n}\right)\stackrel{p}{\rightarrow}\theta_{0}$.}

(ii) for $\tilde{f}_{1}$ and $\tilde{\delta}$ defined in \eqref{eq:f_tilde}
and \eqref{eq:delta_tilde} and $\tilde{\rho}$ defined in \eqref{eq:rho_tilde}
it follows that $\tilde{f}_{1}\stackrel{p}{\rightarrow}f_{1,0}$,
$\tilde{\delta}\stackrel{p}{\rightarrow}\delta_{0}$ and $\tilde{\rho}\stackrel{p}{\rightarrow}\rho_{0}.$
\end{thm}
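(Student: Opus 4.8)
The plan is to treat $\hat{\theta}(\gamma)$ as a standard extremum estimator and establish (a) uniform convergence over $\theta\in\Theta$ of $Q_n(\cdot,\bar{\gamma}_n)$ to a nonstochastic continuous limit and (b) identifiable uniqueness of $\theta_0$ as the minimizer of that limit. For (a) the first task is to express $u^+(\theta,\gamma)$ from \eqref{eq:uplus-1-1} in terms of primitive quantities. Substituting the data generating process \eqref{eq:Graham-v1-compact}, and using $f_{2,0}=1$ and $\delta_0=\beta_{1,0}-f_{1,0}\beta_{2,0}$, one gets
\[
y_1-f_1y_2-\underline{X}\delta=(f_{1,0}-f_1)\mu^{*}+\underline{X}\bigl[(\delta_0-\delta)+(f_{1,0}-f_1)\beta_{2,0}\bigr]+(I+\rho_0M)u_1-f_1(I+\rho_0M)u_2,
\]
so that, using the closed form $(I+\rho M_c)^{-1}=p_cI_c^{*}+q_cJ_c^{*}$ available for the leave-out-mean matrix, $u^+(\theta,\gamma)$ is a linear combination — with coefficients that are bounded and continuously differentiable in $(\theta,\gamma)$ — of vectors $D(\gamma)P(\theta)\Upsilon$ with $\Upsilon\in\{\text{columns of }H,\mu^{*},u_1,u_2\}$, where $D$ is a positive diagonal matrix ($\Omega(\gamma)^{-1/2}$) and $P$ is block-diagonal of the $P$-type in Assumption \ref{assu:conv_Mom}(i). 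Expanding the criterion \eqref{eq:Def_Qn} then writes $Q_n(\theta,\gamma)$ as a finite sum of terms $(\text{bounded smooth coefficient})\times n^{-1}\Upsilon_a'C(\theta,\gamma)\Upsilon_b$, with $C$ of the forms $D,DP,DAD,DADP,PDADP$ and $\Upsilon_a,\Upsilon_b\in\{H,MH,\mu^{*},u_1,u_2\}$, together with the factors $(n^{-1}H'H)^{-1}$ and $(n^{-1}\tr A^2)^{-1}$.

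Given this representation, Assumption \ref{assu:conv_Mom}(i)--(ii) gives a.s. uniform (over $\theta,\gamma$) convergence of each $n^{-1}\Upsilon_a'C(\theta,\gamma)\Upsilon_b$ to a bounded, continuous limit, while Assumption \ref{assu:conv_Var_new} makes $(n^{-1}H'H)^{-1}$ and $(n^{-1}\tr A^2)^{-1}$ converge to positive definite limits; combining these, and using $\bar{\gamma}_n\stackrel{p}{\to}\gamma$ with continuity in $\gamma$, yields $\sup_{\theta\in\Theta}\lvert Q_n(\theta,\bar{\gamma}_n)-Q(\theta,\gamma)\rvert\stackrel{p}{\to}0$ with $Q(\cdot,\gamma)$ continuous. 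A short bookkeeping argument — splitting $u^+(\theta,\gamma)$ into its $\mathcal{Z}_n$-conditional mean (of order $n^{1/2}$ in norm, producing the leading terms) and a conditionally mean-zero remainder (contributing only $o(1)$ after the $n^{-1}$ normalization) — identifies the limit as $Q(\theta,\gamma)=g^{(l)}(\theta,\gamma)'(V^h)^{-1}g^{(l)}(\theta,\gamma)+\bigl(g^{(q)}(\theta,\gamma)\bigr)^2/(2V^a)$, where $g^{(l)}(\theta,\gamma)=\lim_n n^{-1}E[H'u^+(\theta,\gamma)\,|\,\mathcal{Z}_n]$, $g^{(q)}(\theta,\gamma)=\lim_n n^{-1}E[u^+(\theta,\gamma)'Au^+(\theta,\gamma)\,|\,\mathcal{Z}_n]$, and $V^h,V^a$ are the limits of $n^{-1}H'H$ and $n^{-1}\tr A^2$. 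By Lemma \ref{lem:general_id} (cp.\ \eqref{eq:Moments_SI_Graham_Theta0}--\eqref{eq:Moments_SI_Graham_Theta0-1}), which holds for \emph{every} $\gamma\in\Gamma$, $g^{(l)}(\theta_0,\gamma)=g^{(q)}(\theta_0,\gamma)=0$, so $Q(\theta_0,\gamma)=0$.

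For identifiable uniqueness I invoke Lemma \ref{lem:general_id} again: for every $\gamma\in\Gamma$ and $\theta\ne\theta_0$, at least one of $g^{(l)}(\theta,\gamma),g^{(q)}(\theta,\gamma)$ is nonzero — the linear component whenever $(f_1,\delta)\ne(f_{1,0},\delta_0)$, via the rank condition \eqref{eq:Rank_Cond-3}, and, when $(f_1,\delta)=(f_{1,0},\delta_0)$ but $\rho\ne\rho_0$, the quadratic component, since $\tr\bigl(A(I+\rho M)^{-1}(I+\rho_0M)\Omega_0(I+\rho_0M)(I+\rho M)^{-1}\bigr)\ne 0$ whereas $\tr(A\Omega_0)=0$ by $\diag(A)=0$ and diagonality of $\Omega_0$. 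Hence $Q(\theta,\gamma)>0$ for $\theta\ne\theta_0$; continuity of $Q(\cdot,\gamma)$ and compactness of $\Theta$ then give $\inf_{\theta:\lVert\theta-\theta_0\rVert\ge\varepsilon}Q(\theta,\gamma)>0$ for each $\varepsilon>0$. Uniform convergence plus this identifiable-uniqueness property yield $\hat{\theta}(\bar{\gamma}_n)\stackrel{p}{\to}\theta_0$ by the standard consistency theorem for extremum estimators (e.g.\ \citealp{newey_chapter_1994}; \citealp{potscher_dynamic_1997}). I expect this step to be the main obstacle: Lemma \ref{lem:general_id} only asserts the finite-$n$ conditional moment vector is nonzero, so one must verify that the \emph{nonstochastic limit} $Q$ is bounded away from zero outside neighborhoods of $\theta_0$, i.e.\ that the uniform lower bounds built into \eqref{eq:Rank_Cond-3} and Assumption \ref{assu:quaA}(ii) survive the passage to the limit uniformly in $(\rho,\gamma)$; the robustness to a misspecified $\gamma$ comes precisely from \eqref{eq:Moments_SI_Graham_Theta0}--\eqref{eq:Moments_SI_Graham_Theta0-1} and Lemma \ref{lem:general_id} holding for all $\gamma\in\Gamma$.

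For part (ii), the first-step estimator $(\tilde{f}_1,\tilde{\delta}')'$ in \eqref{eq:fdeltatilde} is linear: writing $W=(y_2,\underline{X})$ so that $y_1=W(f_{1,0},\delta_0')'+(I+\rho_0M)(u_1-f_{1,0}u_2)$, one has $(\tilde{f}_1,\tilde{\delta}')'-(f_{1,0},\delta_0')'=(n^{-1}H'W)^{-1}\,n^{-1}H'(I+\rho_0M)(u_1-f_{1,0}u_2)$. Since $u_t$ is conditionally mean zero given $\mathcal{Z}_n$ (Theorem \ref{thm:decomp}) and $M$ is symmetric, Assumption \ref{assu:conv_Mom}(i) with $\Upsilon_a\in\{H,MH\}$ and $\Upsilon_b\in\{u_1,u_2\}$ forces $n^{-1}H'(I+\rho_0M)(u_1-f_{1,0}u_2)\stackrel{p}{\to}0$, while $n^{-1}H'W$ converges to an invertible limit by Assumption \ref{assu:rank-new}(ii) and the rank condition $n^{-1}\lvert E[z'Q_Xy_2\,|\,\mathcal{Z}_n]\rvert>K_y>0$ (the $\rho=0$, $\gamma=\mathbf{1}$ specialization of \eqref{eq:Rank_Cond-3}); hence $\tilde{f}_1\stackrel{p}{\to}f_{1,0}$ and $\tilde{\delta}\stackrel{p}{\to}\delta_0$. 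Finally, inserting these into $\epsilon^+(\rho,\tilde{f}_1,\tilde{\delta})$ and noting that $y_1-\tilde{f}_1y_2-\underline{X}\tilde{\delta}$ equals $(I+\rho_0M)(u_1-f_{1,0}u_2)$ up to perturbations that are negligible uniformly in $\rho$ (by boundedness of $\underline{X},z$ and uniform control of $(I+\rho M)^{-1}$ on the compact $\rho$-set of Assumption \ref{assu:PAR}), the objective $(m_\epsilon^{q}(\rho))^2$ converges uniformly in $\rho$ to $\bigl(\bar{g}(\rho)\bigr)^2$ with $\bar{g}(\rho)=\lim_n n^{-1}\tr\bigl(A(I+\rho M)^{-1}(I+\rho_0M)\Omega_0(I+\rho_0M)(I+\rho M)^{-1}\bigr)$, which by the quadratic-moment identification argument of Lemma \ref{lem:linear_id-1} vanishes only at $\rho=\rho_0$; compactness of the $\rho$-space then gives $\tilde{\rho}\stackrel{p}{\to}\rho_0$.
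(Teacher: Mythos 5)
Your proposal is correct and follows essentially the same route as the paper: uniform convergence of the linear and quadratic moment components via Assumption \ref{assu:conv_Mom}, identifiable uniqueness of $\theta_0$ from Lemmas \ref{lem:general_id} and \ref{lem:id_q}, and then the standard extremum-estimator consistency theorem (Lemma 3.1 of \citealp{potscher_dynamic_1997}), with part (ii) handled by the same IV algebra and uniform quadratic-moment argument. The obstacle you flag — carrying the finite-$n$ identification over to the nonstochastic limit — is resolved in the paper exactly as you anticipate, via the uniform-in-$n$ lower bounds $\inf_{\gamma,\rho}\left|\det\left(E\left[K_{n}(\rho,\gamma)\right]\right)\right|\geq\left(\xi_{X}c_{e}\right)^{p_{x}}K_{y}>0$ from \eqref{eq:K_pos} and $\left|n^{-1}\tr(D(\rho))\right|\geq|\rho_{0}-\rho|\underline{K}_{a}c_{D}c_{\gamma}/\bar{n}$ from Lemma \ref{lem:id_q}, which pass to the limit by continuity.
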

\begin{proof}
To prove part (i) of the theorem we establish the conditions of Lemma
3.1 in \citet{potscher_dynamic_1997}. For the linear moment conditions
use \eqref{eq:ml}. We have
\begin{eqnarray*}
n^{-1}H^{\prime}u^{+}(\theta,\gamma) & = & \left(f_{1,0}-f_{1}\right)n^{-1}H^{\prime}V(\rho,\gamma)\mu^{\ast}\\
 & + & \frac{1}{n}H^{\prime}V(\rho,\gamma)\underline{X}\left(\delta(f_{1},\rho_{0},\beta_{0})-\delta\right)\\
 & + & \frac{1}{n}H^{\prime}\Omega(\gamma)^{-1/2}\mathcal{M}(\rho)(u_{1}-f_{1}u_{2})\\
 & \equiv & I+II+III.
\end{eqnarray*}

Consider $I$ and $II$. Using Assumption \ref{assu:conv_Mom} and
noting that $V(\rho,\gamma)$ defined in \eqref{eq:V} satisfies the
assumption on $C(\theta,\gamma)$, hence $n^{-1}H^{\prime}V(\rho,\gamma)\underline{X}$
converges uniformly to 
\begin{equation}
\lim_{n\rightarrow\infty}E\left[n^{-1}H^{\prime}V(\rho,\gamma)\underline{X}\right]\equiv\mathscr{U}_{H,x}\left(\theta,\gamma\right)\label{eq:U_HX-1}
\end{equation}
and $n^{-1}H^{\prime}V(\rho,\gamma)\mu^{\ast}$ converges uniformly
to 
\begin{equation}
\lim_{n\rightarrow\infty}E\left[n^{-1}H^{\prime}V\left(\rho,\gamma\right)\mu^{*}\right]\equiv\mathscr{U}_{H,\mu}\left(\theta,\gamma\right).\label{eq:U_Hm-1}
\end{equation}

Consider $III$. From Corollary \ref{cor:pIqJ} and Assumption \ref{assu:conv_Mom},
we have $sup_{\theta,\gamma}|III|\rightarrow_{p}0$.

Now consider the quadratic moment conditions. Let $\vartheta=(\delta^{\prime},f_{1})^{\prime}$,
$\vartheta_{0}=(\delta^{\prime}(f_{1},\rho_{0},\beta_{0}),f_{1,0})^{\prime}$
and $\Delta_{1}=[\underline{X},\mu^{\ast}]$. We point out that $\vartheta_{0}$
is a function of $f_{1}$. We omit this dependence for notational
convenience and because $f_{1}$ is pinned down by linear moment conditions
for $f_{1}$ directly, or in other words by equations related to the
second component of $\vartheta$. In light of (\ref{eq:uplus0-1})
we have
\begin{eqnarray}
n^{-1}u^{+}(\theta,\gamma)Au^{+}(\theta,\gamma) & = & n^{-1}\left(\vartheta-\vartheta_{0}\right)'\Delta_{1}'V(\rho,\gamma)'AV(\rho,\gamma)\Delta_{1}\left(\vartheta-\vartheta_{0}\right)\nonumber \\
 &  & -n^{-1}2\left(\vartheta-\vartheta_{0}\right)'\Delta_{1}'V(\rho,\gamma)'A\Omega(\gamma)^{-1/2}\mathcal{M}(\rho)\left(u_{1}-f_{1}u_{2}\right)\nonumber \\
 &  & +n^{-1}\left(u_{1}-f_{1}u_{2}\right)^{\prime}\mathcal{M}(\rho)^{\prime}\Omega(\gamma)^{-1/2}A\Omega(\gamma)^{-1/2}\mathcal{M}(\rho)\left(u_{1}-f_{1}u_{2}\right)\nonumber \\
 & = & I+II+III.\label{eq:UAU}
\end{eqnarray}
By Assumption \ref{assu:conv_Mom}, $n^{-1}\Delta_{1}'V(\rho,\gamma)'AV(\rho,\gamma)\Delta_{1}$
in $I$ converges uniformly to 
\begin{equation}
\lim_{n\rightarrow\infty}n^{-1}E\left[\Delta_{1}'V(\rho,\gamma)'AV(\rho,\gamma)\Delta_{1}\right]\equiv\mathscr{U}_{A,\Upsilon}\left(\theta,\gamma\right)\label{eq:U_AU}
\end{equation}
By Corollary \ref{cor:pIqJ} and Assumption \ref{assu:conv_Mom},
$II$ converges uniformly to 0. Consider $III$, let $B(\theta,\gamma)=\mathcal{M}(\rho)^{\prime}\Omega(\gamma)^{-1/2}A\Omega(\gamma)^{-1/2}\mathcal{M}(\rho)$,
then 
\begin{align*}
\frac{1}{n}(u_{1}-f_{1}u_{2})^{\prime}B(\theta,\gamma)(u_{1}-f_{1}u_{2}) & =\frac{1}{n}u_{1}^{\prime}B(\theta,\gamma)u_{1}+\frac{f_{1}^{2}}{n}u_{2}^{\prime}B(\theta,\gamma)u_{2}-\frac{2}{n}f_{1}u_{1}^{\prime}B(\theta,\gamma)u_{2}.
\end{align*}
Each term on the R.H.S. converges to its mean uniformly in $\theta$
and $\gamma$ under Corollary \ref{cor:pIqJ} and Assumption \ref{assu:conv_Mom},
and the mean is finite and continuously differentiable in $\theta,\gamma$
. Consequently, $\frac{1}{n}(u_{1}-f_{1}u_{2})^{\prime}B(\theta,\gamma)(u_{1}-f_{1}u_{2})$
converges uniformly to the limit of its mean 
\begin{align*}
E\left[\frac{1}{n}(u_{1}-f_{1}u_{2})^{\prime}B(\theta,\gamma)(u_{1}-f_{1}u_{2})\right] & =\tr\left[\frac{1}{n}B(\theta,\gamma)\left(\Sigma_{1}+f_{1}^{2}\Sigma_{2}\right)\right]\\
 & =\frac{1}{n}\tr\left(B(\theta,\gamma)\Omega_{0}\right)+\frac{1}{n}\left(f_{1}^{2}-f_{1,0}^{2}\right)\tr\left(B(\theta,\gamma)\Sigma_{2}\right),
\end{align*}
observing that $\Omega_{0}=\Sigma_{1}+f_{1,0}^{2}\Sigma_{2}$. Next
observe that $B(\theta,\gamma)=\mathcal{M}(\rho)^{\prime}\Omega(\gamma)^{-1/2}A\Omega(\gamma)^{-1/2}\mathcal{M}(\rho)$
falls into the class of $C(\theta,\gamma)$ matrices defined in Assumption
\ref{assu:conv_Mom}(i). Thus by Assumption \ref{assu:conv_Mom},
there exist bounded and continuously differentiable functions in $\theta$
and $\gamma$ denoted by $\mathscr{U}_{A,\Omega}(\theta,\gamma)$
and $\mathscr{U}_{A,\Sigma}(\theta,\gamma)$ such that
\begin{equation}
\lim_{n\rightarrow\infty}\sup_{\theta,\gamma}\left\Vert E\left[\frac{1}{n}\tr\left(B(\theta,\gamma)\Omega_{0}\right)\right]-\mathscr{U}_{A,\Omega}(\theta,\gamma)\right\Vert =0\label{eq:Uao}
\end{equation}
 and 
\begin{equation}
\lim_{n\rightarrow\infty}\sup_{\theta,\gamma}\left\Vert E\left[\frac{1}{n}\tr\left(B(\theta,\gamma)\Sigma_{2}\right)\right]-\mathscr{U}_{A,\Sigma}(\theta,\gamma)\right\Vert =0\label{eq:uas}
\end{equation}

From uniform convergence and \eqref{eq:U_HX-1}, \eqref{eq:U_AU},
\eqref{eq:Uao} and \eqref{eq:uas} it follows that 
\begin{align}
\lim_{n\rightarrow\infty}n^{-1/2}E\left[m_{n}\left(\theta,\gamma\right)\right] & =\left[\begin{array}{c}
\left(\mathscr{U}_{H,x}\left(\theta,\gamma\right),\mathscr{U}_{H,\mu}\left(\theta,\gamma\right)\right)\left(\begin{array}{c}
\delta(f_{1},\rho_{0},\beta_{0})-\delta\\
f_{1,0}-f_{1}
\end{array}\right)\\
\left(\vartheta-\vartheta_{0}\right)'\mathscr{U}_{A,\Upsilon}\left(\theta,\gamma\right)\left(\vartheta-\vartheta_{0}\right)+\mathscr{U}_{A,\Omega}\left(\theta,\gamma\right)+\left(f_{1}^{2}-f_{1,0}^{2}\right)\mathscr{U}_{A,\Sigma}(\theta,\gamma)
\end{array}\right]\label{eq:Mom_lim}\\
 & \equiv\left[\begin{array}{c}
\mathfrak{m}_{l}\left(\theta,\gamma\right)\\
\mathfrak{m}_{q}\left(\theta,\gamma\right)
\end{array}\right]\equiv\mathfrak{m}\left(\theta,\gamma\right).\nonumber 
\end{align}
By uniform convergence of $n^{-1/2}m_{n}\left(\theta,\gamma\right)$
it also follows that for $\bar{\gamma}_{n}\rightarrow\gamma_{*}$
\begin{equation}
\sup_{\theta}\left\Vert n^{-1/2}m_{n}\left(\theta,\bar{\gamma}_{n}\right)-\mathfrak{m}\left(\theta,\gamma_{*}\right)\right\Vert \rightarrow0\,i.p.\label{eq:unifcon}
\end{equation}
We proceed to show, building on our results for finite $n$, that
$\theta_{0}$ is also the unique solution vector of the limiting moment
condition $\mathfrak{m}\left(\theta,\gamma_{*}\right)=0$. Observe
that from (\ref{eq:Kn}), (\ref{eq:U_HX-1}) and (\ref{eq:U_Hm-1})
we have $\left[\mathscr{U}_{H,x}\left(\theta,\gamma_{*}\right),\mathscr{U}_{H,\mu}\left(\theta,\gamma_{*}\right)\right]=\lim_{n\rightarrow\infty}E\left[K_{n}(\rho,\gamma_{*})\right],$
and by (\ref{eq:K_pos}) we have $\inf_{\gamma\in\Gamma,\rho}\left|\det\left(E\left[K_{n}(\rho,\gamma)\right]\right)\right|\geq C_{K}>0$
for some $C_{K}>0$ and all $n.$ Since $\det(.)$ is a continuous
function it follows that for any $\theta,\gamma_{*}$ we have 
\begin{align*}
\left|\det\left(\mathscr{U}_{H,x}\left(\theta,\gamma_{*}\right),\mathscr{U}_{H,\mu}\left(\theta,\gamma_{*}\right)\right)\right| & =\left|\det\left(\lim_{n\rightarrow\infty}E\left[K_{n}(\rho,\gamma_{*})\right]\right)\right|\\
 & =\lim_{n\rightarrow\infty}\left|\det\left(E\left[K_{n}(\rho,\gamma_{*})\right]\right)\right|\geq C_{K}>0.
\end{align*}
 In light of this for any $\rho$ and $\gamma_{*}$
\[
\mathfrak{m}_{l}\left(\theta,\gamma_{*}\right)=\mathfrak{m}_{l}\left(\rho,f_{1},\delta,\gamma_{*}\right)=0
\]
 if and only if $f_{1}=f_{1,0}$ and $\delta=\delta_{0}$. Note that
$\delta(f_{1},\rho_{0},\beta_{0})$ equals $\delta_{0}$ when $f_{1}=f_{1,0}$.
In all , $\mathfrak{m}_{l}\left(\theta,\gamma_{*}\right)=0$ if and
only if $f_{1}=f_{1,0}$ and $\delta=\delta_{0}$.

To show that $\mathfrak{m}\left(\theta,\gamma_{*}\right)=0$ if and
only if $\theta=(\rho,f_{1},\delta')'=\theta_{0}=(\rho_{0},f_{1,0},\delta_{0}^{\prime})^{\prime}$
it thus suffices to show that $\mathfrak{m}_{q}\left(\rho,f_{1,0},\delta_{0},\gamma_{*}\right)=0$
if and only if $\rho=\rho_{0}$. Recall that 
\begin{align*}
\mathfrak{m}_{q}\left(\rho,f_{1,0},\delta_{0},\gamma_{*}\right) & =\mathscr{U}_{A,\Omega}\left(\rho,f_{1,0},\delta_{0},\gamma_{*}\right)\\
 & =\lim_{n\rightarrow\infty}\frac{1}{n}\tr\left(\mathcal{M}\left(\rho\right)\Omega(\gamma_{\ast})^{-1/2}A\Omega(\gamma_{\ast})^{-1/2}\mathcal{M}\left(\rho\right)\Omega_{0}\right).
\end{align*}
By Lemma \ref{lem:id_q}, $\mathfrak{m}_{q}\left(\rho,f_{1,0},\delta_{0},\gamma_{*}\right)=0$
if and only if $\rho=\rho_{0}$, and consequently $\mathfrak{m}\left(\theta,\gamma_{*}\right)=0$
if and only if $\theta=\theta_{0}$. Assumption \ref{assu:conv_Mom}
implies that $\mathfrak{m}\left(\theta,\gamma_{*}\right)$ and thus
$\mathfrak{m}\left(\theta,\gamma_{*}\right)^{\prime}\Xi^{-1}\mathfrak{m}\left(\theta,\gamma_{*}\right)$
is continuous. Since the parameter space is compact it follows that
$\theta_{0}=[$$\rho_{0},f_{1,0},\delta_{0}^{\prime}]^{\prime}$ is
the identifiable unique minimizer of $\mathfrak{m}\left(\theta,\gamma_{*}\right)^{\prime}\Xi^{-1}\mathfrak{m}\left(\theta,\gamma_{*}\right)$;
cp. the discussion after Definition 3.1 in \citet{potscher_dynamic_1997}.
Consistency now follows immediately from Lemma 3.1 in \citet{potscher_dynamic_1997}.

For part (ii) of the theorem we first consider $\tilde{f}_{1}$. Observing
that in light of \eqref{eq:lin_IV} we have $y_{1}=f_{1,0}y_{2}+\underline{X}\delta_{0}+(I+\rho_{0}M)(u_{1}-f_{1,0}u_{2})$
and that $Q_{X}\underline{X}=0$, it follows from \eqref{eq:f_tilde}
that 
\begin{eqnarray}
\tilde{f}_{1} & = & \frac{z'Q_{X}y_{1}}{z'Q_{X}y_{2}}=f_{1,0}+\frac{n^{-1}z'Q_{X}(I+\rho_{0}M)(u_{1}-f_{1,0}u_{2})}{n^{-1}z'Q_{X}y_{2}}.\label{eq:f_tilde_2}
\end{eqnarray}
 It follows that $n^{-1}z'Q_{X}(I+\rho_{0}M)(u_{1}-f_{1,0}u_{2})\rightarrow_{p}0$
since $n^{-1}z'Q_{X}(I+\rho_{0}M)u_{t}\rightarrow_{p}0$ for $t=1,2$
by Assumption \ref{assu:conv_Mom}.

Observe that for $\rho=0$,$\gamma=(1,...,1)^{\prime}$, $V\left(\rho,\gamma\right)=I_{n}$
and $Q_{V^{1/2}X}=Q_{X}.$ Using \eqref{eq:Graham-v1-compact} implies
that 
\[
n^{-1}z'Q_{X}y_{2}=n^{-1}z'Q_{X}\mu^{*}+n^{-1}z'Q_{X}(I+\rho_{0}M)u_{2}
\]
and $E\left[n^{-1}z'Q_{X}y_{2}|\mathcal{Z}_{n}\right]=E\left[n^{-1}z'Q_{X}\mu^{*}|\mathcal{Z}_{n}\right].$
As argued above, $n^{-1}z'Q_{X}(I+\rho_{0}M)u_{2}\rightarrow_{p}0.$
By Assumption \ref{assu:conv_Mom} it follows that $n^{-1}\left|z'Q_{X}\mu^{*}-E\left[z'Q_{X}\mu^{*}\right]\right|\rightarrow0$
a.s. and $n^{-1}\left|E\left[z'Q_{X}\mu^{*}\right]-\mathcal{\mathscr{U}}_{z\mu}\right|\rightarrow0$
a.s. Assumption \ref{assu:rank-new}(iii) implies that $n^{-1}\left|E\left[z'Q_{X}\mu^{*}\right]\right|\geq K_{y}>0$
which in turn implies that $\left|\mathcal{\mathscr{U}}_{z\mu}\right|\geq K_{y}>0$
. This shows that the denominator on the r.h.s. of \eqref{eq:f_tilde_2}
converges to a positive constant. Having shown that the numerator
converges to zero it follows that $\tilde{f}_{1}\rightarrow_{p}f_{1,0}.$

Then, turning to 
\begin{eqnarray*}
\tilde{\delta} & = & \left(\underline{X}'\underline{X}\right)^{-1}\underline{X}'\left(y_{1}-y_{2}\tilde{f}_{1}\right)\\
 & = & \left(\underline{X}'\underline{X}\right)^{-1}\underline{X}'\left(\underline{X}\delta_{0}+\left(I_{n}+\rho_{0}M\right)\left(u_{1}-u_{2}f_{1,0}\right)\right)\\
 & + & \left(\underline{X}'\underline{X}\right)^{-1}\underline{X}'y_{2}\left(f_{1,0}-\tilde{f}_{1}\right)\\
 & \equiv & I+II
\end{eqnarray*}
where we used \eqref{eq:lin_IV} for $y_{1}-f_{1,0}y_{2}$. By Assumptions
\ref{assu:rank-new} and \ref{assu:conv_Mom} as well as \eqref{eq:Graham-v1-compact}
it follows that $n^{-1}\underline{X}'\underline{X}\rightarrow_{p}\mathscr{U}_{XX}$
where $\mathscr{U}_{XX}$ is positive definite and symmetric and $n^{-1}\underline{X}'y_{2}\rightarrow_{p}\mathscr{U}_{Xy}$
where $\mathscr{U}_{Xy}$ is bounded. This implies that $II\rightarrow_{p}0.$
For $I$ recall from above that $n^{-1}\underline{X}'\left(I_{n}+\rho_{0}M\right)\left(u_{1}-u_{2}f_{1,0}\right)\rightarrow_{p}0.$
This shows that $\tilde{\delta}\rightarrow_{p}\delta_{0}.$

Now consider the quadratic moment conditions. Let $\Upsilon_{1}=\left[X,y_{2}\right]$,
$\tilde{\vartheta}=\left(\tilde{\delta}^{\prime},\tilde{f}_{1}\right)^{\prime}$,
and observe that utilizing \eqref{eq:lin_IV} we have 
\begin{eqnarray*}
\epsilon^{+}\left(\rho,\tilde{f}_{1},\tilde{\delta}\right) & = & \left(I+\rho M\right)^{-1}[y_{1}-y_{2}\tilde{f}_{1}-\underline{X}\tilde{\delta}]\\
 & = & \left(I+\rho M\right)^{-1}[(I+\rho_{0}M)(u_{1}-f_{1,0}u_{2})-\Upsilon_{1}(\tilde{\vartheta}-\vartheta_{0})],
\end{eqnarray*}
 where, with slight abuse of notation relative to earlier definitions,
we now understand $\vartheta_{0}=(\delta_{0}^{\prime},f_{1,0})^{\prime}.$
Consequently
\begin{eqnarray*}
n^{-1}\epsilon^{+}(\rho,\tilde{f}_{1},\tilde{\delta})'A\epsilon^{+}(\rho,\tilde{f}_{1},\tilde{\delta}) & = & n^{-1}\left(\tilde{\vartheta}-\vartheta_{0}\right)'\Upsilon_{1}'\left(I+\rho M\right)^{-1}A\left(I+\rho M\right)^{-1}\Upsilon_{1}\left(\tilde{\vartheta}-\vartheta_{0}\right)\\
 &  & -n^{-1}2\left(\tilde{\vartheta}-\vartheta_{0}\right)'\Upsilon_{1}'\left(I+\rho M\right)^{-1}A\mathcal{M}\left(\rho\right)\left(u_{1}-f_{1,0}u_{2}\right)\\
 &  & +n^{-1}\left(u_{1}-f_{1,0}u_{2}\right)'\mathcal{M}\left(\rho\right)A\mathcal{M}\left(\rho\right)\left(u_{1}-f_{1,0}u_{2}\right)\\
 & = & I+II+III.
\end{eqnarray*}
By Assumption \ref{assu:conv_Mom} it follows that $\sup_{\rho}\left|n^{-1}\Upsilon_{1}'(I+\rho M)^{-1}A(I+\rho M)^{-1}\Upsilon_{1}-\mathcal{\mathscr{U_{\mathnormal{\Upsilon_{1}\Upsilon_{1}}}\left(\rho\right))}}\right|\rightarrow0$
a.s. for some uniformly bounded $U_{\mathnormal{\Upsilon_{1}\Upsilon_{1}}}\left(\rho\right)$.
Thus $I$ converges to zero uniformly in probability. Uniform convergence
in probability to zero for the second term, $II,$ follows by Assumption
\ref{assu:conv_Mom}.

Finally, using Lemma \ref{lem:id_q} and Assumption \ref{assu:conv_Mom}
it follows that 
\[
\sup_{\rho}\left|n^{-1}\left(u_{1}-f_{1,0}u_{2}\right)'\mathcal{M}\left(\rho\right)A\mathcal{M}\left(\rho\right)\left(u_{1}-f_{1,0}u_{2}\right)-n^{-1}\tr\left(\mathcal{M}\left(\rho\right)A\mathcal{M}\left(\rho\right)\Omega_{0}\right)\right|\rightarrow0\text{ a.s.}
\]
Then the consistency of $\tilde{\rho}$ follows from Lemma 3.1 in
\citet{potscher_dynamic_1997}, and analogous arguments as those used
for part (i) of the theorem.
\end{proof}

\section{\label{sec:AsyNormal}Asymptotic Normality of Parameter Estimators}

In the following we develop an asymptotically justified inference
theory for our proposed GMM estimator. For that purpose we first discuss
consistent estimators for the variance parameters $\gamma=(\gamma_{1},...,\gamma_{J})'$,
which are treated as nuisance parameters in the objective function
of our GMM estimator.
\begin{assumption}
\label{assu:GA1} Let $\omega_{j}\equiv\lim{}_{n\rightarrow\infty}n^{-1}\sum_{c=1}^{C}n_{c}1(\tau_{c}=j)$
and assume that $\omega_{j}$ exists and $\omega_{j}>0$ for all $j=1,...,J$.
\end{assumption}
Assumption \ref{assu:GA1} guarantees that there are enough data to
estimate $\gamma_{j}^{2}$ for each $j$ by requiring that the fraction
$\omega_{j}$ of all students in classrooms of type $j$ is asymptotically
non-negligible for all types of classrooms.

The first step inefficient but consistent estimators $\tilde{f}_{1}$,
$\tilde{\delta}$ and $\tilde{\rho}$ defined in \eqref{eq:f_tilde},
\eqref{eq:delta_tilde} and \eqref{eq:rho_tilde} can be used to obtain
consistent estimates of the variance parameters $\gamma.$ Form the
residuals 
\begin{equation}
\tilde{\epsilon}=\epsilon^{+}(\tilde{\theta})=\left(I_{n}+\tilde{\rho}M\right)^{-1}\left(y_{1}-y_{2}\tilde{f}_{1}-\underline{X}\tilde{\delta}\right)\label{eq:epstilde}
\end{equation}
and organize the residuals by classrooms $c$ as $\tilde{\epsilon}=\left(\tilde{\epsilon}_{1}^{\prime},....,\tilde{\epsilon}_{C}^{\prime}\right)^{\prime}$.
Then construct the variance estimators 
\begin{equation}
\hat{\gamma}_{j}^{2}=\left(N_{j}-p_{x}-1\right)^{-1}\sum_{c=1}^{C}\tilde{\epsilon}_{c}^{\prime}\tilde{\epsilon}_{c}1\left(\tau_{c}=j\right),\label{eq:varsigmahat}
\end{equation}
where $N_{j}=\sum_{c=1}^{C}n_{c}1\left(\tau_{c}=j\right)$ is the
number of students in classrooms of type $j$, and where $p_{x}$
denotes the number of columns in $\underline{X}$. Now set
\begin{equation}
\hat{\gamma}=\left(\hat{\gamma}_{1},\ldots,\hat{\gamma}_{J}\right)'.\label{eq:gam_hat}
\end{equation}

Efficient GMM estimators can now be formed by plugging $\hat{\gamma}$
into $u^{+}(\theta,\gamma)$ defined in \eqref{eq:uplus-1-1}. Recall
that $\Omega(\gamma)=\diag_{c=1}^{C}(\gamma_{\tau_{c}}^{2}I_{c}).$
In the following we denote with $K_{u}$ a generic finite constant
(which is taken, w.o.l.o.g., to be greater than one) and which is
invariant over $t=1,2$, $i=1,\ldots,n$, $n\geq1$. The central limit
theorem in Proposition 3 of \citet{kuersteiner_dynamic_2020} can
be applied to obtain distributional approximations for estimators
based on $u^{+}$. The central limit theorem of \citet{kuersteiner_dynamic_2020}
requires that $E\left[\left\vert u_{it}u_{is}\right\vert ^{1+\eta_{u}}|\mathcal{Z}_{n}\right]\leqslant K_{u}$
with $\eta_{u}>0$ for $t,s=1,2\text{ }$ and all $\ensuremath{i}$.
This condition is implied by Assumption \ref{assu:Proportional},
as Lemma \ref{lem:u_Msbl_MomB} demonstrates. When setting 
\begin{equation}
\tilde{\gamma}=\left(1,1,...,1\right)^{\prime}\label{eq:gam_tild}
\end{equation}
 it follows that $\Omega(\tilde{\gamma})=I_{n}$. The feasible efficient
GMM estimator for $\theta_{0}=\left(\rho_{0},f_{1,0},\delta_{0}'\right)'$
is defined as $\hat{\theta}\left(\hat{\gamma}\right)=\argmin_{\theta}Q_{n}\left(\theta,\hat{\gamma}\right)$
where $Q_{n}\left(\theta,\gamma\right)$ is given in \eqref{eq:Def_Qn}
and the inefficient counterpart $\hat{\theta}\left(\tilde{\gamma}\right)$
is defined in an analogous way. The next theorem establishes the limiting
distribution of $\hat{\theta}\left(\tilde{\gamma}\right)$ and $\hat{\theta}\left(\hat{\gamma}\right).$
The limiting distribution of $\hat{\theta}\left(\tilde{\gamma}\right)$
for $\tilde{\gamma}$ defined in \eqref{eq:gam_tild} provides a distributional
result for an inefficient GMM estimator. The limiting distribution
of the first step estimators defined in \eqref{eq:f_tilde}, \eqref{eq:delta_tilde}
and \eqref{eq:rho_tilde} can be deduced from the theorem by working
out the limiting distribution of $m_{\epsilon}\left(\theta\right).$
We omit the details.
\begin{thm}
\label{thm:CLT1}Let Assumptions \ref{assu:Potential_Outcomes-cov}-\ref{assu:GA1}
hold. Let $\hat{\gamma}$ be as defined in \eqref{eq:gam_hat} and
$\tilde{\gamma}$ as defined in \eqref{eq:gam_tild}. For $\gamma_{n}\rightarrow_{p}\gamma$
let $\plim_{n\rightarrow\infty}n^{-1/2}\partial m_{n}\left(\hat{\theta}\left(\gamma_{n}\right),\gamma_{n}\right)/\partial\theta=G(\gamma)$
and $\Xi=\plim_{n\rightarrow\infty}\Xi_{n}$, where $\Xi_{n}$ is
defined in \eqref{eq:Xi_n}. Then $\hat{\gamma}$$\rightarrow_{p}\gamma_{0}$,
and for $\gamma_{n}=\hat{\gamma}$ or $\gamma_{n}=\tilde{\gamma}$,
\[
\sqrt{n}\left(\hat{\theta}\left(\gamma_{n}\right)-\theta_{0}\right)\rightarrow_{d}N(0,\Psi_{\gamma})
\]
where the limiting variance covariance matrix $\Psi_{\gamma}$ has
the form 
\[
\Psi_{\gamma}=\left(G'\Xi^{-1}G\right)^{-1}G'\Xi^{-1}V_{\gamma}\Xi^{-1}G\left(G'\Xi^{-1}G\right)^{-1}.
\]
with $V_{\gamma}$ defined in Assumption \ref{assu:conv_Var_new}
and $G$ is the shorthand notation for $G(\gamma)$. When $\gamma_{n}=\hat{\gamma}$
then \textup{$V_{\gamma}=\Xi$ and $\Psi_{\gamma}=\left(G(\gamma_{0})^{\prime}\Xi^{-1}G(\gamma_{0})\right)^{-1}$.}
\end{thm}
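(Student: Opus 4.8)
The plan is to run the usual two-step GMM argument: first establish consistency of the variance estimator, $\hat{\gamma}\to_{p}\gamma_{0}$, and then carry out a mean-value expansion of the first-order condition of $\hat{\theta}(\gamma_{n})$; the only nonstandard ingredient is showing that replacing the fixed $\gamma$ by the plug-in $\gamma_{n}$ does not alter the limiting distribution. For the first step, note that \eqref{eq:Graham-v1-compact} and \eqref{eq:lin_IV} give $y_{1}-f_{1,0}y_{2}-\underline{X}\delta_{0}=(I+\rho_{0}M)(u_{1}-f_{1,0}u_{2})$, hence $\epsilon^{+}(\theta_{0})=u_{1}-f_{1,0}u_{2}$, whose $\mathcal{Z}_{n}$-conditional variance is $\Omega(\gamma_{0})$ by Assumption \ref{assu:CondVar} and Theorem \ref{thm:decomp}. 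The residual $\tilde{\epsilon}=\epsilon^{+}(\tilde{\theta})$ differs from $u_{1}-f_{1,0}u_{2}$ by $(\tilde{\theta}-\theta_{0})$ times $\mathcal{Z}_{n}$-measurable, uniformly bounded matrices applied to $\{y_{2},\underline{X},u_{1}-f_{1,0}u_{2}\}$, where $(I+\rho M)^{-1}$ is uniformly bounded because $\tilde{\rho}\to_{p}\rho_{0}\in[-K_{\rho},K_{\rho}]$ with $K_{\rho}<1$ and the rows of $M_{c}$ sum to one. Combining Theorem \ref{thm:Consistency}(ii) with the uniform sample-moment convergence of Assumption \ref{assu:conv_Mom}, together with $E[(u_{c1}-f_{1,0}u_{c2})^{\prime}(u_{c1}-f_{1,0}u_{c2})\,|\,\mathcal{Z}_{n}]=n_{c}\gamma_{\tau_{c},0}^{2}$ and Assumption \ref{assu:GA1}, I obtain $n^{-1}\sum_{c:\tau_{c}=j}\tilde{\epsilon}_{c}^{\prime}\tilde{\epsilon}_{c}\to_{p}\omega_{j}\gamma_{j,0}^{2}$; dividing by $(N_{j}-p_{x}-1)/n\to\omega_{j}$ yields $\hat{\gamma}_{j}^{2}\to_{p}\gamma_{j,0}^{2}$. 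For $\gamma_{n}=\tilde{\gamma}$ nothing needs to be shown.

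For the second step, Theorem \ref{thm:Consistency}(i) gives $\hat{\theta}(\gamma_{n})\to_{p}\theta_{0}$, which lies in the interior of $\Theta$, so with probability approaching one the first-order condition reads $\hat{G}_{n}^{\prime}\hat{\Xi}_{n}^{-1}m_{n}(\hat{\theta}(\gamma_{n}),\gamma_{n})=0$, with $\hat{G}_{n}=n^{-1/2}\partial m_{n}(\hat{\theta}(\gamma_{n}),\gamma_{n})/\partial\theta$. A componentwise mean-value expansion of $m_{n}(\hat{\theta}(\gamma_{n}),\gamma_{n})$ about $\theta_{0}$ then gives
\[
\sqrt{n}\bigl(\hat{\theta}(\gamma_{n})-\theta_{0}\bigr)=-\bigl(\hat{G}_{n}^{\prime}\hat{\Xi}_{n}^{-1}\bar{G}_{n}\bigr)^{-1}\hat{G}_{n}^{\prime}\hat{\Xi}_{n}^{-1}m_{n}(\theta_{0},\gamma_{n}),
\]
with $\bar{G}_{n}=n^{-1/2}\partial m_{n}(\bar{\theta},\gamma_{n})/\partial\theta$ evaluated at mean values $\bar{\theta}$ between $\hat{\theta}(\gamma_{n})$ and $\theta_{0}$. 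The $\theta$-derivatives of $m_{n}$ are quadratic and bilinear forms in $\{H,MH,\mu^{*},u_{1},u_{2}\}$---using $y_{2}=\mu^{*}+\underline{X}\beta_{2}+(I+\rho_{0}M)u_{2}$---with coefficient matrices precisely of the types $D,DP,DAD,DADP,PDADP$ covered by Assumption \ref{assu:conv_Mom}, where $P$ encodes $(I+\rho M)^{-1}$ (which for the leave-out-mean $M_{c}$ lies in the span of $I_{c}^{*}$ and $J_{c}^{*}$). Hence $\hat{G}_{n},\bar{G}_{n}\to_{p}G(\gamma)$ by that uniform convergence, continuity and $\hat{\theta}(\gamma_{n}),\bar{\theta}\to_{p}\theta_{0}$, $\gamma_{n}\to_{p}\gamma$, while $\hat{\Xi}_{n}=n^{-1}\diag\{H^{\prime}H,2\tr(A^{2})\}\to_{p}\Xi$; invertibility of $G(\gamma)^{\prime}\Xi^{-1}G(\gamma)$ follows from the rank conditions of Assumption \ref{assu:rank-new}. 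It remains to show $m_{n}(\theta_{0},\gamma_{n})\to_{d}N(0,V_{\gamma})$, which I expect to be the crux.

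For fixed $\gamma$, $m_{n}(\theta_{0},\gamma)$ is a linear-plus-quadratic form in $u^{+}=\Omega(\gamma_{0})^{-1/2}(u_{1}-f_{1,0}u_{2})$: its linear block is $n^{-1/2}H^{\prime}\Omega(\gamma)^{-1/2}\Omega(\gamma_{0})^{1/2}u^{+}$ and its quadratic block is $n^{-1/2}u^{+\prime}\Omega(\gamma_{0})^{1/2}\Omega(\gamma)^{-1/2}A\Omega(\gamma)^{-1/2}\Omega(\gamma_{0})^{1/2}u^{+}$, whose coefficient matrix is $\mathcal{Z}_{n}$-measurable with zero diagonal since the $\Omega(\cdot)$ are diagonal and $\diag(A)=0$. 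By Theorem \ref{thm:decomp} the $u_{it}$ are conditionally mean zero and conditionally uncorrelated across $i$ and across $t$; by Assumption \ref{assu:Proportional} and Lemma \ref{lem:u_Msbl_MomB} the bound $E[|u_{it}u_{is}|^{1+\eta_{u}}\,|\,\mathcal{Z}_{n}]\le K_{u}$ holds; and the coefficient matrices obey the uniform row-sum and boundedness conditions of Assumptions \ref{assu:quaA}, \ref{assu:rank-new} and \ref{assu:size}. Proposition 3 of \citet{kuersteiner_dynamic_2020} then delivers $m_{n}(\theta_{0},\gamma)\to_{d}N(0,V_{\gamma})$, where $V_{\gamma}=\diag(V_{\gamma}^{h},2V_{\gamma}^{a})$ has as entries the limits in Assumption \ref{assu:conv_Var_new}, the linear-quadratic cross term vanishing as in \eqref{eq:Xi_n}.

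Finally, for the plug-in step I would expand $m_{n}(\theta_{0},\gamma_{n})=m_{n}(\theta_{0},\gamma)+\bigl(\partial m_{n}(\theta_{0},\bar{\gamma}_{n})/\partial\gamma\bigr)(\gamma_{n}-\gamma)$ at mean values $\bar{\gamma}_{n}$ between $\gamma_{n}$ and $\gamma$. The decisive fact, from Lemma \ref{lem:general_id}, is that $E[m_{n}(\theta_{0},\gamma)\,|\,\mathcal{Z}_{n}]=0$ for \emph{every} $\gamma\in\Gamma$, so differentiating in $\gamma$ preserves the conditional-mean-zero structure of the linear block and the zero diagonal of the quadratic block (a diagonal $\Omega(\gamma)^{-1}$ stays diagonal under differentiation). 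This makes $\partial m_{n}(\theta_{0},\bar{\gamma}_{n})/\partial\gamma=O_{p}(1)$---a Chebyshev bound, uniform over the compact $\Gamma$ by the moment bounds---so the correction term is $O_{p}(1)\cdot o_{p}(1)=o_{p}(1)$ and $m_{n}(\theta_{0},\gamma_{n})$ inherits the limit $N(0,V_{\gamma})$. Substituting into the displayed expansion and applying the continuous mapping theorem yields $\sqrt{n}(\hat{\theta}(\gamma_{n})-\theta_{0})\to_{d}N(0,\Psi_{\gamma})$ with the stated sandwich form. When $\gamma_{n}=\hat{\gamma}$, so $\gamma=\gamma_{0}$, the ratios $\gamma_{(i),0}^{2}/\gamma_{(i)}^{2}$ in Assumption \ref{assu:conv_Var_new} all equal one, hence $V_{\gamma_{0}}=\diag(V_{\gamma_{0}}^{h},2V_{\gamma_{0}}^{a})=\plim\hat{\Xi}_{n}=\Xi$ and the sandwich collapses to $\Psi_{\gamma_{0}}=(G(\gamma_{0})^{\prime}\Xi^{-1}G(\gamma_{0}))^{-1}$. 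The main obstacle is precisely this last block---verifying the hypotheses of the joint linear-quadratic central limit theorem and establishing, via the orthogonality $E[m_{n}(\theta_{0},\gamma)\,|\,\mathcal{Z}_{n}]=0$ for all $\gamma$, that estimation of $\gamma$ is asymptotically innocuous.
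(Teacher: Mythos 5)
Your proof is correct and follows essentially the same route as the paper: consistency of $\hat{\gamma}$ via the decomposition of $\tilde{\epsilon}$ around $u_{1}-f_{1,0}u_{2}$ combined with Theorem \ref{thm:Consistency}(ii) and Assumption \ref{assu:conv_Mom}, a mean-value expansion of the first-order condition with $\hat{G}_{n},\bar{G}_{n}\rightarrow_{p}G$ and $\hat{\Xi}_{n}\rightarrow_{p}\Xi$, and the linear-quadratic central limit theorem of Proposition 3 in \citet{kuersteiner_dynamic_2020} applied to $m_{n}(\theta_{0},\cdot)$. The only place you are more explicit than the paper is the plug-in step, where you control $m_{n}(\theta_{0},\gamma_{n})-m_{n}(\theta_{0},\gamma)$ by a mean-value expansion in $\gamma$ using $E\left[m_{n}(\theta_{0},\gamma)|\mathcal{Z}_{n}\right]=0$ for all $\gamma\in\Gamma$; the paper passes from $\gamma_{n}$ to the fixed limit $\gamma_{*}$ without spelling this out, so your addition is a legitimate (and welcome) filling-in of a detail rather than a different argument.
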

In Appendix \ref{sec:variance} we provide explicit formulas for the
derivatives $G$. Together with the expressions for $\Xi_{n}$ it
is then easy to construct sample based estimators $\hat{G}$ and $\hat{\Xi}$.
Letting $\hat{\Psi}_{\gamma}=\left(\hat{G}^{\prime}\hat{\Xi}^{-1}\hat{G}\right)^{-1}$
it follows from a standard Slutsky argument that $\hat{\Psi}_{\gamma}^{-1/2}\sqrt{n}\left(\hat{\theta}\left(\hat{\gamma}\right)-\theta_{0}\right)\rightarrow_{d}N\left(0,I\right).$

\begin{proof}[Proof of Theorem \ref{thm:CLT1}]

First show consistency of $\hat{\gamma}$. Define $\Upsilon_{1}=\left[\underline{X},y_{2}\right]$
and $\vartheta=\left(\delta^{\prime},f_{1}\right)^{\prime}$. Observing
from \eqref{eq:lin_IV} that $y_{1}-f_{1,0}y_{2}-\underline{X}\delta_{0}=(I+\rho_{0}M)(u_{1}-f_{1,0}u_{2})$
we have 
\begin{eqnarray*}
\tilde{\epsilon} & = & \left(I_{n}+\tilde{\rho}M\right)^{-1}\left(y_{1}-\Upsilon_{1}\tilde{\vartheta}\right)\\
 & = & \left(u_{1}-u_{2}f_{1,0}\right)+\left(\mathcal{M}\left(\tilde{\rho}\right)-I\right)\left(u_{1}-u_{2}f_{1,0}\right)\\
 & + & (I+\tilde{\rho}M)\Upsilon_{1}\left(\vartheta_{0}-\tilde{\vartheta}\right)
\end{eqnarray*}
such that 
\begin{eqnarray*}
\sum_{c=1}^{C}\tilde{\epsilon}_{c}'\tilde{\epsilon}_{c}1(\tau_{c}=j) & = & \sum_{c=1}^{C}\left(u_{c1}-u_{c2}f_{1,0}\right)'\left(u_{c1}-u_{c2}f_{1,0}\right)1(\tau_{c}=j)\\
 &  & +\sum_{c=1}^{C}\left(u_{c1}-u_{c2}f_{1,0}\right)'\left(\mathcal{M}_{c}\left(\tilde{\rho}\right)-I\right)^{2}\left(u_{c1}-u_{c2}f_{1,0}\right)1(\tau_{c}=j)\\
 &  & +\sum_{c=1}^{C}\left(\vartheta_{0}-\tilde{\vartheta}\right)'\Upsilon_{1,c}'\left(I_{c}+\tilde{\rho}M_{c}^{\prime}\right)^{-1}\left(I_{c}+\tilde{\rho}M_{c}\right)^{-1}\Upsilon_{1,c}\left(\vartheta_{0}-\tilde{\vartheta}\right)1(\tau_{c}=j)\\
 &  & +2\sum_{c=1}^{C}\left(u_{c1}-u_{c2}f_{1,0}\right)'\left(\mathcal{M}_{c}\left(\tilde{\rho}\right)-I_{c}\right)\left(u_{c1}-u_{c2}f_{1,0}\right)1(\tau_{c}=j)\\
 &  & +2\sum_{c=1}^{C}\left(u_{c1}-u_{c2}f_{1,0}\right)'\left(\mathcal{M}_{c}\left(\tilde{\rho}\right)-I_{c}\right)'\left(I_{n}+\tilde{\rho}M_{c}\right)^{-1}\Upsilon_{1,c}\left(\vartheta_{0}-\tilde{\vartheta}\right)1(\tau_{c}=j)\\
 &  & +2\sum_{c=1}^{C}\left(u_{c1}-u_{c2}f_{1,0}\right)'\left(I_{c}+\tilde{\rho}M_{c}\right)^{-1}\Upsilon_{1,c}\left(\vartheta_{0}-\tilde{\vartheta}\right)1(\tau_{c}=j)\\
 & = & I+II+III+IV+V+VI.
\end{eqnarray*}
Now consider the components of $\left(N_{j}-p_{x}-1\right)^{-1}\sum_{c=1}^{C}\tilde{\epsilon}_{c}'\tilde{\epsilon}_{c}1(\tau_{c}=j)$.
For $I$, note that 
\begin{eqnarray*}
n^{-1}E\left[I\right] & = & n^{-1}\sum_{c=1}^{C}E\left[E\left[\left(u_{c1}-u_{c2}f_{1,0}\right)'\left(u_{c1}-u_{c2}f_{1,0}\right)|\mathcal{Z}_{n}^{*}\right]1(\tau_{c}=j)\right]\\
 & = & \left(\sigma_{1,0}^{2}+f_{1,0}^{2}\sigma_{2,0}^{2}\right)n^{-1}\sum_{c=1}^{C}n_{c}\rho_{\tau_{c},0}^{2}1(\tau_{c}=j)\\
 & = & \left(\sigma_{1,0}^{2}+f_{1,0}^{2}\sigma_{2,0}^{2}\right)\rho_{j,0}^{2}n^{-1}\sum_{c=1}^{C}n_{c}1(\tau_{c}=j)\\
 & = & \gamma_{j,0}^{2}(N_{j}/n)\rightarrow\gamma_{j,0}^{2}\omega_{j}
\end{eqnarray*}
recalling that $N_{j}/n\rightarrow\omega_{j}$ such that 
\[
\left(N_{j}-p_{x}-1\right)^{-1}E\left[I\right]=[n/(N_{j}-p_{x}-1)]n^{-1}E\left[I\right]\rightarrow\left(\sigma_{1,0}^{2}+f_{1,0}^{2}\sigma_{2,0}^{2}\right)\rho_{j,0}^{2}=\gamma_{j,0}^{2}.
\]
 By Assumption \ref{assu:conv_Mom} it follows that $n^{-1}I\rightarrow\gamma_{j,0}^{2}$
a.s. for $j=1,...,J$. For $II,$ $IV$ and $V$ consider 
\begin{align*}
\left(\mathcal{M}\left(\tilde{\rho}\right)-I\right) & =\left(I+\tilde{\rho}M\right)^{-1}M\left(\rho_{0}-\tilde{\rho}\right)\\
 & =\diag_{c}\left(-\frac{1}{\left(n_{c}-1-\tilde{\rho}\right)}I_{c}^{*}+\frac{1}{1+\tilde{\rho}}J_{c}^{*}\right)\left(\rho_{0}-\tilde{\rho}\right)
\end{align*}
 such that for $II$ 
\[
\left(\mathcal{M}\left(\tilde{\rho}\right)-I\right)^{2}=\diag_{c}\left(\frac{1}{\left(n_{c}-1-\tilde{\rho}\right)^{2}}I_{c}^{*}+\frac{1}{\left(1+\tilde{\rho}\right)^{2}}J_{c}^{*}\right)\left(\rho_{0}-\tilde{\rho}\right)^{2}.
\]
Then using the notation 
\[
C\left(\rho\right)=\diag_{c}\left(\frac{1(\tau_{c}=j)}{\left(n_{c}-1-\rho\right)^{2}}I_{c}^{*}+\frac{1(\tau_{c}=j)}{\left(1+\rho\right)^{2}}J_{c}^{*}\right)
\]
 in line with Assumption \ref{assu:conv_Mom} shows that 
\begin{eqnarray*}
n^{-1}II & = & \left(\rho_{0}-\tilde{\rho}\right)^{2}n^{-1}\left(u_{1}-u_{2}f_{1,0}\right)'C\left(\tilde{\rho}\right)\left(u_{1}-u_{2}f_{1,0}\right)\\
 & \leq & \left(\rho_{0}-\tilde{\rho}\right)^{2}\sup_{\rho}\left|n^{-1}\left(u_{1}-u_{2}f_{1,0}\right)'C\left(\rho\right)\left(u_{1}-u_{2}f_{1,0}\right)-E\left[n^{-1}\tr\left(C\left(\rho\right)\Omega_{0}\right)\right]\right|\\
 &  & +\left(\rho_{0}-\tilde{\rho}\right)^{2}\sup_{\rho}\left|\left[E\left[n^{-1}\tr\left(C\left(\rho\right)\Omega_{0}\right)\right]-\lim_{n\rightarrow\infty}E\left[n^{-1}\tr\left(C\left(\rho\right)\Omega_{0}\right)\right]\right]\right|\\
 &  & +\left(\rho_{0}-\tilde{\rho}\right)^{2}\sup_{\rho}\left|\lim_{n\rightarrow\infty}E\left[n^{-1}\tr\left(C\left(\rho\right)\Omega_{0}\right)\right]\right|\\
 & = & o_{p}\left(1\right),
\end{eqnarray*}
observing that $\rho_{0}-\tilde{\rho}=o_{p}\left(1\right)$ by Theorem
\ref{thm:Consistency}(ii), that the first and second supremum converges
to zero a.s. by Assumption \ref{assu:conv_Mom}(i) and that $\lim_{n}E\left[n^{-1}\tr\left(C\left(\rho\right)\Omega_{0}\right)\right]$
is uniformly bounded by Assumption \ref{assu:conv_Mom}(ii). Arguments
analogous to these show that the remaining terms of $n^{-1}\sum_{c=1}^{C}\tilde{\epsilon}_{c}'\tilde{\epsilon}_{c}1(\tau_{c}=j)$
converge to zero in probability. The details are omitted. This establishes
that 
\[
\hat{\gamma}_{j}^{2}=\left(N_{j}-p_{x}-1\right)^{-1}\sum_{c=1}^{C}\tilde{\epsilon}_{c}'\tilde{\epsilon}_{c}1(\tau_{c}=j)\rightarrow_{p}\left(\sigma_{0,1}^{2}+f_{1,0}^{2}\sigma_{0,2}^{2}\right)\rho_{j,0}^{2}=\gamma_{j,0}^{2}.
\]

To obtain the limiting distribution note that $Q_{n}\left(\theta,\gamma\right)$
defined in \eqref{eq:Def_Qn} can be written as 
\[
Q_{n}\left(\theta,\gamma_{n}\right)=\frac{1}{n}m_{n}\left(\theta,\gamma_{n}\right)'\hat{\Xi}_{n}^{-1}m_{n}\left(\theta,\gamma_{n}\right)
\]
where $\hat{\Xi}_{n}$ is defined after \eqref{eq:Xi_n}. Using a
mean value expansion of $m_{n}(\hat{\theta},\gamma_{n})$ around $\theta_{0}$
we obtain 
\begin{align*}
0 & =\frac{\sqrt{n}}{2}\frac{\partial Q_{n}\left(\hat{\theta},\gamma_{n}\right)}{\partial\theta}=\frac{\partial m_{n}(\hat{\theta},\gamma_{n})^{\prime}}{\sqrt{n}\partial\theta}\hat{\Xi}_{n}^{-1}m_{n}(\hat{\theta},\gamma_{n})\\
 & =\frac{\partial m_{n}(\hat{\theta},\gamma_{n})^{\prime}}{\sqrt{n}\partial\theta}\hat{\Xi}_{n}^{-1}m_{n}(\theta_{0},\gamma_{n})+\frac{\partial m_{n}(\hat{\theta},\gamma_{n})^{\prime}}{\sqrt{n}\partial\theta}\hat{\Xi}_{n}^{-1}\frac{\partial m_{n}(\bar{\theta},\gamma_{n})}{\sqrt{n}\partial\theta}\sqrt{n}\left(\hat{\theta}-\theta_{0}\right)
\end{align*}
where $\left\Vert \bar{\theta}-\theta_{0}\right\Vert \leq\left\Vert \hat{\theta}-\theta_{0}\right\Vert $
and with some abuse of notation it is understood that the between
value $\bar{\theta}$ differs among rows of $m_{n}\left(.\right).$
Using the explicit derivatives given in Appendix \ref{sec:variance}
it follows from Assumption \ref{assu:conv_Mom}(i) and (ii) that for
any sequence $\gamma_{n}$ with $\gamma_{n}\rightarrow_{p}\gamma_{*}$
we have $n^{-1/2}\partial m_{n}\left(\hat{\theta},\gamma_{n}\right)/\partial\theta\rightarrow_{p}G(\gamma_{\ast})$
and $n^{-1/2}\partial m_{n}\left(\bar{\theta},\gamma_{n}\right)/\partial\theta\rightarrow_{p}G(\gamma_{\ast})$
where $G(\gamma_{\ast})$ is fixed matrix with full column rank and
$\hat{\Xi}_{n}\rightarrow_{p}\Xi$ with $\Xi$ a full-rank square
matrix. This implies that 
\[
\sqrt{n}\left(\hat{\theta}-\theta_{0}\right)=-\left(\left(G'\Xi^{-1}G\right)^{-1}G'\Xi^{-1}+o_{p}\left(1\right)\right)m_{n}\left(\theta_{0},\gamma_{n}\right).
\]
 It remains to establish the limiting distribution of $m_{n}\left(\theta_{0},\gamma_{n}\right)$
where $\gamma_{n}$ is a possibly random sequence with $\gamma_{n}\rightarrow_{p}\gamma_{*}.$
To apply the results of \citet{kuersteiner_dynamic_2020} note that
the moment vector $m_{n}\left(.\right)$ can be represented as the
sum over the linear and quadratic terms by defining $h_{i}^{*}=\left(h_{i1},\ldots,h_{iq},0\right)$
and $a_{ij}^{\ast}=\left(0,....,0,a_{ij}\right)$, where $(h_{i1},\ldots,h_{iq})$
denotes the $i$-th row of $H$ and $a_{ij}$ the $i,j$-th element
of $A.$ Then, $m_{n}\left(\theta,\gamma\right)=n^{-1/2}\sum_{i=1}^{n}h_{i}^{*\prime}u_{i}^{+}(\theta,\gamma)+n^{-1/2}\sum_{j,i=1}^{n}a_{ij}^{*\prime}u_{i}^{+}(\theta,\gamma)u_{j}^{+}(\theta,\gamma)$
has the same form as Equation (23) in \citet{kuersteiner_dynamic_2020}.
It is then sufficient to check that the conditions for Proposition
3 in \citet{kuersteiner_dynamic_2020} hold. Proposition 3 requires
that Assumptions 1-3 of \citet{kuersteiner_dynamic_2020} hold. We
abbreviate assumptions in \citet{kuersteiner_dynamic_2020} with A-KP
to avoid confusion with assumptions in this paper.

The moment bounds of A-KP 1(i) hold by Assumption \ref{assu:Proportional}
and Theorem \ref{thm:decomp} which implies that $u_{it}$ and $\mu_{i}^{\ast}$
have bounded moments, and Assumption \ref{assu:rank-new}(i) which
bounds the moments of $x_{it}$ and $z_{it}.$ A-KP 1(ii), Eq. (15),
holds by Lemma \ref{lem:u_Msbl_MomB}, A-KP 1(ii), Eq. (16) holds
by Assumption \ref{assu:CondVar} and A-KP 1(ii), Eq. (17) holds by
Lemma \ref{lem:u_Msbl_MomB}, and the parametric restrictions on the
conditional cross-sectional variances hold by Assumption \ref{assu:CondVar}.

Now turning to A-KP 2, the $\sigma$-field that corresponds to $\mathcal{B}_{n,t}$
is $\mathcal{A}_{n,\left(t-1\right)n}$ defined in \eqref{eq:A_n,(t-1)n+i-1}.
Since $\mathcal{Z}_{n}^{*}\subset\mathcal{A}_{n,\left(t-1\right)n}$,
the fact that $h_{i}$ and $a_{ij}$ are assumed measurable w.r.t
to $\mathcal{Z}_{n}\subset\mathcal{Z}_{n}^{*}$ and elements of $h_{i}$
and the row sums of the absolute elements (and powers thereof) of
$A$ are uniformly bounded by Assumptions \ref{assu:quaA}(i)(iii)
and \ref{assu:rank-new}, A-KP 2(i) and (ii) hold. We also normalize
$f_{2}=1$ and restrict the parameter space of $f_{1}$ to a compact
interval in Assumption \ref{assu:PAR} such that A-KP 2(iii) holds.

Finally, consider A-KP 3 which is identical to Assumption \ref{assu:conv_Mom}.
In summary, all conditions of Proposition 3 in \citet{kuersteiner_dynamic_2020}
hold. It follows that $m_{n}\left(\theta_{0},\gamma\right)\rightarrow^{d}V_{\gamma}^{1/2}\xi\text{ }$
where $\xi\sim N\left(0,I_{q+1}\right)$ and $V_{\gamma}$ is the
limit of the variance-covariance matrix of $m_{n}(\theta_{0},\gamma)$
defined in \eqref{eq:Moments_SI_Graham}. For any conformable, matrix
$\mathscr{A}$ it follows that $\mathscr{A}m_{n}\left(\theta_{0},\gamma\right)\rightarrow^{d}\left(\mathscr{A}V\mathscr{A}^{\prime}\right)^{1/2}\xi$.
By choosing $\mathscr{A}=\left(G'\Xi^{-1}G\right)^{-1}G'\Xi^{-1}$
the claim of the theorem is proven.
\end{proof}

\section{Auxiliary Lemmas\label{sec:Auxiliary-Lemmas}}

To prove Theorems \ref{thm:Consistency} and \ref{thm:CLT1}, we establish
two preliminary lemmas. We also provide proofs for Lemma \ref{lem:id_nocov}
and Corollary \ref{cor:TS_ID} in this section. Note that many matrices
in this paper, e.g., $M$, $\Omega(\gamma)$, $\Sigma_{t}$ defined
in Assumption \ref{assu:CondVar} can be written in the form of $\diag_{c=1}^{C}\left(p_{c}I_{c}^{\ast}+q_{c}J_{c}^{\ast}\right),$where
$I_{c}^{\ast}=I_{c}-\mathbf{1}_{c}\mathbf{1}_{c}^{\prime}/n_{c}$,
$J_{c}^{\ast}=\mathbf{1}_{c}\mathbf{1}_{c}^{\prime}/n_{c}$ are the
residual projection matrix and the projection matrix onto $\mathbf{1}_{c}$
respectively. To see this note that 
\begin{align}
M & =\diag_{c=1}^{C}\left(-\frac{1}{n_{c}-1}I_{c}^{\ast}+J_{c}^{\ast}\right),\label{eq:M_IJ_Rep}\\
I+\rho M & =\diag_{c=1}^{C}\left(\frac{n_{c}-1-\rho}{n_{c}-1}I_{c}^{\ast}+\left(1+\rho\right)J_{c}^{\ast}\right),\label{eq:I-rhoM_IJ_Rep}\\
\Omega(\gamma) & =\diag_{c=1}^{C}\left(\gamma_{\tau_{c}}^{2}I_{c}\right)=\diag_{c=1}^{C}\left(\gamma_{\tau_{c}}^{2}I_{c}^{\ast}+\gamma_{\tau_{c}}^{2}J_{c}^{\ast}\right),\label{eq:Omega_IJ_Rep}\\
\Sigma_{t} & =\sigma_{0,t}^{2}\diag_{c=1}^{C}\left(\rho_{0,\tau_{c}}^{2}I_{c}\right)=\diag_{c=1}^{C}\left(\rho_{0,\tau_{c}}^{2}\sigma_{0,t}^{2}I_{c}^{\ast}+\rho_{0,\tau_{c}}^{2}\sigma_{0,t}^{2}J_{c}^{\ast}\right),\,t=1,2.\label{eq:Sigma_IJ_Rep}
\end{align}

The following lemma summarizes some well understood key properties
of these matrices for the convenience of the reader. Proofs are straight
forward and omitted, and can e.g., be found in \citet{kuersteiner_efficient_2023}.
\begin{lem}
\label{lem:mat_I-J} The matrices $I_{c}^{\ast}$, $J_{c}^{\ast}$
and \textup{$pI_{c}^{\ast}+qJ_{c}^{\ast}$} have the following properties:

(i) The matrices $I_{c}^{\ast}$, $J_{c}^{\ast}$ are symmetric, idempotent,
orthogonal, $I_{c}^{*}+J_{c}^{*}=I$ and $I_{c}^{\ast}\mathbf{1}_{c}=0$,
$J_{c}^{\ast}\mathbf{1}_{c}=\mathbf{1}_{c}$.

(ii) $\det\left(pI_{c}^{\ast}+qJ_{c}^{\ast}\right)=p^{(n_{c}-1)}q$.

(iii) The eigenvalues of $pI_{c}^{\ast}+qJ_{c}^{\ast}$ are $p$ ($n_{c}-1$
times) and $q$ (once).

(iv) For $p_{1}$, \textup{$p_{2}$, $q_{1}$} and $q_{2}$ arbitrary
constants, $(p_{1}I_{c}^{\ast}+q_{1}J_{c}^{\ast})(p_{2}I_{c}^{\ast}+q_{2}J_{c}^{\ast})=(p_{1}p_{2}I_{c}^{\ast}+q_{1}q_{2}J_{c}^{\ast})$,
and thus multiplication of matrices of the form $pI_{c}^{\ast}+qJ_{c}^{\ast}$
is commutative and associative.

(v) If $p\neq0,$ $q\neq0$, $(pI_{c}^{\ast}+qJ_{c}^{\ast})$ is invertible
with $(pI_{c}^{\ast}+qJ_{c}^{\ast})^{-1}=p^{-1}I_{c}^{\ast}+q^{-1}J_{c}^{\ast}$.

(vi) $\tr(pI_{c}^{\ast}+qJ_{c}^{\ast})=p(n_{c}-1)+q$.

(vii) Both $I_{c}^{\ast}$ and $J_{c}^{\ast}$ have row and column
sum uniformly bounded in absolute value. When $p$ and $q$ are bounded
in absolute value, the row and column sums of $pI_{c}^{\ast}+qJ_{c}^{\ast}$
are uniformly bounded in absolute value.
\end{lem}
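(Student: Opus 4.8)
The plan is to derive all seven parts from the defining relations $I_c^{\ast}=I_c-\mathbf{1}_c\mathbf{1}_c^{\prime}/n_c$, $J_c^{\ast}=\mathbf{1}_c\mathbf{1}_c^{\prime}/n_c$ together with the elementary identity $\mathbf{1}_c^{\prime}\mathbf{1}_c=n_c$. First I would prove (i). Symmetry of $J_c^{\ast}$ is immediate since it is an outer product, and symmetry of $I_c^{\ast}=I_c-J_c^{\ast}$ follows; the relation $I_c^{\ast}+J_c^{\ast}=I_c$ is the definition. Then $J_c^{\ast}J_c^{\ast}=\mathbf{1}_c(\mathbf{1}_c^{\prime}\mathbf{1}_c)\mathbf{1}_c^{\prime}/n_c^{2}=J_c^{\ast}$ gives idempotency of $J_c^{\ast}$; $I_c^{\ast}J_c^{\ast}=(I_c-J_c^{\ast})J_c^{\ast}=J_c^{\ast}-J_c^{\ast}=0$ gives orthogonality; and $I_c^{\ast}I_c^{\ast}=(I_c-J_c^{\ast})^{2}=I_c-2J_c^{\ast}+J_c^{\ast}=I_c^{\ast}$ gives idempotency of $I_c^{\ast}$. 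Finally $I_c^{\ast}\mathbf{1}_c=\mathbf{1}_c-\mathbf{1}_c(\mathbf{1}_c^{\prime}\mathbf{1}_c)/n_c=0$ and $J_c^{\ast}\mathbf{1}_c=\mathbf{1}_c$. In particular $I_c^{\ast}$ and $J_c^{\ast}$ are the orthogonal projections onto the orthogonal complement of $\operatorname{span}\{\mathbf{1}_c\}$ and onto $\operatorname{span}\{\mathbf{1}_c\}$, respectively, which is the structural fact driving the remaining parts.

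Given (i), part (iv) is a one-line expansion: $(p_1I_c^{\ast}+q_1J_c^{\ast})(p_2I_c^{\ast}+q_2J_c^{\ast})=p_1p_2I_c^{\ast}+q_1q_2J_c^{\ast}$, the cross terms vanishing by orthogonality and the square terms collapsing by idempotency; commutativity and associativity of products of such matrices follow at once. Part (v) then follows by checking, via (iv), that $(pI_c^{\ast}+qJ_c^{\ast})(p^{-1}I_c^{\ast}+q^{-1}J_c^{\ast})=I_c^{\ast}+J_c^{\ast}=I_c$ when $p,q\neq0$. For the spectral claims (iii), (ii) and (vi) I would use that every vector in $\operatorname{span}\{\mathbf{1}_c\}$ is an eigenvector of $I_c^{\ast}$ for $0$ and of $J_c^{\ast}$ for $1$, while every vector orthogonal to $\mathbf{1}_c$ is an eigenvector of $I_c^{\ast}$ for $1$ and of $J_c^{\ast}$ for $0$; since these two subspaces are orthogonal and together span $\mathbb{R}^{n_c}$, the commuting matrices $I_c^{\ast}$ and $J_c^{\ast}$ are simultaneously diagonalizable, so $pI_c^{\ast}+qJ_c^{\ast}$ has eigenvalue $p$ with multiplicity $n_c-1$ and eigenvalue $q$ with multiplicity $1$. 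This is (iii); taking the product of the eigenvalues gives (ii) and taking their sum gives (vi). Alternatively (vi) is immediate from $\tr(J_c^{\ast})=\mathbf{1}_c^{\prime}\mathbf{1}_c/n_c=1$ and $\tr(I_c^{\ast})=\tr(I_c)-\tr(J_c^{\ast})=n_c-1$.

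For (vii) I would argue entrywise: $J_c^{\ast}$ has all entries equal to $1/n_c$, so each of its absolute row and column sums is $1$; $I_c^{\ast}$ has diagonal entries $1-1/n_c$ and off-diagonal entries $-1/n_c$, so each absolute row and column sum equals $(1-1/n_c)+(n_c-1)/n_c=2-2/n_c\le2$, uniformly in $n_c\ge2$. For $pI_c^{\ast}+qJ_c^{\ast}$ with $|p|,|q|\le K$, the triangle inequality bounds each absolute row and column sum by $|p|(2-2/n_c)+|q|\le3K$. I do not expect a genuine obstacle in this lemma; the only points requiring a little care are that in (vii) the \emph{signed} row sums of $I_c^{\ast}$ vanish, so the assertion must be phrased in terms of \emph{absolute} row and column sums, and that in (ii)--(iii) one should record $I_c^{\ast}J_c^{\ast}=J_c^{\ast}I_c^{\ast}=0$ before invoking simultaneous diagonalization of $pI_c^{\ast}+qJ_c^{\ast}$.
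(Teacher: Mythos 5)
Your proof is correct and complete; the paper itself omits the proof of this lemma as ``straightforward'' (referring to Kuersteiner, Prucha, and Zeng, 2023), and your argument is exactly the standard verification one would expect, deriving everything from the projection identities in (i) and the simultaneous diagonalization of $I_{c}^{\ast}$ and $J_{c}^{\ast}$. Your explicit observation that the claim in (vii) must be read in terms of absolute row and column sums (since the signed row sums of $I_{c}^{\ast}$ vanish) is a correct and worthwhile clarification.
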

\begin{cor}
\label{cor:pIqJ}Suppose Assumptions \ref{assu:PAR}, \ref{assu:quaA},
\ref{assu:size}, \ref{assu:rank-new}, \ref{assu:CondVar}, and \ref{assu:GA1}
hold. Then $I+\rho M$, $\Omega(\gamma)$, and $\Sigma_{t}$ are non-singular.
If $S(\theta,\gamma)$ is an element or product matrix of elements
from 
\begin{equation}
\left\{ M,(I+\rho M),(I+\rho M)^{-1},\Omega(\gamma),\Omega(\gamma)^{-1},\Sigma_{t}^{-1},\Sigma_{t}\right\} ,\label{eq:Sset}
\end{equation}
then $S(\theta,\gamma)$ can be written in the form of $\diag_{c=1}\{p_{c}I_{c}^{\ast}+q_{c}J_{c}^{\ast}\}$with
$p_{c}$, $q_{c}$ continuously differentiable in $\theta$ and $\gamma$.
The matrix $S(\theta,\gamma)A$ is measurable w.r.t $\mathcal{Z}_{n}$
. The row and column sums of $S(\theta,\gamma)$ and $S(\theta,\gamma)A$
are uniformly bounded in absolute value. .
\end{cor}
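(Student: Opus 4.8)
The plan is to push every assertion through the scalar algebra of Lemma \ref{lem:mat_I-J}, applied block by block. By \eqref{eq:M_IJ_Rep}--\eqref{eq:Sigma_IJ_Rep}, each of $M$, $I+\rho M$, $\Omega(\gamma)$ and $\Sigma_{t}$ is already of the canonical block form $\diag_{c=1}^{C}(p_{c}I_{c}^{\ast}+q_{c}J_{c}^{\ast})$, with $(p_{c},q_{c})$ an explicit elementary function of $\rho$, of $\gamma_{\tau_{c}}^{2}$, of the variance components appearing in Assumption \ref{assu:CondVar}, and of the integer $n_{c}$. Hence the single-element case is immediate, and the work splits into: (i) non-singularity of $I+\rho M$, $\Omega(\gamma)$ and $\Sigma_{t}$, so that the listed inverses make sense; (ii) preservation of the canonical form under products and inversion, with coefficients that stay continuously differentiable and uniformly bounded; and (iii) the measurability and row/column-sum statements.

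For (i) I would use Lemma \ref{lem:mat_I-J}(ii): $\det(p_{c}I_{c}^{\ast}+q_{c}J_{c}^{\ast})=p_{c}^{n_{c}-1}q_{c}$, so it suffices that $p_{c}\neq 0$ and $q_{c}\neq 0$ for every $c$. For $I+\rho M$ these are $(n_{c}-1-\rho)/(n_{c}-1)$ and $1+\rho$; with $n_{c}\geq 2$ from Assumption \ref{assu:size} and $|\rho|\leq K_{\rho}<1$ from Assumption \ref{assu:PAR}, both are in fact bounded away from $0$ uniformly in $c$. For $\Omega(\gamma)$ the coefficient is $\gamma_{\tau_{c}}^{2}\in[K_{\gamma}^{-2},K_{\gamma}^{2}]$ by Assumption \ref{assu:PAR}, and for $\Sigma_{t}$ it is the product of two variance components, each bounded away from $0$ and $\infty$ by Assumption \ref{assu:CondVar}. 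Consequently $I+\rho M$, $\Omega(\gamma)$ and $\Sigma_{t}$ are non-singular, and by Lemma \ref{lem:mat_I-J}(v) their inverses are again canonical with reciprocated coefficients.

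For (ii), Lemma \ref{lem:mat_I-J}(iv) says a product of canonical matrices is canonical with coefficients equal to the products of the individual $p_{c}$'s and of the individual $q_{c}$'s; combined with the reciprocation rule this shows any $S(\theta,\gamma)$ built from the set in \eqref{eq:Sset} is $\diag_{c}(p_{c}I_{c}^{\ast}+q_{c}J_{c}^{\ast})$ with $(p_{c},q_{c})$ a finite product of generator coefficients and their reciprocals. Every denominator that can appear --- $n_{c}-1-\rho$, $1+\rho$, $\gamma_{\tau_{c}}^{2}$, and the variance products --- is bounded away from $0$ over the compact set $\Theta\times\Gamma$ by the bounds just invoked, so $(\theta,\gamma)\mapsto(p_{c},q_{c})$ is a ratio of polynomials with non-vanishing denominator and is therefore continuously differentiable; the same two-sided bounds, together with $2\leq n_{c}\leq\bar n$ from Assumption \ref{assu:size}, give $|p_{c}|,|q_{c}|\leq K$ for a constant $K$ independent of $c$ and $n$.

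For (iii): $A$ is $\mathcal{Z}_{n}$-measurable by Assumption \ref{assu:quaA}(i); within each block $I_{c}^{\ast}$ and $J_{c}^{\ast}$ are functions of $M_{c}$ while $p_{c},q_{c}$ depend only on $n_{c}$, on $\tau_{c}$ (which is $\mathcal{Z}_{n}$-measurable by Assumption \ref{assu:CondVar}) and on the parameters, so $S(\theta,\gamma)A$ is $\mathcal{Z}_{n}$-measurable. By Lemma \ref{lem:mat_I-J}(vii) and the uniform bound on $(p_{c},q_{c})$ from step (ii), the row and column sums of $S(\theta,\gamma)$ are uniformly bounded in absolute value; since $A$ satisfies $\sup_{i}\sum_{j}|a_{ij}|\leq K_{a}$ by Assumption \ref{assu:quaA}(iii) (and, being symmetric, the same bound for column sums), submultiplicativity of the maximum-absolute-row-sum norm delivers the same conclusion for $S(\theta,\gamma)A$. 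The one place that needs care --- the main obstacle --- is verifying in step (ii) that no denominator produced by iterating the generators degenerates anywhere on $\Theta\times\Gamma$; this is precisely what Assumptions \ref{assu:PAR} ($|\rho|<1$ and $\gamma$ bounded away from $0$ and $\infty$), \ref{assu:size} ($n_{c}$ bounded above and below), and \ref{assu:CondVar} (variance components bounded away from $0$ and $\infty$) are there to supply.
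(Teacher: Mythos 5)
Your proposal is correct and follows essentially the same route as the paper: both rest entirely on Lemma \ref{lem:mat_I-J} applied to the explicit representations \eqref{eq:M_IJ_Rep}--\eqref{eq:Sigma_IJ_Rep}, with the coefficients bounded away from zero by Assumptions \ref{assu:CondVar}, \ref{assu:PAR} and \ref{assu:size}, and closure of the canonical form under products and inverses. Your appeal to submultiplicativity of the maximum-absolute-row-sum norm for $S(\theta,\gamma)A$ is just the explicit version of the Kelejian--Prucha remark the paper cites for the same step.
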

\begin{proof}
By Assumptions \ref{assu:PAR}, \ref{assu:size} and \ref{assu:CondVar},
$\left(n_{c}-1-\rho\right)/\left(n_{c}-1\right)$, $\left(1+\rho\right)$,
$\gamma_{j}^{2}$, $\rho_{0,\tau_{c}}^{2}$, and $\sigma_{0,t}^{2}$
in \eqref{eq:I-rhoM_IJ_Rep}, \eqref{eq:Omega_IJ_Rep}, \eqref{eq:Sigma_IJ_Rep}
are positive and uniformly bounded away from 0. Hence by Lemma \ref{lem:mat_I-J}(v),
$I+\rho M$, $\Omega(\gamma)$ and $\Sigma_{t}$ are non-singular.
An inspection of Equations \eqref{eq:M_IJ_Rep}-\eqref{eq:Sigma_IJ_Rep}
shows that all matrices in the set can be written in the form of $\diag_{c=1}^{C}\left(p_{c}I_{c}^{\ast}+q_{c}J_{c}^{\ast}\right),$
with $p_{c}$ and $q_{c}$ continuously differentiable in $\theta$
and $\gamma$ and uniformly bounded. Also note that by Assumption
\ref{assu:quaA}, the row and column sums of $A$ are uniformly bounded
in absolute value. Utilizing remark A.1 in \citet{kelejian_estimation_2004}
the corollary thus follows from Lemma \ref{lem:mat_I-J}.
\end{proof}
\begin{lem}
\label{lem:id_q}Suppose Assumptions \ref{assu:PAR}, \ref{assu:quaA},
\ref{assu:size} and \ref{assu:CondVar} hold. Consider the block
diagonal matrix $\mathcal{M}\left(\rho\right)=\diag{}_{c=1}^{C}\left(\mathcal{M}_{c}\left(\rho\right)\right)$
with $\mathcal{M}_{c}\left(\rho\right)=\left(I_{c}+\rho M_{c}\right)^{-1}\left(I_{c}+\rho_{0}M_{c}\right)$.
Then for all $\gamma\in\Gamma$,
\[
\tr\left(\mathcal{M}\left(\rho\right)\Omega(\gamma)^{-1/2}A\Omega(\gamma)^{-1/2}\mathcal{M}\left(\rho\right)\Omega_{0}\right)=0
\]
and 
\[
\lim_{n\rightarrow\infty}\frac{1}{n}\tr\left(\mathcal{M}\left(\rho\right)\Omega(\gamma)^{-1/2}A\Omega(\gamma)^{-1/2}\mathcal{M}\left(\rho\right)\Omega_{0}\right)=0
\]
each has only one solution for $\rho\in\left(-1,1\right)$. This solution
is given by $\rho=\rho_{0}$.
\end{lem}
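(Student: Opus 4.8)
The plan is to exploit the common spectral structure of every matrix appearing in the trace: each is block-diagonal across classrooms, and on block $c$ each is a linear combination of the orthogonal idempotents $I_c^{\ast}$ and $J_c^{\ast}$, so products and traces collapse to scalar arithmetic by Lemma~\ref{lem:mat_I-J}. First I would record the block representations. By \eqref{eq:I-rhoM_IJ_Rep} and Lemma~\ref{lem:mat_I-J}(iv)--(v),
\[
\mathcal{M}_c(\rho)=(I_c+\rho M_c)^{-1}(I_c+\rho_0 M_c)=p_c I_c^{\ast}+q J_c^{\ast},\qquad p_c=\frac{n_c-1-\rho_0}{n_c-1-\rho},\quad q=q(\rho)=\frac{1+\rho_0}{1+\rho},
\]
with $p_c>0$ and $q>0$ since $n_c\ge 2$ by Assumption~\ref{assu:size} and $|\rho|,|\rho_0|<1$. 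By \eqref{eq:Omega_IJ_Rep}, on block $c$ the matrix $\Omega(\gamma)^{-1/2}$ equals $\gamma_{\tau_c}^{-1}I_c$ and $\Omega_0=\Omega(\gamma_0)$ equals $\gamma_{\tau_c,0}^2 I_c$. Since $A_c$ is a symmetric function of $M_c$, and $M_c$ has only the two distinct eigenvalues $-(n_c-1)^{-1}$ and $1$ with eigenprojections $I_c^{\ast}$ and $J_c^{\ast}$, the functional calculus forces $A_c=a_c I_c^{\ast}+b_c J_c^{\ast}$ for scalars $a_c,b_c$; the zero-diagonal requirement of Assumption~\ref{assu:quaA}(i) gives $b_c=-(n_c-1)a_c$, while $\mathbf{1}_c'A_c\mathbf{1}_c/n_c=b_c$ together with Assumption~\ref{assu:quaA}(ii) gives $b_c\ge\underline{K}_a>0$.

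Next I would compute the trace block by block. Write $T(\rho,\gamma):=\mathcal{M}(\rho)\Omega(\gamma)^{-1/2}A\Omega(\gamma)^{-1/2}\mathcal{M}(\rho)\Omega_0$ and $\lambda_c:=\gamma_{\tau_c,0}^2/\gamma_{\tau_c}^2>0$; then the block-$c$ contribution is $\lambda_c\mathcal{M}_c(\rho)A_c\mathcal{M}_c(\rho)=\lambda_c\mathcal{M}_c(\rho)^2A_c$ by commutativity (Lemma~\ref{lem:mat_I-J}(iv)), and Lemma~\ref{lem:mat_I-J}(iv),(vi) yield
\[
\tr\!\big(\lambda_c\mathcal{M}_c(\rho)^2A_c\big)=\lambda_c\big(p_c^2a_c(n_c-1)+q^2b_c\big)=\lambda_c(n_c-1)a_c(p_c^2-q^2)=-\lambda_c b_c(p_c^2-q^2).
\]
Hence $\tr T(\rho,\gamma)=-\sum_{c=1}^{C}\lambda_c b_c(p_c^2-q^2)$, in which every coefficient $\lambda_c b_c$ is strictly positive. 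Then the sign argument: if $\rho>\rho_0$ then $p_c>1>q$, so $p_c^2-q^2>0$ for all $c$ and $\tr T(\rho,\gamma)<0$; if $\rho<\rho_0$ then $p_c<1<q$, so $\tr T(\rho,\gamma)>0$; and if $\rho=\rho_0$ then $p_c=q=1$ and $\tr T(\rho,\gamma)=0$ (consistent with \eqref{eq:Moments_SI_Graham_Theta0-1}). This establishes that on $(-1,1)$ the equation $\tr T(\rho,\gamma)=0$ holds precisely at $\rho=\rho_0$, for every $\gamma\in\Gamma$.

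For the normalized version, $n^{-1}\tr T(\rho_0,\gamma)\equiv 0$ so its limit is $0$; and for any fixed $\rho\ne\rho_0$, since either $p_c^2\ge 1\ge q^2$ or $p_c^2\le 1\le q^2$ holds for every $c$, we get $|p_c^2-q^2|\ge|1-q(\rho)^2|>0$ uniformly in $c$, whence, using $\lambda_c\ge K_\gamma^{-4}$, $b_c\ge\underline{K}_a$, and $C/n=C/\sum_c n_c\ge \bar{n}^{-1}$ (Assumption~\ref{assu:size}),
\[
\big|n^{-1}\tr T(\rho,\gamma)\big|\ \ge\ \bar{n}^{-1}K_\gamma^{-4}\underline{K}_a\,\big|1-q(\rho)^2\big|\ >\ 0
\]
for all $n$, so $n^{-1}\tr T(\rho,\gamma)$ stays bounded away from $0$ and cannot converge to $0$; hence $\rho=\rho_0$ is again the unique solution. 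The one point needing care is that these lower bounds be uniform in $n$ — which is exactly what Assumption~\ref{assu:size} ($n_c\le\bar{n}$), Assumption~\ref{assu:quaA}(ii) ($b_c\ge\underline{K}_a$), and Assumption~\ref{assu:PAR} ($\gamma,\gamma_0$ in the compact box $\Gamma$) provide; all remaining steps are the scalar algebra of Lemma~\ref{lem:mat_I-J}.
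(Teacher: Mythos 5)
Your proof is correct and follows essentially the same route as the paper: reduce to the per-classroom blocks via the $I_c^{\ast}/J_c^{\ast}$ algebra, show the block trace equals a positive multiple of $q(\rho)^2-p_c(\rho)^2$ (equivalently $(\rho_0-\rho)$ times a positive factor), and conclude by the sign argument plus a lower bound that is uniform in $n$ thanks to $n_c\le\bar n$, $\mathbf{1}_c'A_c\mathbf{1}_c/n_c\ge\underline{K}_a$ and $\gamma,\gamma_0\in\Gamma$. The only cosmetic difference is that you spectrally decompose $A_c=a_cI_c^{\ast}+b_cJ_c^{\ast}$, whereas the paper needs only $\tr(A_c)=0$ (hence $\tr(A_cI_c^{\ast})=-\tr(A_cJ_c^{\ast})$), which yields the identical trace identity without that extra step.
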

\begin{proof}
Let $D(\rho,\gamma)=\mathcal{M}\left(\rho\right)\Omega(\gamma)^{-1/2}A\Omega(\gamma)^{-1/2}\mathcal{M}\left(\rho\right)\Omega_{0}$
and observe that $D(\rho,\gamma)=\diag{}_{c=1}^{C}\left(D_{c}(\rho,\gamma)\right)$
with
\begin{eqnarray*}
D_{c}(\rho,\gamma) & = & \mathcal{M}_{c}\left(\rho\right)\left(\gamma_{\tau_{c}}^{2}I_{c}\right)^{-1/2}A_{c}\left(\gamma_{\tau_{c}}^{2}I_{c}\right)^{-1/2}\mathcal{M}_{c}\left(\rho\right)\left(\gamma_{0,\tau_{c}}^{2}I_{c}\right)\\
 & = & \frac{\gamma_{\tau_{c},0}^{2}}{\gamma_{\tau_{c}}^{2}}\mathcal{M}_{c}\left(\rho\right)A_{c}\mathcal{M}_{c}\left(\rho\right).
\end{eqnarray*}
Noting that $\tr\left(A_{c}\right)=0$ and hence $\tr\left(A_{c}I_{c}^{\ast}\right)=\tr\left(A_{c}(I-J_{c}^{\ast})\right)=-\tr\left(A_{c}J_{c}^{\ast}\right)$,
and in light of Lemma \ref{lem:mat_I-J}
\begin{eqnarray*}
\mathcal{M}_{c}\left(\rho\right) & = & \left(I_{c}+\rho M_{c}\right)^{-1}\left(I_{c}+\rho_{0}M_{c}\right)=\frac{n_{c}-1-\rho_{0}}{n_{c}-1-\rho}I_{c}^{\ast}+\frac{1+\rho_{0}}{1+\rho}J_{c}^{\ast}.
\end{eqnarray*}
We have
\begin{eqnarray*}
\tr\left(\mathcal{M}_{c}\left(\rho\right)A_{c}\mathcal{M}_{c}\left(\rho\right)\right) & = & \tr\left(A_{c}\left[\left(\frac{n_{c}-1-\rho_{0}}{n_{c}-1-\rho}\right)^{2}I_{c}^{\ast}+\left(\frac{1+\rho_{0}}{1+\rho}\right)^{2}J_{c}^{\ast}\right]\right)\\
 & = & \left[\left(\frac{1+\rho_{0}}{1+\rho}\right)^{2}-\left(\frac{n_{c}-1-\rho_{0}}{n_{c}-1-\rho}\right)^{2}\right]\tr\left(A_{c}J_{c}^{\ast}\right)\\
 & = & (\rho_{0}-\rho)C(\rho,n_{c})\tr\left(A_{c}J_{c}^{\ast}\right)
\end{eqnarray*}
where 
\[
C(\rho,n_{c})=\left[\left(\frac{1+\rho_{0}}{1+\rho}\right)+\left(\frac{n_{c}-1-\rho_{0}}{n_{c}-1-\rho}\right)\right]\left[\frac{1}{1+\rho}+\frac{1}{n_{c}-1-\rho}\right].
\]
Hence 
\[
\tr\left(D(\rho,\gamma)\right)=\sum_{c=1}^{C}\tr\left(D_{c}(\rho,\gamma)\right)=(\rho_{0}-\rho)\sum_{c=1}^{C}\frac{\gamma_{\tau_{c},0}^{2}}{\gamma_{\tau_{c}}^{2}}C(\rho,n_{c})\tr\left(A_{c}J_{c}^{\ast}\right).
\]
Clearly, at $\rho=\rho_{0}$, for all $\gamma\in\Gamma$, we have
$\tr\left(D(\rho_{0},\gamma)\right)=0$ for all $n$.

Next observe that $\tr\left(A_{c}J_{c}^{*}\right)=\mathbf{1}_{c}^{'}A_{c}\mathbf{1}_{c}/n_{c}\geqslant\text{\underbar{K}}_{a}>0$
and that for any $n_{c}\geqslant2$ and $-1<\rho<1$, both $n_{c}-1-\rho$
and $1+\rho$ are positive, and therefore $C(\rho,n_{c})$ is positive
for all $\rho\in\left(-1,1\right)$. Therefore, $\tr\left[\mathcal{M}_{c}\left(\rho\right)A_{c}\mathcal{M}_{c}\left(\rho\right)\right]<0$
if $\rho>\rho_{0}$, $\tr\left[\mathcal{M}_{c}\left(\rho\right)A_{c}\mathcal{M}_{c}\left(\rho\right)\right]>0$
if $\rho<\rho_{0}$ and $\tr\left[\mathcal{M}_{c}\left(\rho\right)A_{c}\mathcal{M}_{c}\left(\rho\right)\right]$
has the same sign for all $c$. In all, for any value of $\gamma$
the only solution for 
\[
\tr\left(\mathcal{M}\left(\rho\right)\Omega(\gamma,\theta)^{-1/2}A\Omega(\gamma,\theta)^{-1/2}\mathcal{M}\left(\rho\right)\Omega_{0}\right)=0
\]
 is $\rho=\rho_{0}$. Furthermore, by Assumptions \ref{assu:PAR}
and \ref{assu:size} there exists some constants $c_{D}$ and $c_{\gamma}$
such that for any $\rho_{0}$ in the interior of the parameter space,
$C(\rho,n_{c})>c_{D}>0$ and $\frac{\gamma_{\tau_{c},0}^{2}}{\gamma_{\tau_{c}}^{2}}>c_{\gamma}>0$.
This implies that, 
\[
\left|\tr(D(\rho))\right|\geqslant|\rho_{0}-\rho|\sum_{c=1}^{C}\text{\underbar{K}}_{a}c_{D}c_{\gamma},
\]
\[
\left|\frac{1}{n}\tr(D(\rho))\right|\geqslant|\rho_{0}-\rho|\text{\underbar{K}}_{a}c_{D}c_{\gamma}\frac{C}{n}.
\]
By Assumption \ref{assu:size}, $C/n\geq1/\bar{n}_{c}>0$ such that
the conclusion follows.
\end{proof}
Proofs of Additional Identification Results.
\begin{proof}[Proof of Lemma \ref{lem:id_nocov} and Corollary \ref{cor:TS_ID}]

Lemma \ref{lem:id_nocov} is a special case of Theorem \ref{thm:general_id}
without covariates. Note that Assumptions \ref{assu:Potential_Outcomes-cov},
\ref{assu:Prop_Gen}, \ref{assu:PAR}, \ref{assu:quaA}, \ref{assu:size}
and \ref{assu:CondVar} hold. Without covariates Equation \eqref{eq:Kn1}
reduces to $K_{n}\left(\rho,\gamma\right)=z'V\left(\rho,\gamma\right)\mu^{*}$
where $V\left(\rho,\gamma\right)=\Omega\left(\gamma\right)^{-1/2}\left(I+\rho M\right)^{-1}.$

Recall that $z=\left[z_{1}^{\prime},...,z_{c}^{\prime},...,z_{C}^{\prime}\right]^{\prime}$
with $z_{c}=\grave{z}_{c}\mathbf{1}_{c}$. Utilizing Lemma \ref{lem:mat_I-J},

\[
(I_{c}+\rho M_{c})^{-1}z_{c}=\left(\frac{n_{c}-1-\rho}{n_{c}-1}I_{c}^{\ast}+\frac{1}{1+\rho}J_{c}^{\ast}\right)\grave{z}_{c}\mathbf{1}_{c}=\frac{1}{1+\rho}\grave{z}_{c}\mathbf{1}_{c}=\frac{z_{c}}{1+\rho}
\]
leads to 
\begin{align*}
n^{-1}E\left[y_{2}^{\prime}\Omega(\gamma)^{-1/2}(I+\rho M)^{-1}z|\mathcal{Z}_{n}\right] & =\frac{1}{1+\rho}n^{-1}E\left[y_{2}^{\prime}\Omega(\gamma)^{-1/2}z|\mathcal{Z}_{n}\right]\geq\frac{1}{1+\rho}K_{y}>0
\end{align*}
where the inequality follows from the conditions imposed in the lemma,
in particular \eqref{eq:Graham_ID_Cond}. With $\rho\in(-1,1)$, $E\left[K_{n}\left(\rho,\gamma\right)|\mathcal{Z}_{n}\right]\neq0$
such that \eqref{eq:mom-1} only has one solution, $f_{1}=f_{1,0}$.
Identification of $\rho_{0}$ from the quadratic moment is not affected
by the absence of covariates. Thus Lemma \ref{lem:id_nocov} follows
from Theorem \ref{thm:general_id}.

To prove Corollary \ref{cor:TS_ID} observe that when $z=\mathbf{1}_{n}$,
$n^{-1}\sum_{c=1}^{C}E\left[y_{c2}^{\prime}\mathbf{1}_{c}\right]>K_{y}>0$
implies that \eqref{eq:Graham_ID_Cond} holds, observing that 
\[
n^{-1}\left|E\left[y_{2}^{\prime}\Omega(\gamma)^{-1/2}\mathbf{1}_{n}\left|\mathcal{Z}_{n}\right.\right]\right|=n^{-1}E\sum_{c=1}^{C}\left[\frac{y_{c2}^{\prime}\mathbf{1}_{c}}{\gamma_{\tau_{c}}}\left|\mathcal{Z}_{n}\right.\right]\geq\frac{1}{K_{\gamma}}\frac{1}{n}\sum_{c=1}^{C}E\left[y_{c2}^{\prime}\mathbf{1}_{c}\left|\mathcal{Z}_{n}\right.\right]\geq\frac{K_{y}}{K_{\gamma}}>0.
\]
\end{proof}

\section{Gradient Vector\label{sec:variance}}

In this section, we derive the gradient vector for the two-step-GMM
estimator, which is a component of the limiting variance covariance
matrix in Theorem \ref{thm:CLT1}. Recall that $\theta=\left(\rho,f_{1},\delta'\right)'$,
\begin{align*}
u^{+}(\theta,\gamma) & =\Omega(\gamma)^{-1/2}(I+\rho M)^{-1}\left(y_{1}-f_{1}y_{2}-\underline{X}\delta\right)\\
 & =\Omega(\gamma)^{-1/2}(I+\rho M)^{-1}\left((f_{10}-f_{1})\mu^{\ast}+\underline{X}(\delta(f_{1},\rho_{0},\beta_{0})-\delta)+(I+\rho_{0}M)(u_{1}-f_{1}u_{2})\right)
\end{align*}
 and that $y_{t}=\mu^{*}f_{t,0}+\underline{X}\beta_{0,t}+(I+\rho_{0}M)u_{t}$.

\begin{align*}
\frac{\partial m_{n}(\theta,\gamma)/\sqrt{n}}{\partial\theta^{\prime}} & =\frac{1}{n}\left[\begin{array}{c}
H^{\prime}\frac{\partial u^{+}}{\partial\theta^{\prime}}\\
2u^{+\prime}A\frac{\partial u^{+}}{\partial\theta^{\prime}}
\end{array}\right]=\frac{1}{n}\left[\begin{array}{ccc}
H^{\prime}\frac{\partial u^{+}}{\partial\rho} & H^{\prime}\frac{\partial u^{+}}{\partial f_{1}} & H^{\prime}\frac{\partial u^{+}}{\partial\delta^{\prime}}\\
2u^{+\prime}A\frac{\partial u^{+}}{\partial\rho} & 2u^{+\prime}A\frac{\partial u^{+}}{\partial f_{1}} & 2u^{+\prime}A\frac{\partial u^{+}}{\partial\delta^{\prime}}
\end{array}\right].
\end{align*}

The derivative of $u^{+}(\theta,\gamma)$ w.r.t $\theta=(\rho,f_{1},\delta^{\prime})^{\prime}$
are
\begin{align*}
\partial\frac{u^{+}(\theta,\gamma)}{\partial f_{1}} & =-\Omega(\gamma)^{-1/2}(I+\rho M)^{-1}y_{2}\\
 & =-\Omega(\gamma)^{-1/2}(I+\rho M)^{-1}[\mu^{\ast}+\underline{X}\beta_{0,2}+(I+\rho_{0}M)u_{2}].
\end{align*}
\begin{align*}
\partial\frac{u^{+}(\theta,\gamma)}{\partial\rho} & =-\Omega(\gamma)^{-1/2}(I+\rho M)^{-1}M(I+\rho M)^{-1}(y_{1}-f_{1}y_{2}-\underline{X}\delta)\\
 & =-\Omega(\gamma)^{-1/2}(I+\rho M)^{-1}M(I+\rho M)^{-1}\\
 & \times\left((f_{10}-f_{1})\mu^{\ast}+\underline{X}(\delta(f_{1},\rho_{0},\beta_{0})-\delta)+(I+\rho_{0}M)(u_{1}-f_{1}u_{2})\right)
\end{align*}
\begin{align*}
\partial\frac{u^{+}(\theta,\gamma)}{\partial\delta} & =-\Omega(\gamma)^{-1/2}(I+\rho M)^{-1}\underline{X}.
\end{align*}

It thus is obvious that $u^{+}(\theta,\gamma)$ and each element of
$\frac{u^{+}(\theta,\gamma)}{\partial\theta}$ can be written as linear
combinations of $S(\theta,\gamma)\Upsilon$, where $S(\theta,\gamma)$
is an element or the product of elements of the matrix set in \eqref{eq:Sset}
and $\Upsilon\in\left\{ H,\mu^{*},u_{1},u_{2}\right\} $. Applying
Corollary \ref{cor:pIqJ} and Assumption \ref{assu:conv_Mom}, the
gradient vector converges uniformly to its expected value.

\section{GMM with Randomly Missing Variables\label{sec:Randomly_Missing}}

If class-level characteristics like teacher's race are missing and
the variable is controlled in the model, then the whole class is dropped
from the sample. Under the assumption that these characteristics are
missing randomly, dropping these classes should not affect the consistency
of our estimator. Below we discuss the case when individual level
variables (control variable or outcome variable) are missing.

Suppose class $c$ has $n_{c}$ students, and $n_{c}^{obs}\leqslant n_{c}$
of them are observed. which We define the case where $n_{c}^{obs}=n_{c}$
as having non-missing values for all variables. Without loss of generality,
we order the vectors for the class such that observed students are
at the top and unobserved are at the bottom. That is, $y_{c1}=(y_{c1}^{obs\prime},y_{c1}^{unobs\prime})^{\prime}$,
where $y_{c1}$ is the $n_{c}\times1$ vector of test score 1 for
all students in class $c$, $y_{c1}^{obs}$ is the $n_{c}^{obs}\times1$
vector of test score 1 for all observed students, and $y_{c1}^{unobs}$
is the $n_{c}^{unobs}\times1$ vector of test score 1 for all unobserved
students. Note that $n_{c}^{unobs}=n_{c}-n_{c}^{obs}$. Similarly,
$y_{c2}=(y_{c2}^{obs\prime},y_{c2}^{unobs\prime})^{\prime}$, $x_{c}=(x_{c}^{obs\prime},x_{c}^{unobs\prime})^{\prime}$,
etc.

The model for class $c$ is

\begin{equation}
y_{1c}-f_{1}y_{2c}=\text{\underbar{X}}_{c}\delta+(I_{c}+\rho M_{c})(u_{1c}-f_{1}u_{2c}).\label{eq:main}
\end{equation}
where $\underline{X}_{c}=\left(v_{c}^{c},w_{1,c}^{c},w_{2,c}^{c},v_{c}^{p},w_{1,c}^{p},w_{2,c}^{p},M_{c}v_{c}^{p},M_{c}w_{1,c}^{p},M_{c}w_{2,c}^{p}\right)$.
Let $x_{c}^{c}=[v_{c}^{c},w_{1,c}^{c},w_{2,c}^{c}]$ be class level
variables, $x_{c}^{p}=[v_{c}^{p},w_{1,c}^{p},w_{2,c}^{p}]$ be student
personal level characteristics. The model for class $c$ then can
be written more explicitly as

\begin{equation}
y_{1c}-f_{1}y_{2c}=x_{c}^{c}\delta_{c}+x_{c}^{p}\delta_{p}+M_{c}x_{c}^{p}\delta_{m}+(I_{c}+\rho M_{c})(u_{1c}-f_{1}u_{2c}).\label{eq:main-1}
\end{equation}
The composite error term satisfies $E(u_{1c}-f_{1}u_{2c}|\mathcal{Z}_{n}^{\ast})=0$
and $\textrm{var}(u_{1c}-f_{1,0}u_{2c}|\mathcal{Z}_{n}^{\ast})=\gamma_{\tau_{c}}^{2}I_{c}$.

Let $D_{c}=[I_{n_{c}^{obs}\times n_{c}^{obs}},0_{n_{c}^{obs}\times n_{c}^{nob}}]$
be the $n_{c}^{obs}\times n_{c}$ matrix that transforms the matrix
for the whole class to that for the observed students. That is $D_{c}y_{c1}=y_{c1}^{obs}$,
etc. Premultiplying both sides of \eqref{eq:main-1} by $D_{c}$,
we have
\[
D_{c}y_{1c}-f_{1}D_{c}y_{2c}=D_{c}x_{c}^{c}\delta_{c}+D_{c}x_{c}^{p}\delta_{p}+D_{c}M_{c}x_{c}^{p}\delta_{m}+D_{c}(I_{c}+\rho M_{c})(u_{1c}-f_{1}u_{2c}).
\]
It is easy to see that $D_{c}y_{1c}=y_{1c}^{obs}$, $D_{c}y_{2c}=y_{2c}^{obs}$,
$D_{c}x_{c}^{c}=x_{c}^{c,obs}$ and $D_{c}x_{c}^{p}=x_{c}^{p,obs}$,
which are all observed.

For $D_{c}M_{c}x_{c}^{p}$, observing that $D_{c}\iota_{c}=\iota_{n^{obs}}$
, $D_{c}I_{c}=D_{c}$, and $M_{c}=\frac{\iota_{c}\iota_{c}^{\prime}-I_{c}}{n_{c}-1}$,
we have
\begin{equation}
D_{c}M_{c}x_{c}^{p}=D_{c}\frac{\iota_{c}\iota_{c}^{\prime}-I_{c}}{n_{c}-1}x_{c}^{p}=\frac{1}{n_{c}-1}\left(\iota_{n_{c}^{nobs}}\iota_{c}^{\prime}x_{c}^{p}-x_{c}^{p,obs}\right)=\frac{1}{n_{c}-1}(\iota_{n_{c}^{obs}}n_{c}\bar{x}_{c}^{p}-x_{c}^{p,obs}),
\end{equation}
where $\bar{x}_{c}^{p}=\frac{1}{n_{c}}\sum_{r=1}^{n_{c}}x_{rc}^{p}$
is the average characteristics of all students. Let $\bar{x}_{c}^{p,obs}$
be the averages of all observed $x$. That is, as long as a variable
in $x^{p}$ is observed for a student, it will be used to calculate
the average even if the student has other variables missing and counted
as unobserved. For example, some classes may have variable ``$girl$''
observed for all students, while race and test information may be
missing for some. In this way we increase the efficiency of our approximation
as observing a missing value for a particular variable is much less
common than having missing values for all of the variables related
to a particular student. Then $\bar{girl}_{c}^{p,obs}=\bar{girl}_{c}^{p}$
as we use all observed values of $"girl"$. Assuming that observations
are missing at random it follows that $E\left[\bar{x}_{c}^{p}-\bar{x}_{c}^{p,obs}\right]=0$
and the difference between the two is very small as personal characteristics
don't have that many missing values. See Table \ref{tab:checkmiss}
for details. We can therefore approximate $D_{c}M_{c}x_{c}^{p}$ with
$\tilde{x}_{c}^{m,obs}=\frac{1}{n_{c}-1}(\iota_{n_{c}^{obs}}n_{c}\bar{x}_{c}^{p,obs}-x_{c}^{p,obs})$.
In all, 
\begin{align*}
y_{1c}^{obs}-f_{1}y_{2c}^{obs} & =x_{c}^{c,obs}\delta_{c}+x_{c}^{p,obs}\delta_{p}+\tilde{x}_{c}^{m,obs}\delta_{m}+v_{c}=\text{\underbar{x}}_{c}^{obs}\delta+v_{c},
\end{align*}
where $\text{\underbar{x}}_{c}^{obs}=\left(x_{c}^{c,obs},x_{c}^{p,obs},\tilde{x}_{c}^{m,obs}\right)$
, $\delta=(\delta_{c}^{\prime},\delta_{p}^{\prime},\delta_{m}^{\prime})^{\prime}$,
$v_{c}=\xi_{c}+\epsilon_{c}$, with
\[
\epsilon_{c}=D_{c}(I_{c}+\rho M_{c})(u_{1c}-f_{1}u_{2c}),
\]
\[
\xi_{c}=\left(D_{c}M_{c}x_{c}^{p}-\tilde{x}_{c}^{m,obs}\right)\delta_{m}=\frac{n_{c}}{n_{c}-1}\iota_{n_{c}^{obs}}(\bar{x}_{c}^{p}-\bar{x}_{c}^{p,obs})\delta_{m}.
\]

If observations are missing randomly, we have 
\[
E\left[\epsilon_{c}|x_{c}^{c,obs},x_{c}^{p,obs},\tilde{x}_{c}^{m,obs}\right]=0,
\]
\begin{equation}
\Omega_{c}^{obs}\equiv E\left[\epsilon_{c}\epsilon_{c}^{\prime}\right]=\sigma_{c}^{2}\left((\frac{n_{c}-1-\rho}{n_{c}-1})^{2}I_{n^{obs}}+\frac{1}{n_{c}}((1+\rho)^{2}-(\frac{n_{c}-1-\rho}{n_{c}-1})^{2})\iota_{n^{obs}}\iota_{n^{obs}}^{\prime}\right).\label{eq:omega}
\end{equation}
Note that $E\left[v_{c}|\text{\underbar{x}}_{c}^{obs}\right]$ and
$\textrm{var}(v_{c}|\text{\underbar{x}}_{c}^{obs})$ requires additional
assumptions on the distribution of $\xi_{c}$, in particular $E\left[\bar{x}_{c}^{p}-\bar{x}_{c}^{p,obs}\vert\bar{x}_{c}^{p,obs}\right]=0$.
In practice, we ignore the term $\xi_{c}$. As long as the number
of missing $x^{p}$ is small, as Table \ref{tab:checkmiss} shows,
the bias is small and negligible.

The formula for the variance in \eqref{eq:omega} points to the importance
for adjusting for original class size $n_{c}$ even if observations
are missing randomly. We can then define 
\[
u_{c}^{+obs}=(\Omega_{c}^{obs})^{-1/2}\left(y_{1c}^{obs}-f_{1}y_{2c}^{obs}-\text{\underbar{x}}_{c}^{obs}\delta\right)
\]
and continue with our GMM estimation strategy.

\section{Additional Empirical Results}

\subsection{Additional Specification Tests}

Our analysis thus far has documented that the word and reading scores
are highly correlated both contemporaneously as well as over the course
of one school year. This evidence is consistent with restrictions
imposed by Assumption \ref{assu:Proportional} and the hypothesis
that closely related scores depend on the same unobserved components.
A caveat is that closely related scores may remove too much common
variation when quasi-differences of these scores are used and thus
render statistical evidence about peer effects imprecise. Panel A
of Table \ref{tab:sdratio_all} reports pseudo $R^{2}$ measures of
model estimates for word and reading scores and model specifications
(1) through (4) with column (1) only controlling for school fixed
effects, and column (4) being the full specification with school and
class type effects, student and teacher characteristics as well as
peer characteristics. Results for other pairs of scores are presented
in panels B-F as a comparison.

The observed pseudo $R^{2}$ measures are all below 0.2 for the word-read
pair. As expected the pseudo $R^{2}$ generally increases with larger
models, or in other words by moving from Column (1) to (4). Compared
to other pairs of scores, the word-read score pair often has the lowest
$R^{2}$ which is an indication that the differential score in these
specifications already controls for most of the variation explained
by the covariates. We take this as further evidence that word-read
best fits the model assumptions we impose.

Table \ref{tab:varqxy} presents standard deviations of $Q_{x}y$
for reading and word scores and $Q_{x}(y_{1}-f_{1}y_{2})$ for $y_{1}$
being word and $y_{2}$ being reading scores for Grade K to 3. The
left and right panels are for small and regular classes respectively.
Control variables for specifications (1)-(4) are the same as those
in Table \ref{tab:sdratio_all}, as discussed above. We see that in
Grade K to 3, the ratio of standard deviations of $Q_{x}(y_{1}-f_{1}y_{2})$
to that of $Q_{x}y$ is 1/5 to 1/4.This shows that quasi-differencing
removes a significant part of the variation in the test scores, consistent
with the term $\mu^{*}$ playing a major role in determining test
performance. At the same time, the results in Table \ref{tab:varqxy}
indicate that the inclusion of controls does not eliminate much of
the remaining variation not captured by fixed effects. The conclusion
is that after differencing there is between 20 to 30 percent of test
score variation remaining for the identification of $\rho$.

\subsection{Effects of Individual Characteristics}

We also provide the coefficients of peer averages of individual characteristics
(age, black, free lunch, and girl) in Table \ref{tab:ave}. These
four variables all pertain to test-invariant individual characteristics.
The estimated coefficients can be expressed as $\rho(\beta_{v1}^{p}-f_{1}\beta_{v2}^{p})$,
where $\rho\beta_{v1}^{p}$ and $\rho\beta_{v2}^{p}$ represent the
exogenous peer effect of the variable on tests $t=1$ and $t=2$ respectively.
Thus, these coefficients capture the differential exogenous peer effects
across the two tests. We also scale the coefficients for dummies $girl$,
$black$ and $free\,lunch$ by 1/100 so that the coefficients are
the impact on the outcome of 1 percentage point changes in share of
peers who are girl/black/free lunch receivers. Except for kindergarten
where the presence of girls among peers positively impacts reading
more than word study, none of the four variables show significant
differential exogenous peer effects. Overall, the results indicate
small and mostly insignificant peer effects on differential scores.
This is consistent with a scenario where $y_{2}$ is essentially sufficient
to control for both observed and unobserved test invariant determinants
of test performance. In fact, when $f_{1}=1$ and $\beta_{v1}^{p}=\beta_{v2}^{p}$
these effects would be zero in our model and the overall evidence
points to the conclusion that this is approximately the case in the
sample as well.

 \begin{table}[htpb] \begin{center} \caption{Number of Missing Values for Each Variable} \label{tab:checkmiss}  \begin{threeparttable}  \begin{tabular}{ccccc} \hline \hline & grade K & grade 1 & grade 2 & grade 3 \\ \hline 
\multicolumn{5}{c}{\textit{student characteristics}} \\
poor&8&32&62&83\\
black&3&29&89&14\\
girl&0&13&7&0\\
age&8&37&7&20\\
\multicolumn{5}{c}{\textit{teacher characteristics}} \\
tchfemale&0&19&60&50\\
tchblack&43&19&60&50\\
tchms&21&19&60&65\\
tchyears&21&19&101&50\\
\multicolumn{5}{c}{\textit{test scores}} \\
read&536&434&763&802\\
list&488&273&797&728\\
math&454&231&775&725\\
word&474&857&493&453\\
\multicolumn{5}{c}{\textit{Any vairable missing for the test pairs and controls}} \\
list\&read&643&528&929&966\\
math\&read&589&480&907&946\\
read\&word&608&1,093&883&914\\
list\&math&570&332&926&878\\
list\&word&562&965&928&863\\
math\&word&555&929&904&842\\
\hline sample size&6,325&6,829&6,840&6,801\\
 \hline \hline \end{tabular} \begin{tablenotes} \footnotesize
\item 1. School ID, class ID, class size and class type have no missing values.
\end{tablenotes} \end{threeparttable} \end{center} \end{table}

 \begin{table}[ht] \begin{center} \caption{Summary Statistics} \label{tab:sum}  \begin{threeparttable}  \begin{tabular}{rrrrrrrrrrrr} \hline \hline & \multicolumn{2}{c}{Grade K} & & \multicolumn{2}{c}{Grade 1} & & \multicolumn{2}{c}{Grade 2} & & \multicolumn{2}{c}{Grade 3} \\ \cline{2-3}\cline{5-6}\cline{8-9}\cline{11-12} & mean & sd & & mean & sd & & mean & sd & & mean & sd \\ \hline 
\multicolumn{12}{c}{\textit{indiviudal characteristics}} \\
free lunch&0.560&0.496&&0.588&0.492&&0.592&0.491&&0.578&0.494\\
black&0.326&0.469&&0.327&0.469&&0.347&0.476&&0.333&0.471\\
girl&0.486&0.500&&0.480&0.500&&0.484&0.500&&0.480&0.500\\
age&6.011&0.349&&7.158&0.492&&8.242&0.557&&9.285&0.586\\
\multicolumn{12}{c}{\textit{teacher characteristics}} \\
black teacher&0.165&0.371&&0.174&0.379&&0.204&0.403&&0.209&0.406\\
master degree&0.347&0.476&&0.346&0.476&&0.373&0.484&&0.442&0.497\\
years of experience&9.258&5.809&&11.633&8.937&&13.145&8.655&&13.933&8.615\\
\multicolumn{12}{c}{\textit{SAT test scores}} \\
math&485.377&47.698&&530.528&43.109&&580.613&44.574&&617.970&39.841\\
read&436.725&31.706&&520.787&55.192&&583.935&46.043&&615.422&38.563\\
list&537.475&33.140&&567.487&33.674&&595.476&34.908&&624.119&32.265\\
word&434.179&36.759&&513.436&53.316&&582.986&50.658&&610.136&45.041\\
\multicolumn{12}{c}{\textit{Sample Size}} \\
number of students&6325&&&6829&&&6840&&&6801&\\
number of classes&325&&&339&&&340&&&336&\\
\# small&127&&&124&&&133&&&140&\\
\# regular&99&&&115&&&100&&&89&\\
\# regular/aide&99&&&100&&&107&&&107&\\
number of schools&79&&&76&&&75&&&75&\\
 \hline \hline \end{tabular} \begin{tablenotes} \footnotesize
\item 1. Mean and standard deviations of student characteristics, teacher charcteristics, and SAT scores, as well as number of students and classes in each grade.
\end{tablenotes} \end{threeparttable} \end{center} \end{table}

 \begin{table}[htpb] \begin{center} \caption{Correlation between other paris} \label{tab:corr_all}  \begin{threeparttable}  \begin{tabular}{cccccc} \hline \hline & \multicolumn{2}{c}{Rank Correlation} & & \multicolumn{2}{c}{Pseudo $ R^2 $ of 2SLS} \\ \cline{2-3}\cline{5-6} & $ y_1,y_2 $ & $ Q_X y_1,Q_X y_2 $ & & $ y_1,\hat{y}_1 $ & $ y_2,\hat{y}_2 $ \\ \hline 
\multicolumn{6}{c}{\textit{$ y_1 =math $, $ y_2 =listening $ }} \\
Grade K&0.657&0.584&&0.728&0.638\\
Grade 1&0.721&0.654&&0.766&0.716\\
Grade 2&0.678&0.611&&0.713&0.705\\
Grade 3&0.649&0.610&&0.689&0.673\\
\multicolumn{6}{c}{\textit{$ y_1 =reading $, $ y_2 =listening $}} \\
Grade K&0.621&0.554&&0.686&0.691\\
Grade 1&0.624&0.534&&0.704&0.601\\
Grade 2&0.667&0.601&&0.709&0.699\\
Grade 3&0.652&0.623&&0.690&0.683\\
\multicolumn{6}{c}{\textit{$ y_1 =reading $, $ y_2 =math $}} \\
Grade K&0.741&0.672&&0.697&0.763\\
Grade 1&0.737&0.669&&0.782&0.753\\
Grade 2&0.724&0.673&&0.756&0.758\\
Grade 3&0.741&0.706&&0.759&0.769\\
\multicolumn{6}{c}{\textit{$ y_1 =word $, $ y_2 =listening $}} \\
Grade K&0.558&0.477&&0.637&0.610\\
Grade 1&0.570&0.483&&0.645&0.562\\
Grade 2&0.597&0.533&&0.639&0.639\\
Grade 3&0.524&0.478&&0.581&0.552\\
\multicolumn{6}{c}{\textit{$ y_1 =word $, $ y_2 =math $}} \\
Grade K&0.656&0.579&&0.643&0.727\\
Grade 1&0.686&0.619&&0.726&0.715\\
Grade 2&0.625&0.564&&0.659&0.670\\
Grade 3&0.683&0.643&&0.712&0.708\\
 \hline \hline \end{tabular} \begin{tablenotes} \footnotesize
\item 1. The left panel is Spearman's rank correlation between $ y_1 $, $ y_2 $ and between $ Q_{X}y_{1} $, $ Q_{X}y_{2} $, where $ Q_{X}=I-\underbar{X}(\underbar{X}^{\prime}\underbar{X})^{-1}\underbar{X}^{\prime} $, $ y_{1} $ and $ y_{2} $ are raw SAT scores specified for each panel, $ \underbar{X} $ are the complete set of control variables, including school fixed effects, class type fixed effects, student characteristics, teacher characteristics and peer characteristics. \item 2. The right panel is pseudo $ R^{2} $, i.e., Spearman's rank correlation between $ y_{1} $ and $ \hat{y}_{1} $ in Column 3, and  between $ y_{2} $ and $ \hat{y}_{2} $ in Column 4. Here $ \hat{y}_{1}=\tilde{f}_{1}y_{2}+\text{\ensuremath{\underbar{X}}}\tilde{\delta} $, with $ \tilde{f_{1}} $ and $ \tilde{\delta} $ defined in (\ref{eq:f_tilde}) and (\ref{eq:delta_tilde}), $ \underbar{X} $ is the complete set of controls as described in note 1. That is,  $ \hat{y}_{1} $ is the predicted value of $ y_{1} $ from 2SLS for $ y_{1}=f_{1}y_{2}+\underbar{X}\delta+\epsilon $ with $ y_2 $ instrumented by the constant term,  $ \hat{y}_{2} $ is defined in a similar manner by reversing the order of $ y_1, y_2 $.
\end{tablenotes} \end{threeparttable} \end{center} \end{table}

 \begin{table}[htpb] \begin{center} \caption{Estimates of $ f_1 $  and $ \rho $ for alternate pairings} \label{tab:f1rho_all}  \begin{threeparttable}  \begin{tabular}{cccccccccc} \hline \hline & \multicolumn{4}{c}{$ f_1 $} & & \multicolumn{4}{c}{$ \rho $} \\ \cline{2-5}\cline{7-10} & (1) & (2) & (3) & (4) & & (1) & (2) & (3) & (4) \\ \hline 
\multicolumn{10}{c}{\textit{$ y_1 =math $, $ y_2=listening $ }} \\
Grade K&0.915&0.912&0.833&0.779&&0.721&0.691&0.694&0.688\\
&(0.009)&(0.008)&(0.019)&(0.148)&&(0.079)&(0.073)&(0.075)&(0.078)\\
Grade 1&0.939&0.936&0.949&0.946&&0.627&0.587&0.579&0.578\\
&(0.008)&(0.007)&(0.011)&(0.063)&&(0.069)&(0.073)&(0.073)&(0.074)\\
Grade 2&0.946&0.944&0.993&1.109&&0.637&0.621&0.554&0.525\\
&(0.005)&(0.005)&(0.011)&(0.068)&&(0.073)&(0.072)&(0.071)&(0.068)\\
Grade 3&0.973&0.974&1.040&1.113&&0.747&0.728&0.714&0.703\\
&(0.002)&(0.002)&(0.009)&(0.068)&&(0.088)&(0.087)&(0.086)&(0.090)\\
\multicolumn{10}{c}{\textit{$ y_1 =reading $, $ y_2 =listening $}} \\
Grade K&0.812&0.812&0.822&0.989&&0.434&0.415&0.407&0.377\\
&(0.005)&(0.005)&(0.014)&(0.107)&&(0.079)&(0.076)&(0.078)&(0.083)\\
Grade 1&0.944&0.943&1.028&0.978&&0.322&0.277&0.256&0.218\\
&(0.007)&(0.007)&(0.015)&(0.074)&&(0.051)&(0.049)&(0.048)&(0.046)\\
Grade 2&0.970&0.970&1.108&1.187&&0.568&0.553&0.571&0.561\\
&(0.005)&(0.005)&(0.012)&(0.070)&&(0.073)&(0.071)&(0.074)&(0.074)\\
Grade 3&0.987&0.987&1.093&1.151&&0.606&0.572&0.613&0.623\\
&(0.003)&(0.003)&(0.010)&(0.060)&&(0.091)&(0.088)&(0.096)&(0.101)\\
\multicolumn{10}{c}{\textit{$ y_1 = reading $, $ y_2=math $}} \\
Grade K&0.888&0.890&0.986&1.297&&0.492&0.485&0.519&0.575\\
&(0.004)&(0.004)&(0.018)&(0.236)&&(0.065)&(0.065)&(0.066)&(0.076)\\
Grade 1&1.005&1.007&1.092&1.032&&0.441&0.434&0.434&0.360\\
&(0.005)&(0.006)&(0.014)&(0.077)&&(0.063)&(0.064)&(0.061)&(0.061)\\
Grade 2&1.026&1.027&1.120&1.077&&0.481&0.479&0.475&0.444\\
&(0.006)&(0.006)&(0.012)&(0.055)&&(0.071)&(0.071)&(0.066)&(0.066)\\
Grade 3&1.015&1.015&1.052&1.036&&0.462&0.460&0.452&0.440\\
&(0.003)&(0.003)&(0.009)&(0.051)&&(0.078)&(0.078)&(0.082)&(0.086)\\
 \hline \hline \end{tabular} \begin{tablenotes} \footnotesize
\item 1. To be continued on the next page.
\end{tablenotes} \end{threeparttable} \end{center} \end{table}

\addtocounter{table}{-1}
 \begin{table}[htpb] \begin{center} \caption{Estimates of $ f_1 $  and $ \rho $ for alternate pairings(Continued)} \label{tab:_tem}  \begin{threeparttable}  \begin{tabular}{cccccccccc} \hline \hline & \multicolumn{4}{c}{$ f_1 $} & & \multicolumn{4}{c}{$ \rho $} \\ \cline{2-5}\cline{7-10} & (1) & (2) & (3) & (4) & & (1) & (2) & (3) & (4) \\ \hline 
\multicolumn{10}{c}{\textit{$ y_1=word $, $ y_2 =listening $}} \\
Grade K&0.817&0.816&0.833&0.954&&0.398&0.378&0.373&0.359\\
&(0.005)&(0.004)&(0.016)&(0.116)&&(0.066)&(0.065)&(0.065)&(0.068)\\
Grade 1&0.934&0.932&1.023&0.964&&0.385&0.340&0.308&0.257\\
&(0.008)&(0.008)&(0.016)&(0.081)&&(0.054)&(0.052)&(0.048)&(0.047)\\
Grade 2&0.978&0.977&1.145&1.284&&0.537&0.521&0.512&0.497\\
&(0.006)&(0.007)&(0.014)&(0.085)&&(0.070)&(0.068)&(0.070)&(0.069)\\
Grade 3&0.974&0.974&1.107&1.145&&0.512&0.482&0.518&0.522\\
&(0.004)&(0.004)&(0.012)&(0.069)&&(0.095)&(0.091)&(0.095)&(0.097)\\
\multicolumn{10}{c}{\textit{$ y_1 =word $, $ y_2 =math $}} \\
Grade K&0.893&0.895&1.001&1.261&&0.517&0.513&0.538&0.578\\
&(0.006)&(0.006)&(0.021)&(0.242)&&(0.067)&(0.067)&(0.070)&(0.081)\\
Grade 1&0.994&0.994&1.077&1.027&&0.487&0.480&0.469&0.388\\
&(0.004)&(0.005)&(0.014)&(0.089)&&(0.066)&(0.065)&(0.062)&(0.064)\\
Grade 2&1.034&1.034&1.159&1.160&&0.493&0.492&0.478&0.459\\
&(0.007)&(0.007)&(0.014)&(0.074)&&(0.071)&(0.071)&(0.066)&(0.065)\\
Grade 3&1.001&1.001&1.064&1.031&&0.371&0.366&0.370&0.357\\
&(0.004)&(0.004)&(0.010)&(0.054)&&(0.074)&(0.072)&(0.074)&(0.078)\\
\hline school FE & Y & Y & Y & Y & & Y & Y & Y & Y \\ class type FE & & Y & Y & Y & & & Y & Y & Y \\ stu\&tch char & & & Y & Y & & & & Y & Y \\ peer char & & & & Y & & & & & Y\\ \hline \hline \end{tabular} \begin{tablenotes} \footnotesize
\item 1. Estimates and standard errors (in the parentheses) for $ f_1 $  and $ \rho $. Standard errors are clustered at the classroom level. All models allow for heteroscedasticity across small and regular classes. Estimates are adjusted for missing observations. \item 2. As indicated at the bottom of the table, the models may control for (1) school fixed effects (excluding one); (2) class type fixed effects (excluding small); (3) a student's characteristics (free lunch, black, girl, age) and teacher characteristics (black teacher, master, years of experience); (4) peer averages of student characteristics.
\end{tablenotes} \end{threeparttable} \end{center} \end{table}

 \begin{table}[htpb] \begin{center} \caption{Pseudo $ R^2 $: $ 1-var(u_1-f_1 u_2)/var(y_1-f_1y_2) $} \label{tab:sdratio_all}  \begin{threeparttable}  \begin{tabular}{cccccccccc} \hline \hline & \multicolumn{4}{c}{Small Classes} & & \multicolumn{4}{c}{Regualr Classes(w/wo Aide)} \\ \cline{2-5}\cline{7-10} & (1) & (2) & (3) & (4) & & (1) & (2) & (3) & (4) \\ \hline 
\multicolumn{10}{c}{\textbf{Panel A: } $ \mathbf{y_1 =word, y_2=reading} $} \\
Grade K&0.104&0.104&0.107&0.113&&0.140&0.140&0.142&0.146\\
Grade 1&0.088&0.088&0.092&0.090&&0.093&0.093&0.098&0.093\\
Grade 2&0.064&0.064&0.077&0.082&&0.065&0.065&0.074&0.078\\
Grade 3&0.093&0.093&0.092&0.096&&0.033&0.034&0.035&0.038\\
\multicolumn{10}{c}{\textit{Panel B: $ y_1 =math $, $ y_2=listening $ }} \\
Grade K&0.179&0.180&0.206&0.216&&0.177&0.179&0.201&0.208\\
Grade 1&0.138&0.140&0.143&0.144&&0.113&0.115&0.123&0.124\\
Grade 2&0.116&0.117&0.123&0.119&&0.104&0.106&0.118&0.109\\
Grade 3&0.135&0.136&0.147&0.144&&0.085&0.087&0.092&0.089\\
\multicolumn{10}{c}{\textit{Panel C: $ y_1 =reading $, $ y_2 =listening $}} \\
Grade K&0.143&0.144&0.153&0.150&&0.145&0.146&0.150&0.147\\
Grade 1&0.131&0.133&0.154&0.163&&0.113&0.116&0.148&0.161\\
Grade 2&0.087&0.087&0.144&0.141&&0.083&0.084&0.125&0.125\\
Grade 3&0.073&0.074&0.133&0.133&&0.024&0.027&0.077&0.078\\
\multicolumn{10}{c}{\textit{Panel D: $ y_1 = reading $, $ y_2=math $}} \\
Grade K&0.148&0.148&0.177&0.219&&0.126&0.127&0.155&0.192\\
Grade 1&0.105&0.105&0.130&0.137&&0.070&0.071&0.119&0.132\\
Grade 2&0.092&0.092&0.134&0.137&&0.086&0.087&0.110&0.109\\
Grade 3&0.083&0.083&0.107&0.106&&0.050&0.050&0.085&0.086\\
\multicolumn{10}{c}{\textit{Panel E: $ y_1=word $, $ y_2 =listening $}} \\
Grade K&0.124&0.125&0.132&0.129&&0.132&0.133&0.136&0.134\\
Grade 1&0.105&0.106&0.122&0.132&&0.068&0.071&0.094&0.108\\
Grade 2&0.084&0.084&0.131&0.132&&0.070&0.072&0.107&0.109\\
Grade 3&0.090&0.090&0.132&0.131&&0.030&0.033&0.065&0.066\\
\multicolumn{10}{c}{\textit{Panel F: $ y_1 =word $, $ y_2 =math $}} \\
Grade K&0.138&0.139&0.164&0.200&&0.128&0.129&0.154&0.184\\
Grade 1&0.092&0.092&0.114&0.118&&0.058&0.059&0.101&0.109\\
Grade 2&0.089&0.089&0.124&0.126&&0.081&0.081&0.106&0.107\\
Grade 3&0.085&0.085&0.107&0.106&&0.039&0.040&0.071&0.070\\
\hline school FE & Y & Y & Y & Y & & Y & Y & Y & Y \\ class type FE & & Y & Y & Y & & & Y & Y & Y \\ stu\&tch char & & & Y & Y & & & & Y & Y \\ peer char & & & & Y & & & & & Y\\ \hline \hline \end{tabular} \begin{tablenotes} \footnotesize
\item 1. The table reports one minus the ratio of the variance of $ u_1-f_1 u_2 $ to that of $ y_1-f_1 y_2 $ for small classes (in the left panel) and regular (with or without aide) classes (in the right panel). All models allow for heteroscedasticity across small and regular classes. Estimates are adjusted for missing observations. \item 2. As indicated at the bottom of the table, the models may control for (1) school fixed effects (excluding one); (2) class type fixed effects (excluding small); (3) a student's characteristics (free lunch, black, girl, age) and teacher characteristics (black teacher, master, years of experience); (4) peer averages of student characteristics.
\end{tablenotes} \end{threeparttable} \end{center} \end{table}

 \begin{table}[htpb] \begin{center} \caption{Standard Deviations of $ Q_{x}y $ and $ Q_{x}(y_1-f_1 y_2) $ } \label{tab:varqxy}  \begin{threeparttable}  \begin{tabular}{cccccccccc} \hline \hline & \multicolumn{4}{c}{Small Classes} & & \multicolumn{4}{c}{Regular Classes} \\ \cline{2-5}\cline{7-10} & (1) & (2) & (3) & (4) & & (1) & (2) & (3) & (4) \\ \hline 
\multicolumn{10}{c}{\textit{Grade K}} \\
$ Q_{x}y $, y=read&62.748&62.169&33.699&26.827&&66.321&64.827&35.534&29.143\\
$ Q_{x}y $, y=word&65.196&64.629&38.017&31.883&&68.723&67.294&39.693&33.916\\
$ Q_{x}(y_1-f_1 y_2) $&14.131&14.131&14.103&14.302&&14.705&14.707&14.651&14.831\\
\multicolumn{10}{c}{\textit{Grade 1}} \\
$ Q_{x}y $, y=read&83.720&82.984&57.172&46.261&&88.372&86.584&58.298&49.448\\
$ Q_{x}y $, y=word&81.314&80.624&57.574&46.620&&85.880&84.192&57.174&47.750\\
$ Q_{x}(y_1-f_1 y_2) $&21.475&21.470&21.350&21.239&&22.352&22.346&22.183&22.048\\
\multicolumn{10}{c}{\textit{Grade 2}} \\
$ Q_{x}y $, y=read&93.586&92.295&57.595&39.409&&87.831&85.406&54.519&41.740\\
$ Q_{x}y $, y=word&96.864&95.555&62.274&44.613&&90.149&87.782&59.070&46.967\\
$ Q_{x}(y_1-f_1 y_2) $&24.071&24.070&24.070&24.312&&24.496&24.497&24.539&24.796\\
\multicolumn{10}{c}{\textit{Grade 3}} \\
$ Q_{x}y $, y=read&102.223&100.223&52.813&33.948&&85.025&82.325&51.772&38.114\\
$ Q_{x}y $, y=word&102.419&100.495&56.860&39.320&&86.959&84.387&56.262&43.608\\
$ Q_{x}(y_1-f_1 y_2) $&20.365&20.367&20.330&20.323&&20.839&20.833&20.730&20.721\\
 \hline \hline \end{tabular} \begin{tablenotes} \footnotesize
\item 1. Standard deviations of $ Q_x y $ and $ Q_x (y_1-f_1 y_2) $ for small (on the left) and regular classes (on the right). For $ Q_x y $, the test score $ y $ is reading or word score For $ Q_x (y_1-f_1 y_2) $, $ y_1 $ and $ y_2 $ are word and reading scores respectively. The control variables for models (1)-(4) are the same as those for the main models in Table \ref{tab:f1rho}. The estimates of $ f_1 $ are from our efficient GMM estimations for each specifications and presented in Table \ref{tab:f1rho}.
\end{tablenotes} \end{threeparttable} \end{center} \end{table}

 \begin{table}[htpb] \begin{center} \caption{Estimates of Exogenous Peer Effects for the Word-Reading Pair} \label{tab:ave}  \begin{threeparttable}  \begin{tabular}{ccccc} \hline \hline & \hspace{1.2cm}Age & \hspace{1.2cm}Black & \hspace{1.2cm}Poor & \hspace{1.2cm}Girl \\ \hline 
Grade K&\hspace{1.2cm}4.126\hspace{1.2cm}&\hspace{1.2cm}0.028\hspace{1.2cm}&\hspace{1.2cm}0.005\hspace{1.2cm}&\hspace{1.2cm}-0.067\hspace{1.2cm}\\
&\hspace{1.2cm}(3.268)\hspace{1.2cm}&\hspace{1.2cm}(0.060)\hspace{1.2cm}&\hspace{1.2cm}(0.023)\hspace{1.2cm}&\hspace{1.2cm}(0.028)**\hspace{1.2cm}\\
Grade 1&\hspace{1.2cm}0.701\hspace{1.2cm}&\hspace{1.2cm}0.032\hspace{1.2cm}&\hspace{1.2cm}-0.037\hspace{1.2cm}&\hspace{1.2cm}0.036\hspace{1.2cm}\\
&\hspace{1.2cm}(3.307)\hspace{1.2cm}&\hspace{1.2cm}(0.063)\hspace{1.2cm}&\hspace{1.2cm}(0.042)\hspace{1.2cm}&\hspace{1.2cm}(0.048)\hspace{1.2cm}\\
Grade 2&\hspace{1.2cm}-3.204\hspace{1.2cm}&\hspace{1.2cm}0.144\hspace{1.2cm}&\hspace{1.2cm}-0.027\hspace{1.2cm}&\hspace{1.2cm}-0.113\hspace{1.2cm}\\
&\hspace{1.2cm}(2.997)\hspace{1.2cm}&\hspace{1.2cm}(0.093)\hspace{1.2cm}&\hspace{1.2cm}(0.053)\hspace{1.2cm}&\hspace{1.2cm}(0.058)*\hspace{1.2cm}\\
Grade 3&\hspace{1.2cm}2.157\hspace{1.2cm}&\hspace{1.2cm}-0.029\hspace{1.2cm}&\hspace{1.2cm}-0.036\hspace{1.2cm}&\hspace{1.2cm}0.025\hspace{1.2cm}\\
&\hspace{1.2cm}(2.486)\hspace{1.2cm}&\hspace{1.2cm}(0.061)\hspace{1.2cm}&\hspace{1.2cm}(0.039)\hspace{1.2cm}&\hspace{1.2cm}(0.044)\hspace{1.2cm}\\
 \hline \hline \end{tabular} \begin{tablenotes} \footnotesize
\item 1. Estimates and standard errors (in the parentheses) for exogenous peer effects of peer's age, race, free lunch status and gender. Coefficients for dummies black, free lunch and girl are divided by 100 so they can be interpreted as the impact of one percentage point change in these variables. Note that these four variables are all test-invariant personal characteristics. The reported coefficients are $ \rho(\beta_{v1}^p-f_1\beta_{v2}^p) $. \item 2. * 0.1, ** 0.05, *** 0.01. Standard errors are clustered at the classroom level. All models allow for heteroscedasticity across small and regular classes. Estimates are adjusted for missing observations. \item 3. The scores are raw SAT scores in word study skills(word) and reading (read) for $ y_1 $ and $ y_2 $ respectively. \item 4. All specifications control for (1) school fixed effects (excluding one); (2) class type fixed effects (excluding small); (3) a student's characteristics (free lunch, black, girl, age) and teacher characteristics (black teacher, master, years of experience); (4) peer averages of student characteristics.
\end{tablenotes} \end{threeparttable} \end{center} \end{table}

\end{document}